\renewcommand*{\theequation}{%
  \ifnum\value{section}=0 %
    \thechapter
  \else
    \thesection
  \fi
  .\arabic{equation}%
}
\newcommand{\Tau}{\mathcal{T}}
\newcommand{\del}{\partial}
\numberwithin{equation}{section}
\newcommand\norm[1]{\left\lVert#1\right\rVert}
\newtheorem{theorem}{Theorem}[chapter]
\newtheorem{mydef}{Definition}[chapter]
\newtheorem{lemma}[theorem]{Lemma}
\begin{document}
\pagenumbering{gobble}
\thispagestyle{empty}
\begin{center}
	{\fontsize{16}{16} \selectfont Universidade de S\~ao Paulo \\}
	\vspace{0.1cm}
	{\fontsize{16}{16} \selectfont Instituto de F\'{i}sica}
    \vspace{3.3cm}

	{\fontsize{22}{22}\selectfont RG Flows e Sistemas Din\^{a}micos\par}
    \vspace{2cm}
    

    {\fontsize{18}{18}\selectfont Caio Luiz Tiedt\par}

    \vspace{2cm}
\end{center}
   
\leftskip 6cm
\begin{flushright}	
\leftskip 6cm
Orientador: Prof. Dr. Diego Trancanelli 
\leftskip 6cm
\end{flushright}	

    \vspace{0.8cm} 
    

\par
\leftskip 6cm
\noindent {Disserta\c{c}\~{a}o de mestrado apresentada ao Instituto\\ de F\'{i}sica da Universidade de S\~{a}o Paulo, como re-\\quisito parcial para a obten\c{c}\~{a}o do t\'{i}tulo de Mestre(a)\\ em Ci\^{e}ncias.}
\par
\leftskip 0cm
\vskip 2cm


\noindent Banca Examinadora: \\
\noindent Prof. Dr. Diego Trancanelli - Orientador (IF-USP)\\
Prof. Dr. Dmitry Melnikov (IIP-UFRN) \\
Prof. Dr. Horatiu Nastase (IFT-UNESP) \\
\vspace{2.8cm}  

\begin{center}
S\~{a}o Paulo \\ 2019
\end{center}

\newpage
\pagenumbering{gobble}
\includepdf[pages={1}]{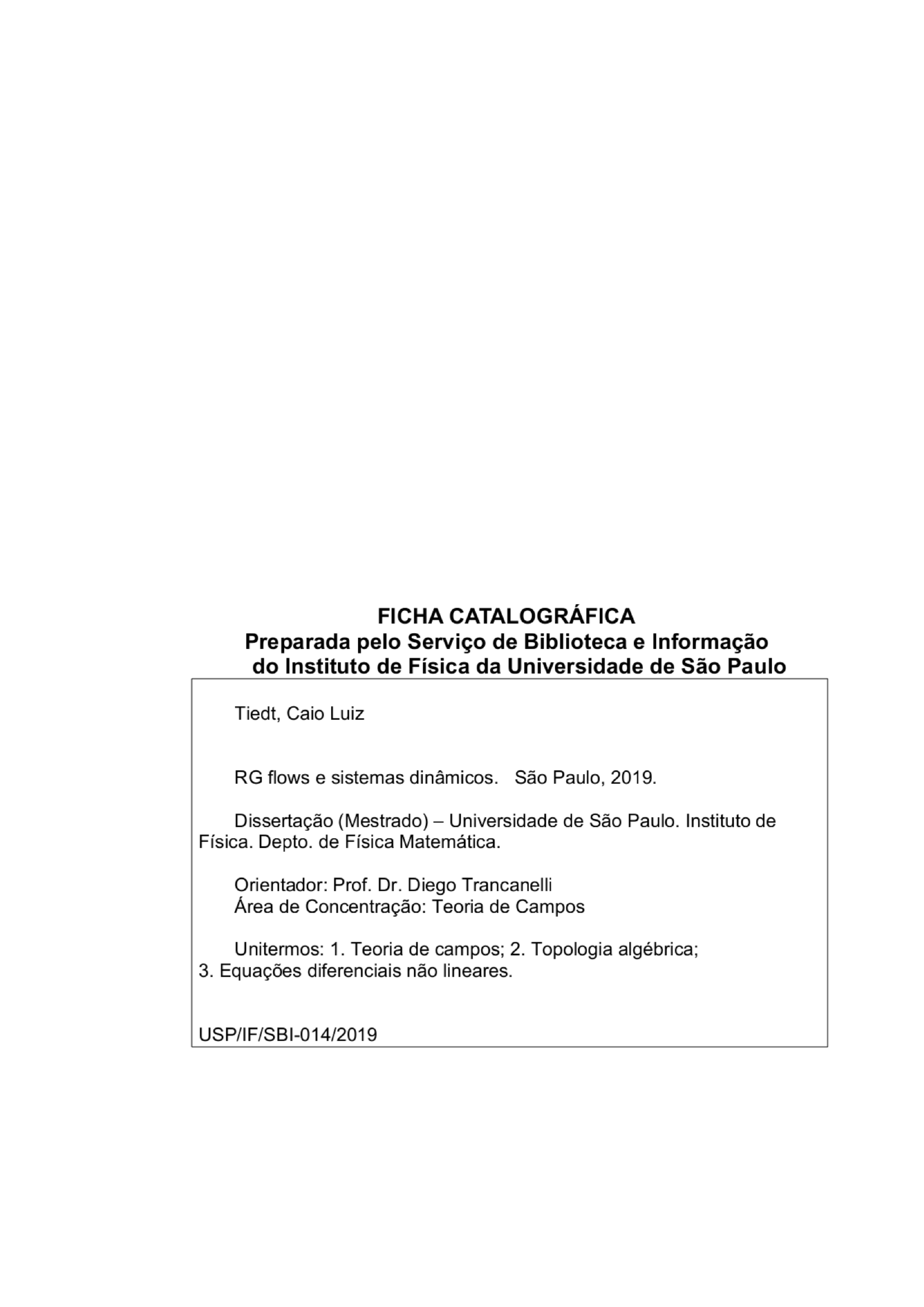}
\clearpage

\newpage
\pagenumbering{gobble}
\thispagestyle{empty}
\begin{center}
	{\fontsize{16}{16} \selectfont University of S\~ao Paulo \\}
	\vspace{0.1cm}
	{\fontsize{16}{16} \selectfont Physics Institute}
    \vspace{3.3cm}

	{\fontsize{22}{22}\selectfont RG Flows and Dynamical Systems \par}
    \vspace{2cm}
    

    {\fontsize{18}{18}\selectfont Caio Luiz Tiedt\par}

    \vspace{2cm}
\end{center}
   
\leftskip 6cm
\begin{flushright}	
\leftskip 6cm
Supervisor: Prof. Dr. Diego Trancanelli  
\leftskip 6cm
\end{flushright}	

    \vspace{0.8cm} 
    

\par
\leftskip 6cm
\noindent {Dissertation submitted to the Physics Institute of the University of S\^{a}o Paulo in partial fulfillment of the requirements for the degree of Master of Science.}
\par
\leftskip 0cm
\vskip 2cm


\noindent Examining Committee: \\
\noindent Prof. Dr. Diego Trancanelli - Supervisor (IF-USP)\\
Prof. Dr. Dmitry Melnikov (IIP-UFRN) \\
Prof. Dr. Horatiu Nastase (IFT-UNESP) \\
\vspace{2.8cm}

\begin{center}
S\~{a}o Paulo \\ 2019
\end{center}

\newpage
\pagenumbering{gobble}
\thispagestyle{empty}
\mbox{}

\newpage
 \pagenumbering{arabic}
 \thispagestyle{empty}
 	\vspace*{\fill}
 	{ \raggedleft

	\textit{To Marielle Franco.}

	~
	}

	\newpage
	\pagenumbering{arabic}
	\thispagestyle{empty}
	
	\section*{Acknowledgements}
	
    My thanks go to Prof. Diego Trancanelli for guiding me through the world of field theory as well as for believing in this project. I thank my colleague Gabriel Nagaoka for all kinds of support, and also Daniel Lombelo Teixeira for excellent advise. I thank Prof. Sergei Gukov, whom I only met once, but he inspired me to go forward. Finally, I thank CNPq for the scholarship that was given to me.

\newpage

 	\newpage
 	\thispagestyle{empty}
 	\vspace*{\fill}
 	{ \raggedleft

	\textit{``Utopia [...]\\
	ella est\'a en el horizonte. Me acerco dos pasos, ella se aleja dos pasos.  \\
	Camino diez pasos y el horizonte se corre diez pasos m\'as all\'a.\\
	Por mucho que yo camine, nunca la alcanzar\'e.  \\
	Para que sirve la utopia? Para eso sirve: para caminar." \\
	Eduardo  Galeano,  ``Las  palabras  andantes"}

	~
	}
 	
 	\newpage
 	\chapter*{Resumo}
 	\thispagestyle{empty}
 	
    No contexto de Renormaliza\c{c}\~ao Wilsoniana, os fluxos do grupo de renormaliza\c{c}\~ao (RG flows) s\~ao um conjunto de equa\c{c}\~oes diferenciais que define como as constantes de acoplamento de uma teoria dependem de uma escala de energia. o conte\'udo destes \'e semelhante a como sistemas termodin\^amicos est\~ao relacionados com a temperatura. Neste sentindo, \'e natural olhar para estruturas nos fluxos que demonstram um comportamento termodin\^amico. A teoria matem\'atica para estudar estas equa\c{c}\~oes \'e chamada de sistemas din\^amicos e aplica\c{c}\~oes desta t\^em sido usadas no estudo de RG flows. Como exemplo o teorema-C de Zamolodchikov e os equivalentes teoremas em dimens\~oes maiores mostram que existe uma fun\c{c}\~ao monotonicamente decrescente ao longo do fluxo e \'e uma propriedade que se assemelha \`a segunda lei da termodin\^amica, est\~ao relacionadas com a fun\c{c}\~ao de Lyapunov no contexto de sistemas din\^amicos e podem ser usadas para excluir a possibilidade de comportamentos assint\'oticos ex\'oticos, como fluxos peri\'odicos ou ciclos limites. Estudamos a teoria de bifurca\c{c}\~ao e a teoria de \'indice, que foram propostas como sendo \'uteis no estudo de RG flows: a primeira pode ser usada para explicar constantes cruzando pela marginalidade e a segunda para extrair informa\c{c}\~ao global do espa\c{c}o em que os fluxos vivem. Nesta disserta\c{c}\~ao, tamb\'em olhamos para aplica\c{c}\~oes em RG flows hologr\'aficos e tentamos buscar rela\c{c}\~oes entre as estruturas em teorias hologr\'aficas e as suas duais teorias de campos.

	Palavras-Chave: ``Grupo de renormaliz\c{c}\~ao", ``Sistemas Din\^amicos", ``Fluxos Peri\'odicos", ``Teorema-C", ``Teoria de Bifurca\c{c}\~ao", ``Homologia de Morse", ``Holografia".

 	\newpage 	
 	\chapter*{Abstract}
 	 \thispagestyle{empty}
 	In the context of Wilsonian Renormalization, renormalization group (RG) flows are a set of differential equations that defines how the coupling constants of a theory depend on an energy scale. These equations closely resemble thermodynamical equations and how thermodynamical systems are related to temperature. In this sense, it is natural to look for structures in the flows that show a thermodynamics-like behaviour. The mathematical theory to study these equations is called Dynamical Systems, and applications of that have been used to study RG flows. For example, the classical Zamolodchikov's C-Theorem and its higher-dimensional counterparts, that show that there is a monotonically decreasing function along the flow and it is a property that resembles the second-law of thermodynamics, is related to the Lyapunov function in the context of Dynamical Systems. It can be used to rule out exotic asymptotic behaviours like periodic flows (also known as limit cycles). We also study bifurcation theory and index theories, which have been proposed to be useful in the study of RG flows, the former can be used to explain couplings crossing through marginality and the latter to extract global information about the space the flows lives in. In this dissertation, we also look for applications in holographic RG flows and we try to see if the structural behaviours in holographic theories are the same as the ones in the dual field theory side.
 	
 Key-Words: ``RG Flows", ``Dynamical Systems", ``Limit Cycles", ``C-Theorem", ``Bifurcation Theory", ``Morse Homology", ``Holography".

 	\newpage
 	\tableofcontents
 	 	\thispagestyle{empty}
	
	\newpage
	\listoffigures
	 \thispagestyle{empty}

	\addtocontents{toc}{\protect\thispagestyle{empty}}
	\addtocontents{lof}{\protect\thispagestyle{empty}}
	\newpage
 
    \pagestyle{fancy}
    \renewcommand{\sectionmark}[1]{\markright{\thesection\ #1}}
    \fancyhf{}
    \renewcommand{\footrulewidth}{0.0pt}
    \renewcommand{\headrulewidth}{0.1pt}
    \lhead{ \bfseries \rightmark}
    \rhead{\bfseries \thepage}
    
\chapter{Introduction}

Renormalization is an important concept in statistical mechanics and in quantum field theory. It was initially developed as a technique to remove arising singularities in quantum field theories studied in high energy physics, like QED. It was only in the 1960's and 1970's that a new outlook at the topic was developed, mainly by Kenneth Wilson \cite{First}, that provided a good physical interpretation for the process of quantum divergences. 

In perturbative field theory, the singularities appears in integrals over the momentum space of the form

\begin{equation}
    \int_p \frac{d^d p}{(2\pi)^d} e^{ipx}p^n  .
\end{equation}
Obviously for $n \geq -1$ the integral diverges. One method of renormalizing the theories that existed before Wilson's contribution was to remove singularities like the one above by adding specific counterterms in the action of a theory that annihilated the divergent terms.

Wilsonian renormalization gave us a much deeper interpretation of the process. It starts by adding a cutoff $\Lambda$ on the momentum.

\begin{equation}
    \int_{|p|<\Lambda} \frac{d^d p}{(2\pi)^d} e^{ipx} F(p)  .
\end{equation}
Of course $\Lambda$ is an artificial parameter and our physical results should not depend on it. Wilson added a second parameter, $b<1$, that serves as a momentum scale on the theory. For $\Lambda>|p|>b\Lambda$ the modes were said to be \textit{slow modes} and for $|p|>b\Lambda$ the modes are called \textit{fast modes}

\begin{equation}\label{idontknowiguess}
    \int_{|p|<\Lambda} \frac{d^d p}{(2\pi)^d}e^{ipx} F(p)  =   \int_{|p|<b\Lambda} \frac{d^d p}{(2\pi)^d}e^{ipx} F_{slow}(p) + \int_{|p|>b\Lambda} \frac{d^d p}{(2\pi)^d}e^{ipx} F_{fast}(p).
\end{equation}

We assume to be able to perform the integral over the fast modes. So the fast modes are going to serve as terms in an effective theory for the slow modes. 

Of course, the parameters on the new effective theory are going to depend on the momentum scale we are using. A similar construction exist in Statistical Mechanics, and that is why Wilsonian renormalization is so important: the scale can be seen as the temperature the theory lives in and the divergences of the unrenormalized theory comes from the fact that the integral is calculating the contributions of all possible energies. The interpretation is that physics depend on the energy scale that is available and that the renormalization process gives us a family of theories that have the same origin but differ since they are being calculated at different temperatures. For historical reasons, this family of theories is called \textit{renormalization group} (RG).

The exact form on how a set of parameters of a theory, called $\lambda_i$, depends on the energy scale, called RG flows, is a set of differential equations of the form

\begin{equation}\label{flowequations}
    \frac{d \lambda_i}{dt} = \beta_i (\lambda).
\end{equation}
Here $t$ is the logarithm of the energy scale and $\beta$ can be calculated through an equation know as the Callan\textendash Symanzik equation\cite{Peskin}. This set of differential equations might not be linear and it might be really hard or even impossible to solve analytically. The mathematical theory that was developed to deal with this problems is called Dynamical Systems, and the idea is to gather the maximum amount of information of the solutions without actually solving the system of equations. 

An important analysis of a dynamical system is to look for the asymptotic behaviour of the solutions. The most simple behaviour is solutions that end up on a specific point of the space. These are known as \textit{fixed points} and in quantum field theory those point are related to a specific set of theories known as Conformal Field Theories (CFT). Another asymptotic behaviour that has been explored is solution that are periodic, known as limit cycles. The meaning behind these solutions in RG flows is something that is still up to debate and a lot of effort has been put in to find conditions that rule out them.

Recently, it has been suggested, specially by Sergei Gukov \cite{GukovBif}, that techniques developed in the context of dynamical systems can be helpful in the analysis of RG flows. In cases of flows calculated using perturbation theory, it is interesting to find topological invariants in the space of the system that are going to be invariant under smooth perturbations of the flow. We also study phase transitions in the space as a control parameter in the theory changes, leading to what is called as bifurcations.

RG flow are also present in holographic theories, a set of theories that tries to relate quantities from quantum field theories to quantities in quantum gravity theories formulated using string theory.

This thesis is organized as follows: In Chapter 2 we will review the renormalization group in statistical mechanics, in order to show the origins of the procedure, and in quantum field theory, where we will show how one can compute the flow equations (\ref{flowequations}) and we will give a construction of the space that the solutions live in.

In Chapter 3, we will review topics on algebraic topology which are going to be needed to define a characteristic of a manifold called Conley index, a number that is going to gives us global information about the RG flows.

Chapter 4 will introduce the theory of dynamical systems, and we will review index theory and bifurcation theory. Also in this chapter we finally make the connection between dynamical systems and RG flows.

In Chapter 5, we will make use of the techniques presented in the previous chapters in examples in quantum field theory. We also give an introduce the concept of RG flows in holographic theories, and we apply bifurcation theory for an example of these theories.

\newpage

\chapter{RG Flows}    	

\section{RG flows in Statistical Mechanics}\label{IsingModelModel}

A natural starting point to discuss RG flows is the Ising model (see for example \cite{Tong}), which is a lattice model of $N$ spins in $d$ dimensions, where each spin can be either $s_i=+1$ or $s_i=-1$.

In the presence of a magnetic field $B$ the energy of the system is given by

\begin{equation}
    E = -B\sum_i s_i - J\sum_{(i,j)}s_is_j.
\end{equation}
The second term is the interaction between spins, and $J$ measures the intensity of the interaction. Different values of the parameters ($B,J$) lead to different behaviours in the system. For example, for $J>0$ minimization of energy tells us that spins will tend to align and such systems are known as ferromagnetic, while for $J<0$ the spins will tend to antialign and these systems are known as anti-ferromagnetic. In this sense, we can say that the possible states for this system span a 2-dimensional space parameterized by ($B,J$).

The partition function of a system is defined by

\begin{equation}
    Z = \sum_{s_i}e^{-\beta E},
\end{equation}
where $\beta = 1/T$, $T$ being the temperature of the system. This function encapsulates a lot of information on the system. For example, the probability of a state $\left\lbrace s_i \right\rbrace$ is given by

\begin{equation}
    p = \frac{e^{-\beta E}}{Z}.
\end{equation}

We can define the magnetization ($\phi$)  as the average spin of the system, which can be calculated by

\begin{equation}
    \phi = \frac{1}{N}\sum_{s_i}\frac{e^{-\beta E}}{Z}\sum_i s_i = \frac{1}{N\beta}\frac{\partial \log Z}{\partial B}.
\end{equation}

Our objective is to calculate the free energy of the system $F$. It can be defined in thermodynamical terms as

\begin{equation}
    F_{thermo} = <E> - TS = -T\log Z.
\end{equation}

We are actually looking for an effective free energy that depends on the magnetization of the system $F(\phi )$. Now we rewrite the partition function in terms of the free energy

\begin{equation}
    Z = \sum_\phi e^{-\beta F(\phi )}.
\end{equation}
In the limit of large $N$, we can transform the sum into an integral

\begin{equation}
    Z = \frac{N}{2}\int_{-1}^1 d\phi e^{-\beta F(m)}.
\end{equation}

So far we've been using a global definition of the magnetization. A generalization of this is the Landau-Ginzbourg model, where we consider $\phi$ to depend on space $\phi (x)$. Without being too formal, we can create this localized function by taking the average of the magnetization in small boxes centered in $x$. 

With this we can define the Landau-Ginzburg free energy $F[\phi (x)]$ as a functional, i.e., a scalar function that has functions as an input. The partition function then becomes a path integral of the form

\begin{equation}
    \int \mathcal{D}\phi e^{-\beta F[\phi (x)]}.
\end{equation}

As a special example, for $B=0$ the system is symmetric under the transformation $s_i \rightarrow - s_i$. Then only quadratic terms of $\phi$ may appear in the free energy:

\begin{equation}
    F[\phi (x)] = \int d^dx \left[ \frac{1}{2}g_1\phi^2 + g_2\phi^4 + g_3\phi^6 + \frac{1}{2}\gamma (T) (\nabla \phi)^2\right] + ...
\end{equation}
This can be solved and we have that $g_1 \sim (T - T_c)$ and $g_2 \sim \frac{4}{3}T$, where $T_c$ is a special critical value of temperature.

It is helpful to write the Fourier transform of the magnetization

\begin{equation}
    \phi_k = \int d^dxe^{-ikx}\phi (x).
\end{equation}
The partition function then becomes

\begin{equation}
    \int \mathcal{D}\phi (x) = \prod_kd\phi_ke^{-F}.
\end{equation}

When we constructed $\phi (x)$, we used small boxes on the lattice of the space. Let's say the boxes are of size $a$, then there is a cutoff of the Fourier modes for $\Lambda \sim 1/a$

$$\phi_k = 0 \text{      for     } k>\Lambda.$$
Physics depend on the scale. We can set a scale of momentum by defining a second cutoff

\begin{equation}
    \Lambda^\prime = b \Lambda, \hspace{ 0.1\linewidth } b<1.
\end{equation}

We use this scale to separate the modes in 
\begin{enumerate}
    \item Long-wavelength fluctuations $\phi^-_k$ \\
    \begin{equation}
          \phi^-_k = \left\{
  \begin{array}{lr}
    \phi_k & : k < \Lambda^\prime\\
    0 & : k > \Lambda^\prime
  \end{array}
\right. .
    \end{equation}
    \item Short-wavelength fluctuations $\phi^+_k$ \\
    \begin{equation}
          \phi^-_k = \left\{
  \begin{array}{lr}
    \phi_k & : k < \Lambda^\prime\\
    0 & : k > \Lambda^\prime
  \end{array}
\right. .
    \end{equation}
\end{enumerate}

The modes $\phi^+$ are going to be integrated out since we do not care about their dynamics in this scale. The free energy decomposes into

\begin{equation}
    F[\phi_k] = F_0[\phi^-_k] + F_0[\phi^+_k] + F_I[\phi^-_k,\phi^+_k].
\end{equation}

The partition function is then

\begin{equation}
    Z =  \prod_{k<\Lambda^\prime }d\phi^-_ke^{-F_0[\phi^-_k]} \prod_{\Lambda^\prime < k < \Lambda } d\phi^+_ke^{-F_0[\phi^+_k]} e^{F_1[\phi^-_k,\phi^+_k]}.
\end{equation}

The second integral will generate new terms in an effective theory for the long-wave modes:

\begin{equation}
    Z =  \int \mathcal{D}\phi^- e^{-F^\prime [\phi^-]},
\end{equation}
where $F^\prime$ will have the same form as before (it is the most general form), but the constants will change:

\begin{equation}
    F^\prime [\phi ] = \int d^dxx \left[ \frac{1}{2}g_1^\prime \phi^2 + g_2^\prime\phi^4 + \frac{1}{2}\gamma^\prime (\nabla \phi )^2 \right].
\end{equation}

Two important steps are still to be made. First, the effective theory is being integrated up to $\Lambda^\prime$. We would like to rescale the space so that it can be integrated up to the original cutoff $\Lambda$, so that the theories can be compared with each other. For that we define the scale transformations

\begin{equation}
    x^\prime = xb \text{ and } k^\prime = k/b.
\end{equation}
For the same reason, we want the constant in front of the term containing the derivatives to be 1. For that we define a transformation of the modes

\begin{equation}
    \phi^\prime_k = \sqrt{\gamma^\prime }\phi^-_k.
\end{equation}

The dependence on the scale of the theory is now in the constants ($g_1$,$g_2$). The parametrize a space of theories that are equivalent, in the sense that they only differ on the scale they are being looked at. This family of theories has a particular name, the  \textit{renormalization group}. The exact form on how the couplings depend on the scale are equations called \textit{renormalization group flows}.

\section{RG Flows in QFT}

\subsection{Renormalization}

Quantum field theory aims to describe the dynamics of fundamental fields $\Phi_i$ of a theory through an action that is built up with local operators $\mathcal{O}_i$ of the fields. In general we will have an action like

\begin{equation}
    S[\phi_i] = \int d^D x \lambda_i \mathcal{O}_i,
\end{equation}
where $\lambda_i$ are numbers, called coupling constants, that measure the intensity of the operators. Different values of the constants define different theories. So we can parametrize a space of field theories with the set $\left\lbrace \lambda_i \right\rbrace$. For example, in a $O(n)$ models, to be studied in a later chapter, there are two local operators, so we can define a space of $O(n)$ theories that is 2-dimensional and parametrized by $\left\lbrace \lambda_1 , \lambda_2 \right\rbrace$. 

We can calculate observables with the use of the generating functional

\begin{equation}
    \mathcal{Z} = \int [\mathcal{D}\Phi ]e^{(-iS[\Phi])}.
\end{equation}
This is where singularities appear on certain actions. As we have described before, Wilsonian renormalization deals with the problem by defining an effective action for fields that are below a certain energy scale. First we define a cutoff $\Lambda$ in the generating functional

\begin{equation}
    \mathcal{Z} = \int_{\Lambda} [\mathcal{D}\Phi ]e^{-iS[\Phi]}.
\end{equation}
We then transform the fields to momentum space

\begin{equation}
    \Phi (x) = \int_\Lambda \frac{d^d p}{(2\pi)^d}e^{-ipx}\Phi (p).
\end{equation}
Separating the fields as slow modes $\Phi_S$, fields with $|p| < b\Lambda$, and fast modes $\Phi_f$, $b\Lambda |p| < \Lambda$, we get that the generating functional becomes

\begin{equation}
    \mathcal{Z} = \int_{\Lambda} [\mathcal{D}\Phi_s][\mathcal{D}\Phi_f]e^{-iS[\Phi_s+\Phi_f]}.
\end{equation}
Integrating out the fast modes gives us the effective theory

\begin{equation}
    \mathcal{Z} = \int_{\Lambda} [\mathcal{D}\Phi_s]e^{-iS^\prime [\Phi_s]}.
\end{equation}
Now we want to remove the dependence on the cutoff $\Lambda$. We can do that by rescaling the space and momentum variables

\begin{equation}
    x^\prime = xb \hspace{.1\linewidth} p^\prime = \frac{p}{b}.
\end{equation}
This inevitably leads to transformations on the fields

\begin{equation}
    \Phi = b^{-\Delta_\Phi}\Phi_s.
\end{equation}

This transformation is called scale transformation and $\Delta_\Phi$ is the scaling dimension of the field $\Phi$.  We define the transformation so that the kinetic

\begin{equation}\label{kinect}
    \mathcal{L_0} = \frac{1}{2}(\partial \phi)^2.
\end{equation}
term in the action is unchanged. So the dependence on the scale of the theory will be on the coupling constants $\lambda_i$. 

\begin{equation}
    S^\prime [\Phi] = \int d^D x \lambda^\prime_i \mathcal{O}_i.
\end{equation}

\subsection{Callan\textendash Symanzik equations}

The renormalized correlation functions of the theory

\begin{equation}
    G^{(n)}(x) = <\Omega | T \phi_1\phi_2 ... \phi_n | \Omega > = (-i\hbar )^n\left. \frac{1}{Z[0]}\frac{\partial^n Z}{\partial J(x_1)\partial J(x_2)...\partial J(x_n)}\right|_{J=0},
\end{equation}
where $J$ is an auxiliary function and $\Omega$ is the quantum state of the particles, depends on a scale of energy $\mu$ and on the coupling constants. Considering small shifts on these parameters
\begin{align}
    \begin{split}
        \phi & \rightarrow (1+\delta\eta)\phi,        \\
        \lambda & \rightarrow \lambda + \delta\lambda,        \\
        \mu & \rightarrow \mu + \delta\mu
    \end{split}
\end{align}
gets us the Callan\textendash Symanzik (CS) equations,

\begin{equation} \label{CSequ}
\left(\mu\frac{\partial}{\partial\mu}+\mu\frac{\partial\lambda_i}{\partial\mu}\frac{\partial}{\partial\lambda_i}-n\mu\frac{\partial\eta}{\partial\mu}\right)G^{(n)}=0.
\end{equation}

We may define the following quantities

\begin{equation} \label{defbeta}
\beta_i(\lambda) = \mu \frac{\partial\lambda_i}{\partial\mu},
\end{equation}
\begin{equation}
\gamma(\lambda) = - \mu \frac{\partial\eta}{\partial\mu}.
\end{equation}
We want to determine how the renormalized theory changes as we change the scale $\mu$, i.e. we want to know $\beta$ and $\gamma$. A general form for $\gamma$ can be found considering a field renormalization $Z$ \cite{Peskin}

\begin{equation}
\phi(\mu)=\sqrt{Z}(\mu)\phi_0.
\end{equation}
We shift the scale by an infinitesimal quantity $\delta\mu$

\begin{equation}
\phi+\delta\phi=\sqrt{Z}(\mu+\delta\mu)\phi_0.
\end{equation}

Dividing the last two equations, and recalling the definition of $\eta=\dfrac{\delta\phi}{\phi}$, we find a general expression for $\gamma$ in terms of $Z$:

\begin{equation}
\gamma =  -\mu \frac{\partial}{\partial\mu}\left(\frac{\delta\phi}{\phi}\right)   =  \frac{\mu}{2}\frac{1}{Z}\frac{\partial Z}{\partial \mu}.
\end{equation}

Now we can use the CS equation to find $\beta$ in terms of $Z$ and $G$,

\begin{equation}
\beta_i(\lambda)\frac{\partial G^{(n)}}{\partial \lambda_i} = \left(-\mu\frac{\partial}{\partial\mu}+n\gamma\right)G^{(n)}.
\end{equation}
Since $\beta$ and $\gamma$ are dimensionless quantities and the only scale of the system is $\mu$, they do not depend explicitly on it, there is no other dimensionfull parameter to cancel the dimensions of $\mu$ to form a dimensionless quan    tity. We may define $t=\ln \mu$ as a ``RG time", and (\ref{defbeta}) becomes

\begin{equation}\label{betadefi}
\beta_i(\lambda) = \frac{\partial\lambda_i}{\partial t}.
\end{equation}

This set of partial differential equations defines a dynamical system where a coupling $\lambda_i$ flows with velocity $\beta_i$ in a space $\Tau$ as the RG time changes. In this sense, $\Tau$ can be viewed as the space of field theories spanned by the operators $\left\lbrace\mathcal{O}_i\right\rbrace$. 

Lets discuss another thing regarding $\beta$-functions, before we start studying fixed points. As mentioned before, we define a transformation that lets the kinetic term \ref{kinect} unchanged. The kinetic action with the rescaled coordinates is

\begin{equation}
    S[\Phi] = \int d^dx\frac{1}{2}(\mu^{-d+2+2\Delta_\phi})(\partial \phi)^2.
\end{equation}
The transformation needed is $\phi \rightarrow \mu^{\frac{d-2-2\Delta_\phi}{2}}\phi$. A special case is when $\Delta_\phi = \frac{d-2}{2}$, which is a scale invariant theory. In general, the scaling dimension $\Delta_\mathcal{O}$ of an operator $\mathcal{O}$ is defined through the scale transformation giving

\begin{equation}
    \mathcal{O} \rightarrow \mu^{\Delta_\mathcal{O}}\mathcal{O}.
\end{equation}
As it is described in \cite{Naga}, we can redefine the coupling constants to become dimensionless, which gives a term to the $\beta$ function

\begin{equation}\label{yeahyeah}
    \bar{\beta_\mathcal{O}} = (\Delta_\mathcal{O}- d)\lambda_\mathcal{O} + \beta_\mathcal{O}.
\end{equation}

Operators such that $\Delta_\phi > d$ are called relevant since they are relevant perturbations under the process of renormalization. Similarly, operators such that $\Delta_\phi < d$ are called irrelevant and $\Delta_\phi = d$ are called marginal. These terms reflect the nature of the system as a dynamical one, and we will explore this more later.


\section{Conformal Invariance and Fixed Points}\label{confconfconf}

\subsection{Conformal Symmetry}

We start by assuming the symmetries of flat space, namely translations, rotations and boosts. These collectively are called the Poincar\'e transformations:

\begin{equation}
    x^{\prime\mu} = \Lambda^\mu_\nu x^\nu +a^\mu.
\end{equation}
A special set of points in the theory space are the ones for which $\beta (\lambda^{*})=0$. These points are known as fixed points and they represent field theories that are invariant under the renormalization procedure and are known as scale invariant theories, since they are invariant under scale transformations

\begin{equation}
    x^\prime=\mu x \text{    and     } p^\prime = p/\mu .
\end{equation}

In most cases, we can extend the symmetry group to include special conformal transformations (\cite{ScaleCobnfP}, \cite{ScaleCobnfY})

\begin{equation}
    x^{\prime i} = \frac{x^i - (x\dot x)a^i}{1-2(x\dot a) + (a\dot a)(x\dot x)}, \text{    for an arbitrary } a,
\end{equation}
and we call these theories as conformal field theories (CFT).

CFT's are also important in statistical mechanics\cite{Lectures}. They describe critical points on second order phase transitions. For example, let us remember the Ising model at a finite temperature $T$ introduced in Sec. \ref{IsingModelModel}. The magnetic field and the interaction form an order in the way spins are positioned. But as the temperature increases, the nice order is perturbed by the energy fluctuations. 

The correlation function between two magnetizations are given by

\begin{equation}
    \left< \phi (x)\phi (y) \right> \sim   \begin{dcases}
    \dfrac{1}{r^{d-2}} & : r << \xi\\
    \dfrac{e^{-r/\xi}}{r^{(d-1)/2}} & : r >> \xi
  \end{dcases}
 ,
\end{equation}
where $r=|x-y|$, and $\xi$ is a length scale known as the \textit{correlation length}. This means that distant spins will have much weaker correlations between spins (they fall off exponentially) than those that are closer (they fall off by a power law). Then, what we will see are ``patches", regions of the space that are highly correlated with the size of the correlation of $\sim \xi$. 

The correlation length depends on the temperature of the system,

\begin{equation}
    \xi = \dfrac{1}{|T-T_c|^{1/2}}.
\end{equation}
So for a particular critical value of the temperature the correlation gets infinite. This is a characteristic of second order phase transitions. At the exact critical value the correlation length is infinite. This means that the same fluctuations of energy occur at all scales, thus the system is scale invariant. 

A good example to see this effect is the Youtube video\footnote{Available at https://www.youtube.com/watch?v=MxRddFrEnPc} uploaded by Douglas Ashton \cite{youtube} where he simulated the Ising model with $2^{34}$ spins at three different temperatures, one smaller then $T_c$, one bigger then $T_c$ and one exactly at the critical temperature. As he changes the scales of the theory it is very clear that the one calculated at $T_c$ never loses its structure, just like a fractal doesn't loose its structure as it gets zoomed out or zoomed in.

\subsection{Conformal Algebra}

The conformal algebra is given by the following generators:

\begin{itemize}
    \item Translations $P_\mu=-i\del_\mu$,
    \item Lorentz generator $M_{\mu\nu}=i(x_\mu\del_\nu-x_\nu\del_\mu)$,
    \item Dilaton $D=-ix_\mu\del^\mu$,
    \item Special Conformal $K_\mu=i(x^2\del_\mu-2x_\mu x_\nu\del^nu)$.
\end{itemize}

With the following commutation relations,

\begin{itemize} 
    \item $[D,K_\mu]=-iK_\mu$,
    \item $[D,P_\mu]=iP_\mu$,
    \item $[K_\mu,P_\nu]=2i(\eta_{\mu\nu}D-M_{\mu\nu})$
    \item $[K_\mu,M_{\nu\rho}]=i(\eta_{\mu\nu}K_\rho-\eta_{\mu\rho}K_\nu)$,
    \item $[P_\mu,M_{\nu\rho}]=i(\eta_{\mu\nu}P_\rho-\eta_{\mu\rho}P_\nu)$,
    \item $[M_{\mu\nu},M_{\rho\sigma}]=i(\eta_{\nu\rho}M_{\mu\sigma}-\eta_{\mu\rho}M_{\nu\sigma}+\eta_{\mu\sigma}M_{\nu\rho}-\eta_{\nu\sigma}M_{\mu\rho})$.
\end{itemize}

These symmetries impose conditions on the stress energy tensor. The scale current is given by \cite{ScaleCobnfP}

\begin{equation}
    S^\mu (x) = x^\nu T_\nu^\mu (x) + K^\mu (x),
\end{equation}
where $T_\nu^\mu$ is the stress energy tensor and $K^\mu$ is an operator without explicit dependence on the coordinate. Conservation of the current means that

\begin{equation}
    T_\mu^\mu (x) = \del_\mu K^\mu (x).
\end{equation}

The conformal current is given by \cite{ScaleCobnfP},

\begin{equation}
    j^\mu_\nu (x) = v^\nu (x)T_\nu^\mu (x) + \del\cdot v (x) K^{\prime\mu} (x) + \del_\nu\del\cdot v(x)L^{\nu\mu} ,
\end{equation}
where $L^{\mu\nu}$ is an local operator and the vector field $v(x)$ satisfies

\begin{equation}
    \del_\mu v_\nu + \del\nu v_\mu = \frac{2}{d}g_{\mu\nu}\del\cdot v(x)
\end{equation}

For $d=2$ the equation above has harmonic solutions. This means that there is an extra condition on $L$ for that dimension and the conservation of the current gives,

\begin{equation}\label{2dimension}
    d = 2: \hspace{20pt} T_\mu^\mu  =  \del_\mu\del_\nu L^{\nu\mu},
\end{equation}
\begin{equation}\label{morethan2dimension}   
    d\geq 3: \hspace{20pt} T_\mu^\mu (x) = \del^2L(x).
\end{equation}
Then the following equivalent stress energy tensor can be defined as

\begin{equation}
    \Theta_{\mu\nu} = T_{\mu\nu} + \frac{1}{d-1}(\del_\mu\del_\nu L(x) - \eta_\mu\nu\del^2L(x)).
\end{equation}
This combined with condition (\ref{2dimension}), gives $\Theta^\mu_\nu=0$. Similarly, for $d\geq 3$ the stress tensor given by

\begin{align}
    \Theta^\prime_{\mu\nu} = & T_{\mu\nu} \nonumber \\ & + \frac{1}{d-2}(\del_\mu\del_\rho L^\rho_\nu(x) + \del_\nu\del_\rho L^\rho\mu(x) -\del^2L_{\mu\nu} - \eta_{\mu\nu}\del_\rho\del_\sigma L^{\rho\sigma} (x)) \nonumber\\
    & +\frac{1}{(d-1)(d-2)}(\eta_{\mu\nu}\del^2L_\rho^\rho (x) -\del_\mu\del\nu L_\rho^\rho (x)).
\end{align}
Together with conditions (\ref{morethan2dimension}), this gives $\Theta^{\prime\mu}_\mu=0$. We can then conclude that conformal symmetry means the existence of a traceless stress tensor.

CFT's are particularly interesting in 2 dimensions, where the number of conformal transformation is infinite \ref{2dimension}. This leads to an infinite algebra, called the \textit{Virasoro Algebra}, and it is sufficient constraints that makes this fixed points exactly solvable, without the need of perturbation theory\cite{Difrancesco}. The Virasoro Algebra is spanned by a countable infinite number of generators $L_m$ and the central charge $c$ with the following commutation relations.

\begin{equation}
    [L_m,L_n] = (m-n)L_{m+n}+\frac{c}{12}(m^3-m)\delta_{m+n.0}.
\end{equation}

\subsection{Correlation Functions}

The symmetries that are given in a conformal field theory lead to the fact that the correlation function between fields are very constrained. The definition of a correlations function of $n$-points is, given $\Phi=\left\lbrace\phi_1,...,\phi_n\right\rbrace$ fields,

\begin{equation}
    \left\langle\phi_1(x_1)...\phi_n(x_n)\right\rangle = \frac{1}{Z}\int [\mathcal{D}\Phi]\phi_1(x_1)...\phi_n(x_n)e^{-S[\Phi]},
\end{equation}
where $S$ is the action of the theory and it is invariant under conformal transformations. From here we will explore the possibilities for $n=2$.

The conformal algebra gives us the possibility to construct irreducible representations of the conformal group for a specific type of field. These are called \textit{quasi-primary fields} and they are defined, if they are scalar under Lorentz transformation, by the transformation rule under conformal transformations

\begin{equation}
    \Phi^\prime (x^\prime ) = \left| \dfrac{\del x^\prime}{\del x} \right|^{\frac{\Delta}{d}} \Phi (x),
\end{equation}
where the number $\Delta$ is the scaling dimension of the field. The 2-point correlation function for quasi primary field has to respect the relation

\begin{equation}
    \left\langle\phi^\prime_1(x^\prime_1)\phi^\prime_2(x^\prime_2)\right\rangle =\left| \dfrac{\del x^\prime}{\del x} \right|_{x^\prime_2}^{\frac{\Delta_1}{d} }\left| \dfrac{\del x^\prime}{\del x} \right|_{x^\prime_2}^{\frac{\Delta_2}{d}} \left\langle\phi_1(x_1)\phi_2(x_2)\right\rangle.
\end{equation}
Further constraints from the Poincar\'e group lead to

\begin{equation}
    \left\langle\phi_1(x_1)\phi_2(x_2)\right\rangle = f(|x_1-x_2|),
\end{equation}
and invariance under dilations implies that

\begin{equation}
    \frac{f(b|x_1-x_2|)}{f(|x_1-x_2|)} = b^{-(\Delta_1+\Delta_2)}.
\end{equation}
Therefore, for scale invariant theories we have

\begin{equation}
    \left\langle\phi_1(x_1)\phi_2(x_2)\right\rangle = \frac{C_{12}}{|x_1-x_2|^{\Delta_1+\Delta_2}}.
\end{equation}

The last symmetry to be considered comes from the special conformal transformations. This gives us the relation

\begin{equation}\label{scft}
    \left\langle\phi_1(x_1)\phi_2(x_2)\right\rangle = \frac{1}{\gamma_1^{\Delta_1}\gamma_2^{\Delta_2}}\frac{C_{12}}{|x_1-x_2|^{\Delta_1+\Delta_2}}(\gamma_1\gamma_2)^{\frac{\Delta_1+\Delta_2}{2}}.
\end{equation}
Where $\gamma = (1-2bx+b^2x^2)$. 

Gathering the results from scale invariance and the special conformal invariance, we see that $\Delta_1=\Delta_2$. Therefore

\begin{equation}
    \left\langle\phi_1(x_1)\phi_2(x_2)\right\rangle = \frac{C_{12}}{|x_1-x_2|^{2\Delta}}.
\end{equation}

We can do a similar construction for 3-point functions and the result we get is:

\begin{equation}
    \left\langle\phi_1(x_1)\phi_2(x_2)\phi_3(x_3)\right\rangle = \frac{C_{123}}{|x_1-x_2|^{\Delta_1+\Delta_2-\Delta_3}|x_1-x_3|^{\Delta_1+\Delta_3-\Delta_2}|x_2-x_3|^{\Delta_2+\Delta_3-\Delta_1}},
\end{equation}
where $C_{123}$ is a constant number.

\subsection{Operator Product Expansion}

An important characteristics of CFT's is that we can actually find a list of all conformal operators $\left\lbrace\mathcal{O}_i\right\rbrace$. Then, products of operators can be written as a linear combination of the basis $\left\lbrace\mathcal{O}_i\right\rbrace$.

Specifically in 2 dimensions we can define the complex variables of the space-time coordinates $z=x_1+iy_1$ and $w=x_2+iy_2$. Therefore, we can write the so-called operator product expansion (OPE):

\begin{equation}
    \mathcal{O}_i(z,\bar{z})\mathcal{O}_j(w,\bar{w}) = \sum_k C_{ij}^k(z-w,\bar{z}-\bar{w})\mathcal{O}_k(w,\bar{w}),
\end{equation}
where $C_{ij}^k$ are constants. In particular, for the stress energy tensor we have

\begin{equation}
    T(z)T(w) \sim \dfrac{c/2}{(z-w)^4}+\dfrac{2T(w)}{(z-w)^2}+\dfrac{\del T(w)}{(z-2)}.
\end{equation}

The number $c$ is called the \textit{central charge} of the CFT and it is a particularly important number. Generically speaking, the central charge measures the number of degrees of freedom in the theory. Also, it appears in the calculation of the \textit{trace anomaly} that arises when one introduces a macroscopic scale to the theory, breaking the conformal symmetry. The anomaly (in 2 dimensions) is given by

\begin{equation}
    T_\mu^\mu = c\dfrac{R(x)}{24\pi}.
\end{equation}

Where $R(x)$ is the Ricci scalar. For four dimensions the trace annomaly is calcuted to be

\begin{equation}
    T_\mu^\mu = \dfrac{1}{16\pi^2}\left(C_{\rho\sigma\kappa\lambda}C^{\rho\sigma\kappa\lambda}c+\tilde{R}_{\rho\sigma\kappa\lambda}\tilde{R}^{\rho\sigma\kappa\lambda}a\right),
\end{equation}
where $c$ and $a$ are constants, $C$ is the Weyl tensor and $\tilde{R}$ is the dual of the Riemann tensor. In odd dimensions the trace anomaly is absent.

\newpage


\chapter{Algebraic Topology}

One of the main mathematical tools to study manifolds and dynamical systems is called Morse homology. In this chapter we present a brief introduction to this subject. We start by stating that the objective is to find characteristics of a manifold that are going to uniquely defined it and family of manifolds that are related by homeomorphisms. These characteristics are called topological invariants and examples of such are conectedness, compactness and the Euler characteristic. A good reference for algebraic topology is the book \cite{Nakahara}. In the next sections we will construct a particular topological invariant called homology, which can be viewed as a generalization of the Euler characteristic. 

\section{The basics}

In order to get to more advanced topics like homology and homotopy, we are going to make sure that the basics are covered. We want to eventually discover properties that are intrinsic to spaces, but first let us properly define what we mean when we say intrinsic and space. We are going to start with the definition of a map

\begin{mydef}
    Given sets X and Y, a map $f$ from X to Y is a function that has elements of X as input and elements of Y as input. We denote this in the form
    $$f : X \rightarrow Y,$$
    $$f : x \rightarrow f(x),$$
    where $x \in X$. The set X is called the domain of f, Y is called the range of f and the set f(X) its called the image of f.
\end{mydef}

Maps can be thought of as transformations of a set into another. We are going to give special attention to maps that preserve some properties of the sets. For example, we have the following definitions

\begin{itemize}
    \item f is an injection if for every $x_1,x_2 \in X$ we have that $f(x_1)\neq f(x_2)$ if $x_1\neq x_2$,
    \item f is a surjection if for every $y\in Y$ there is some $x\in X$ such that $f(x)=y$,
    \item f is a bijection if it is a injection and a surjection.
\end{itemize}

Bijections have the property that the are invertible, i.e., for every bijection $f$ there is a map $f^{-1}$ such that

$$f^{-1} : Y \rightarrow X,$$
$$f^{-1} : f(x) \rightarrow x.$$

Now let us add some structure to the sets. The first definition we are going to make for that purpose is of a \textit{field}.

\begin{mydef}
    A \textit{field} F is a set together with two operations, one called addition, represented by the symbol ``$+$", and the other multiplication, represented by the symbol ``$\cdot $". These operations are maps such that, given a, b and c elements of the set,
    \begin{itemize}
        \item $a+b=b+a$ and $b\cdot a=a\cdot b$,
        \item $a + (b+c)=(a+b)+c$ and $a \cdot  (b\cdot c)=(a\cdot b)\cdot  c$,
        \item There exist elements 0 and 1 such that $a+0=a$ and $a\cdot 1=a$,
        \item For every a there exists a element $-a$ such that $a+ (-a)=0$ and for $a\neq 0$ there exists an element $a^{-1}$ such that $a\cdot a^{-1}=1$,
        \item $a\cdot (b+c)=a\cdot b + a\cdot c$.
    \end{itemize}
\end{mydef}

Now we can define another type of maps, called \textit{homomorphisms}, that preserve the structures of the sets. For example, if $f:X\rightarrow Y$ where $X$ and $Y$ are sets that have an addition structure, then $f$ is an homomorphim if $f(a+b)=f(a)+f(b)$, where $a+b$ is beign calculated using the addition from $X$ and $f(x)+f(y)$ is being calculated using the operation from $Y$. If $f$ is also a bijection. the we say that $f$ is an \textit{isomorphism} and that $X$ is \textit{isomorphic} to $Y$.

The word space, when used in this vague form, usually refers to a set with some structure defined in it. If we want to be more specific, we will need to go through an example of a type of spaces. The most prominent type of space is a vector space.

\begin{mydef}
A \textit{vector space} V over a field F is a set of elements (called vectors) in which the two operators of F are present. Suppose u, v and w are two elements of V and a and b are elements of F, then they must obey the following properties
\begin{enumerate}
    \item $v+u=u+v$,
    \item $(u+v)+w=u+(v+w)$,
    \item There exist a vector $0$ such that $v+0=v$,
    \item For every vector $u$ there exists a vector $-u$ such that $u + (-u)=0$,
    \item $a(v+u) = av+au$ and $(a+b)v = av+bv$,
    \item $(ab)v = a(bv)$,
    \item There exists an element of F $1$ such that $1u=u$.
\end{enumerate}
\end{mydef}

\section{Topology}

Vector spaces are the most common type of space that one will encounter when studying physics. But in order to get some more advanced mathematical results, we will define a more general space, called a \textit{topological space}

\begin{mydef}\textbf{Topology:}\\
A \textit{topology} $\mathscr{T}$ of a set X is a collection of subsets of X such that
\begin{enumerate}
    \item $\emptyset \in \mathscr{T}$ and $X\in\mathscr{T}$,
    \item $\mathscr{T}$ is closed under unions; if for every $i$ in an interval $I$ there is a set $U_i\in \mathscr{T}$, then $\bigcup_I U_i \in \mathscr{T}$,
    \item $\mathscr{T}$ is closed under finite intersections; if $U_i,U_j\in\mathscr{T}$ then $U_i\cap U_j \in \mathscr{T}$.
\end{enumerate}
\end{mydef}
Easy examples of topologies are if $\mathscr{T}$ contains all subsets of X or if $\mathscr{T}=(\emptyset,X)$. The former is known as the discrete topology and the latter is known as the trivial topology.

The pair $(X,\mathscr{T})$ is usually called a \textit{topological space}, although it is common to refer just $X$ as a topological space and $\mathscr{T}$ is called the topology of $X$.

An interesting example of topological spaces are the metric spaces. Given a set $X$, a metric $d: X\times X \rightarrow \mathcal{R}$ such that, given $x$, $y$ and $z \in X$,
\begin{itemize}
    \item Commutative: $d(x,y) = d(y,x)$,
    \item Positive Definite: $d(x,y) \geq 0$ and $d(x,y) = 0$ if and only if $x=y$,
    \item Triangle inequality: $d(x,y) + d(y,z) \geq d(x,z)$.
\end{itemize}

For any $x\in X$ we can define \textit{open balls} $B_\epsilon (x)$ of radius $\epsilon$ with,

\begin{equation}
    B_\epsilon (x) = \left\lbrace x^\prime \in X | d(x,y) \leq \epsilon \right\rbrace.
\end{equation}
Now, the union of all open balls forms a topology on $X$. This topology is known as the metric topology.

On topological spaces we can define interesting properties. For example,

\begin{mydef}
Given a topological space $(X,\mathscr{T)}$, we have that
\begin{itemize}
    \item Elements of $\mathscr{T}$ are called \textit{open subsets of X} ,
    \item A \textit{closed subset of X} is a set whose complement is a  open subset of X,
    \item For a given $x\in X$, an open neighborhood $U$ of x is a open set such that $x\in U$. A neighborhood of x $N$ is a subset of X (doesn't need itself to be open) such that $U \subseteq N$.
\end{itemize}
\end{mydef}

\begin{mydef}
A \textit{continuous map} $f:X\rightarrow Y$, where $X$ and $Y$ are topological spaces, such that for every open subset in Y, $U\subseteq Y$, the inverse image $f^{-1}(U)$ is open in X.
\end{mydef}

\begin{mydef}
A \textit{homeomorphisms} $f:X\rightarrow Y$ is a continuous bijection such that its inverse $f^{-1}:Y\rightarrow X$ is also continuous. If there is a homeomorphism between two topological spaces $X$ and $Y$, $X$ is said to be \textit{homeomorphic} to Y.
\end{mydef}

\section{Topological Invariants}

The definition of homeomorphisms gives us a direction in our goal to find structures that are fundamental to spaces. These mappings are so smooth that they can be seen as \textit{continuous deformations} of a space. For example, we can continuously deform a cube into a ball by ``curving" the edges, but we cannot continuously deform a ball into a ring, since for that we would need to make a hole. In fact, this intuition of continuous deformation matches the definition of homeomorphism. So we can consider spaces that are homeomorphic to each other as equivalent. In a more formal manner,

\begin{mydef}
An \textit{equivalence relation} $\sim$ is a relation with
\begin{itemize}
    \item Reflectiveness: $a\sim a$,
    \item Comutativity: If $a\sim b$ then $b\sim a$,
    \item Transitivity: If $a\sim b$ and $b\sim c$, then $a\sim c$.
\end{itemize}
\end{mydef}

Here, homeomorphism is a equivalence relation between topological spaces. Properties that are shared between equivalent spaces are called \textit{topological invariants}. Thus, if two spaces have different topological invariants, they are not homeomorphic. This is an alternative way to find out if two spaces are equivalent. Usually it might be hard to prove that no homeomorphism exists between them, but if they have different invariants we know they are not equivalent. Since we do not know the full set of topological invariants that there are\footnote{It may not even be possible to find complete list of topological invariants.}, we cannot make the opposite argument to prove that two spaces are equivalent.

Here we are going to give a small list of some topological invariants we know.

\begin{mydef}Hausdorff property\\
A topological space $X$ is said to be \textit{Hausdorff} if for any pair of $x$, $y \in X$ there exists two open sets $U_x, U_y$ such that $x\in U_x$ and $y\in Y$ and they are disjoint, i.e., $U_x\cap U_y = \emptyset$
\end{mydef}

\begin{mydef}Compactness\\
A topological space is said to be \textit{compact} if for every \textit{open covering} of X, i.e., a family $\left\lbrace U_i|i\in I\right\rbrace$ of open sets such that
$$\bigcup_{i\in I}U_i=X,$$
there is a finite set $J\subseteq I$ such that $\left\lbrace U_i|i\in J\right\rbrace$ is also a covering of X
\end{mydef}

\begin{mydef}Connectedness\\
A topological space $X$ is said to be \textit{connected} if there aren't two open disjoint subsets $X_1$ and $X_2$ such that $X=X_1\cup X_2$.
\end{mydef}

Here we have seen some examples of properties that help us understand what it means for two sets to be of the same homeomorphic class. In the next couple of sections we will see other more complex topological invariants.

\section{First Homotopy Group}

We hinted before that the number of holes in a space is a topological invariant. In order to make this more formal we first need to understand what we mean by holes. Our intuition behind this section it to define closed paths along the space, and we try to shrink the path to a point. If the path encloses a hole, then we will not be able to shrink the path, like in figure \ref{hmomotom}. Now we are going to make this more formal.

\begin{figure}[h]
\centering
\includegraphics[width=0.5\linewidth]{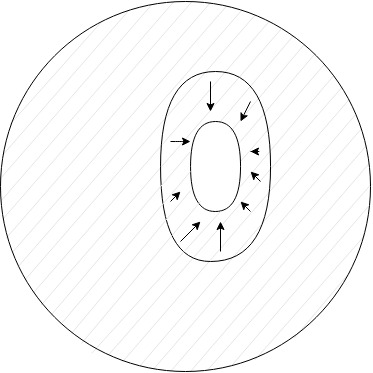}
\caption{A closed loop that encloses a hole can not be shrunken down to a point.}
\label{hmomotom}
\end{figure}

\begin{mydef}
Given a topological space X, a \textit{path} $\alpha$ is a continuous map along the space $$\alpha: [0,1]\rightarrow X.$$ The path has an initial point $x_0=\alpha (0)$ and an end point $x_1=\alpha (1)$. If they are the same, then the path is known as a \textit{loop}.
\end{mydef}

We can define a product of paths as a continuation of them. For $\alpha$ and $\beta$ paths in $X$ such that one end where the other starts, i.e., $\alpha (1)=\beta (0)$, then the product of the paths $\alpha *\beta$ is defined to be

\begin{equation}
    \alpha *\beta (t) = \left\{
  \begin{array}{lr}
    \alpha (2t) & : 0 \leq t \leq \frac{1}{2}\\
    \beta (2t-1) & : \frac{1}{2} \leq t \leq 1
  \end{array}
\right. .
\end{equation}
Also, it is possible to define the inverse of a path by $\alpha^{-1}(t)=\alpha (1-t)$.

We can now properly define what we mean by deforming a path into another.

\begin{mydef}
Given two loops $\alpha$ and $\beta$ on a topological space X that start/end on a point $x\in X$, a \textit{homotopy} between $\alpha$ and $\beta$
$$F:[0,1]\times [0,1] \rightarrow X$$
is a continuous map such that
\begin{align}
        F(s,0) = \alpha (s), \hspace{10pt} F(s,1)=\beta(s), \\
        F(0,t) = F(1,t) = x.
\end{align}
This means that the homotopy is a family of paths parameterized by s, and at $s=0$ the path is $\alpha$ and at the end $s=1$ the path is $\beta$. 

If a homotopy exists between two loops, they are said to be \textit{homotopic}.
\end{mydef}

Since homotopy defines an equivalence relation, we can say that two loops are equivalent if they are homotopic. As we have been hinted so far, spaces with no holes are such that all loops defined in them in the same point are equivalent, they all can be deformed to that point. Spaces with a hole will have loops that cannot be deformed to a point, this means that these loops are going to span a new equivalence class. In a formal and general way, 

\begin{mydef}
Given $X$ a topological space and $x\in X$, the \textit{first homotopy group} of X at x $\pi_1(X,x)$ is the set of all equivalent classes of loops at x. 
\end{mydef}

The first homotopy has the property of being topological invariant, though we are not going to prove that here.

The idea of homotopy can be extended to any map. Given $f,g$ two maps of the same domain and range, an homotopy is a continuous map such that 
\begin{equation}
    F(x,0)=f(x), \hspace{10 pt} F(x,1)=g(x).
\end{equation}
If there is a homotopy between maps $f$ and $g$, then we can say that they are homotopic.

\section{Homology of triangulable spaces}\label{dasdasxczczxvadfwewesdv86wrgi123jtr934tyhug}

When we study the geometry of polyhedra we usualy deconstruct the mathematical shape into its smaller componets, say the vertices, edges and faces. Simplexes are generalizations of those fundamental componets to any dimensional polytope. So a 0-simplex $\left( p_0\right)$ is a vertex, a 1-simplex $\left( p_0p_1\right)$ is an edge that conects the vertices $p_0$ and $p_1$, a 2-simplex $\left( p_0p_1p_2\right)$ is the face of a triangle, and generically we can write the $r$-simplex $\sigma_r$ as $\left( p_0p_1\ldots p_r\right)$. 

The fact that they are oriented means that we can define their inverses as any odd permutation, e.g., $\left( p_0p_1\right) = - \left( p_1p_0 \right)$. We can also define the $r$-boundary map of a $r$-simplex denoted by $\del_r$. As an example, we have $\del_1(p_0)=0$, $\del_2(p_0p_1)=p_1-p_0$, $\del_3(p_0p_1p_2)=(p_0p_1)+(p_1p_2)+(p_2p_0)$ and so on.

For $q\leq r$ we can define a $q$-face $\sigma_q$ of a $r$-simplex $\sigma_r$ as the $q$-simplex $\left( p_{i_0}p_{i_1}\ldots p_{i_q}\right)$ if the set of $\left\lbrace p_{i_j}\right\rbrace$ is in $\sigma_r$. If that's the case, we can write $\sigma_q \leq \sigma_r$.

A simplicial complex $K$ is a set of simplexes such that for every simplex in the set, all its faces are also in the set. The union of all simplexes in $K$, denote by $|K|$, defines a polytope in $\Re^n$. For a given topological space $X$, if $X$ is homeomorphic to a polytope $|K|$, then we say that $X$ is triangulable and  $|K|$ is a triangulation of X. 

For example, the empty triangle $\left\lbrace \left( p_0\right),\left( p_1\right),\left( p_2\right),\left( p_0p_1\right),\left( p_1p_2\right),\left( p_2p_1\right) \right\rbrace$ is a triangulation of the circle $S^1$ and the filled triangle $\left\lbrace \left( p_0\right),\left( p_1\right),\left( p_2\right),\left( p_0p_1\right),\left( p_1p_2\right) , \left( p_2p_1\right), \left( p_0p_1p_2\right) \right\rbrace$ is a triangulation of the 2-dimensional disk $D^2$, as can be seen in figure \ref{222222catra}. So from now on, we will analyze the topology of a space by looking to the topology of its triangulation.

\begin{figure}[h]
\centering
\includegraphics[width=0.5\linewidth]{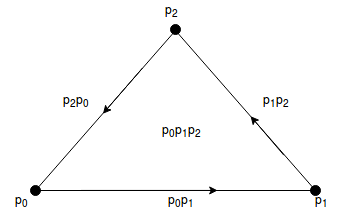}
\caption{The triangulation of a 2-dimensional disk is a filled triangle.}
\label{222222catra}
\end{figure}

We define the $r$-chain group $C_r(K)$ of a complex $K$ by a free abelian group generated by the $r$-simplexes of $K$. In general, if there are $I_r$ $r$-simplexes in $K$, then
\begin{equation}
C_r(K)=\underbrace{\mathbb{Z}\oplus\mathbb{Z}\oplus\cdots\oplus\mathbb{Z}}_{I_r}.
\end{equation}	

The set of integers accounts for any multiples of each simplex. Since the same space can have different triangulations, chain groups are not topological invariants. 

We can see that the boundary map is a map between chain groups,
\begin{equation}
C_r\xrightarrow[]{\del_r} C_{r-1}.
\end{equation}
Then, we can use this relations to define the set of topological invariants called homology groups
\begin{equation}
H_r(K)\coloneqq  \text{Kernel of }\del_r / \text{Image of }\del_{r+1},
\end{equation}
where the kernel of a map is the set of all the simplexes that map to 0
\begin{equation}
Z_r = ker(\del_r)=\left\lbrace c \in C_r(K) | \del_r(c)=0 \right\rbrace,
\end{equation}
and the image of the map is defined as
\begin{equation}
B_{r+1} = img(\del_{r+1})=\left\lbrace \del_{r+1}(c)  | c \in C_{r+1}(K) \right\rbrace,
\end{equation}

Essentially, the homology group $r$ measures the number of $r$-simplexes that do not have a boundary, but also are not the boundary of an $(r+1)$-simplex. 

\subsection{Calculation of the homology group}

Here we are going to give an example of how one would calculate the homology group of the 2-dimensional disk and sphere.

\begin{itemize}
    \item The 1d sphere\\
For the circle we have $K = \left\lbrace \left( p_0\right),\left( p_1\right),\left( p_2\right),\left( p_0p_1\right),\left( p_1p_2\right) , \left( p_2p_1\right)\right\rbrace$, the chain groups are 

\begin{align}
    C_0 = &\left\lbrace ip_0+jp_1+kp_2 | i,j,k \in \mathbb{Z} \right\rbrace \approx \mathbb{Z}\oplus\mathbb{Z}\oplus\mathbb{Z},\\
    C_1 = &\left\lbrace l(p_0p_1)+m(p_1p_2)+n(p_2p_0) | l,m,n \in \mathbb{Z} \right\rbrace \approx \mathbb{Z}\oplus\mathbb{Z}\oplus\mathbb{Z}.
\end{align}

Since there are no 2-simplexes, we have that the image of $\del_2$ is equal to zero, $B_2=0$. Since this is the case, $H_1=Z_1$. The kernel of $\del_1$ is not empty. In fact, the linear combination $l(p_0p_1)+l(p_1p_2)+l(p_2p_0)$ has no boundary. So $Z_1=\mathbb{Z}$ and $H_1=\mathbb{Z}$.

$H_0$ takes a special meaning. It essentially measures the number of connected pieces in the set. To see this proved see \cite{Nakahara} Since the sphere is a connected set, we have $H_0=\mathbb{Z}$.

\item The 2d disk\\
For the disk we have $K = \left\lbrace \left( p_0\right),\left( p_1\right),\left( p_2\right),\left( p_0p_1\right),\left( p_1p_2\right) , \left( p_2p_1\right), \left( p_0p_1p_2\right) \right\rbrace$, the chain groups are 

\begin{align}
    C_0 = &\left\lbrace ip_0+jp_1+kp_2 | i,j,k \in \mathbb{Z} \right\rbrace \approx \mathbb{Z}\oplus\mathbb{Z}\oplus\mathbb{Z}\\,
    C_1 = &\left\lbrace l(p_0p_1)+m(p_1p_2)+n(p_2p_0) | l,m,n \in \mathbb{Z} \right\rbrace \approx \mathbb{Z}\oplus\mathbb{Z}\oplus\mathbb{Z}\\,
    C_2 = &\left\lbrace a(p_0p_1p_2) | a \in \mathbb{Z} \right\rbrace \approx \mathbb{Z}.
\end{align}

Since $(p_0p_1p_2)$ is not a boundary of a bigger simplex, we have that $B_3=0$. So $H_2=Z_2$. But the boundary of $(p_0p_1p_2)$ is not zero, in fact $\del_2(p_0p_1p_2)=(p_0p_1)+(p_1p_2)+(p_2p_0)$. So $H_2=0$.

The story is similar for $H_1$, but there is a linear combination of 1-simplexes that have no boundary, $\del_1((p_0p_1)+(p_1p_2)+(p_2p_0))=0$, and so $Z_1=\mathbb{Z}$. But the image of $\del_2$ is not zero, in fact, as calculated above, the image contains the same linear combination $=(p_0p_1)+(p_1p_2)+(p_2p_0)$. Therefore $H_1=\mathbb{Z}/\mathbb{Z}=0$.

Since the disk is connected, we have that $H_0=\mathbb{Z}$.

\end{itemize}

One more thing to notice. We use the star notation to synthesize the information of the groups, $H_*=(H_0,H_1,H_2,...)$. Also, we use brackets to indicate the position of the non-zero objects in a sparse list, $\mathbb{Z}[n] = (0,...,\underbrace{\mathbb{Z}}_{\text{n-th entry}},0,...)$.

Some classical examples of homology groups are:\\

\begin{tabular}{c|cc}
\centering
    Disk $D^n$ & $(\mathbb{Z},0,\cdots)=$ &$\mathbb{Z}[0]$, \\
    Sphere $S^n$ & $(\mathbb{Z},0,\cdots,\mathbb{Z},0,\cdots)=$&$\mathbb{Z}[0]\oplus\mathbb{Z}[n]$,\\
    M{\"o}bius strip $S^n$ & $(\mathbb{Z},\mathbb{Z},0,\cdots,)=$&$\mathbb{Z}[0]\oplus\mathbb{Z}[1]$,\\
    Torus $T$ & $(\mathbb{Z},\mathbb{Z}\times\mathbb{Z},\mathbb{Z},0,\cdots)=$&$\mathbb{Z}[0]\oplus(\mathbb{Z}\times\mathbb{Z})[1]\oplus\mathbb{Z}[2].$
\end{tabular}

\subsection{Relative homology}

Another important definition that we are going to use is that of the relative homology. Let us say that the set A is a subset of X. It is clear from definition that $C_*(A) \subset C_*(X)$. Then we are led to define

\begin{itemize}
    \item The A-relative chain of X as $C_*(X)/C(A)$,
    \item The A-relative homology of X as $H_*(X,A)=H_*(X/A)$.
\end{itemize}

Relative homology is going to be essential when we will be defining Conley index in the next chapter.




\chapter{RG Flow and Dynamical Systems}

\section{Dynamical Systems}

\subsection{Historical Overview}

In mathematics, dynamical systems are systems that show how a quantity evolves with time. They were developed first by physicists as they were trying to solve Newton's mechanics. Newton's equation is a differential equation that determines how the position of a object changes in time, and its solution on systems with an inverse square external function was very successful at explaining the dynamics of gravitational bodies.

The Newton equation is given by

\begin{equation}
    m\ddot{x_i}=m\frac{d^2x_i}{dt^2} = F(x_1, x_2, ... ),
\end{equation}
where $x_i(t)$ indicates the position of the $i$-th object in space, $t$ is time and $F$ is an external force.\footnote{In this example we did not allow for $F$ to explicitly depend on time for simplicity reasons. In case we can not do that, we just consider $t$ to be an extra dimension of the space.} The fact that the system is of second order makes it different for the system seen in (\ref{betadefi}), but a simple trick can be used here. Let us define $v = \dot{x}$, then the system becomes

\begin{align}
    \dot{x_i} & =  v_i,   \nonumber \\
    \dot{v_i} & = \frac{F}{m}(x_1, x_2, ... ).
\end{align}
    
This system has twice the number of equations compared to the original one, but they are all first order coupled differential equations.

As physicists tried to take on more and more complex systems, they realised that it is not always that these differential equations are solvable. 

It is usually agreed that it was Henri Poincar\'e the creator of the mathematics of dynamical systems in the late XIX century. By shifting the focus from a quantitative description of the solutions to a qualitative one, he was able to answer questions like stability and asymptotic behaviours of systems that were not analytically solvable, like the famous three body problem. 

Throughout the XX century, the field was expanded thanks in part to the works of Aleksandr Lyapunov, George Birkhoff, Andrei Kolmogorov, Stephen Smale and many others who explored the relation of dynamical systems and Hamiltonian physics. Some important techniques were developed in this context, like bifurcation theory. In the 1960's, together with an advancement in computers, the American Edward Lorenz discovered chaotic behaviour, when the solutions of a system which depend on the initial conditions that have any uncertainty of measurements, lead to very different asymptotic behaviour. His work was a mark for the study of dynamical systems, popularizing its use for many varied applications from the weather forecast to biology, economy and the study of fractals.

To set the notation, we are going to use forward, given a set of functions $\left\lbrace \lambda_i (t)\right\rbrace$ a dynamical system is a system of differential equations of the form

\begin{equation}
    \dot{\lambda}_i = \beta_i (\lambda).
\end{equation}
A standard reference for dynamical systems is the book \cite{Strogatz}.

\subsection{Linearization of Flows}\label{Linear}

When the $\beta$ function is linear the problem becomes trivial. Of course, it is too much to ask that all systems of interest are linear but, since we are assuming smooth functions, if we zoomed in in a point in the space the equations becomes approximate to linear equations.

An interesting class of points in the theory space are the ones where $\beta = 0$, i.e., the flow is stationary. These points are called fixed points, and they are related with QFT's where the conformal symmetry holds (CFT's) as seen in Sec \ref{confconfconf}. Close to the fixed point, we can linearize the system doing the following procedure:

If $\lambda = (\lambda_1,\ldots,\lambda_n)$ then $\beta = (\dot{\lambda}_1,\ldots,\dot{\lambda}_n)$, and we set $\lambda^{*}$ as the fixed point, $\beta(\lambda^{*})=0$. Now we define $\tilde{\lambda}=\lambda-\lambda^{*}$, then

\begin{equation}
\begin{split}
\beta(\lambda) & = \beta(\lambda^{*}+\tilde{\lambda})\\
& = \beta(\lambda^{*})+J\tilde{\lambda} + O(\lambda^2)\\
& \approx J(\lambda-\lambda^{*}).
\end{split}
\end{equation}
where $J$ is the Jacobian matrix of $\beta$ calculated at the fixed point.

By looking at the linearized system we can find some characteristics of the fixed point. The eigenvalue of the Jacobian is related to the conformal dimensions of the operators in the system by $eigenvalue(J)_i=\Delta_i-d$, where $d$ is the space-time dimension like seen in equation (\ref{yeahyeah}).

Eigendirections of the Jacobian matrix with positive eigenvalue represent relevant operators, since small pertubations in the directions tend to grow. In dynamical systems we call these points \textit{unstable fixed points}. To represent this, we draw a diagram in the direction spanned by $\lambda$, with the unstable points represented by an empty circle with arrows pointing away from it

$$\leftarrow \Circle \rightarrow$$

Similarly, eigendirections with negative eigenvalue represent irrelevant operators. Flow lines close to this points will tend to move in the direction of the point, representing the fact that small perturbations in this direction are irrelevant in the renormalization process. This points are called \textit{unstable fixed points} and they are represented by filled circles with arrows pointing towards them

$$\rightarrow \CIRCLE \leftarrow$$

Eigendirections with eigenvalue equal to zero represent marginal operators. This points are called \textit{marginal fixed points}. Not all marginal fixed point are the same, they can be of various types. The most common marginal fixed point we are going to see in future sections are \textit{half stable} or \textit{semi-stable fixed points}, where coming from one side the point is seen as stable but from the other side of the same line the point is seen as unstable. They are represented on the space with a half filled circle and arrows according to their stability

$$\leftarrow \RIGHTcircle \leftarrow$$

When we are plotting the diagram for multiple operators, i.e, the space is not one-dimensional, we are not worried with the filling of the fixed points since the diagram would become too crowded. Instead we use arrows that represent the flow lines in order to represent the stabilities of a fixed point. If a fixed point has $\mu$ negative eigenvalues, the space of flows that tend towards the point, called \textit{stable manifold} or \textit{irrelevant manifold}, is $\mu$-dimensional. Similarly, we can define the \textit{unstable manifold} or \textit{relevant manifold} of a fixed point.

We could also have imaginary eigenvalues. These are expected in non-unitary theories since the conformal dimension takes complex values. The interpretation of such camplex eigenvalues is as follows:

\begin{enumerate}[(i)]
\item Real part positive: unstable spiral. Seen in figure \ref{Unspira}, the flows move away from a fixed point in spiral.
\item Real part negative: stable spiral. Seen in figure \ref{stspira}, the flows approach a fixed point in spiral.
\item Zero real part: center. Seen in figure \ref{cecece}, the flows form orbits around a fixed point.
\begin{figure}[h]
  \centering
  \begin{subfigure}[b]{0.45\linewidth}
    \includegraphics[width=\linewidth]{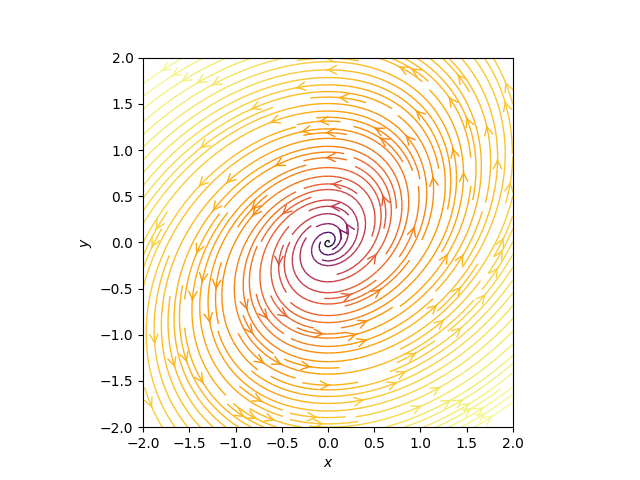}
     \caption{Unstable spiral, arrows pointing away from the fixed point in the center of the image.}
     \label{Unspira}
  \end{subfigure}
  \begin{subfigure}[b]{0.45\linewidth}
    \includegraphics[width=\linewidth]{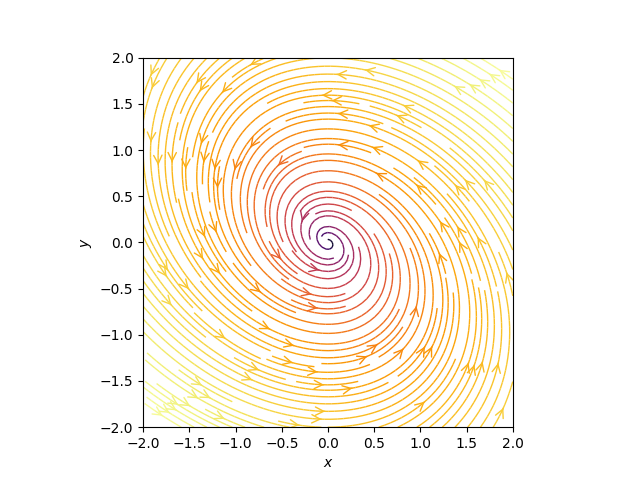}
    \caption{Stable spiral, arrows pointing toward the fixed point in the center of the image.}
    \label{stspira}
  \end{subfigure}
  \begin{subfigure}[b]{0.45\linewidth}
    \includegraphics[width=\linewidth]{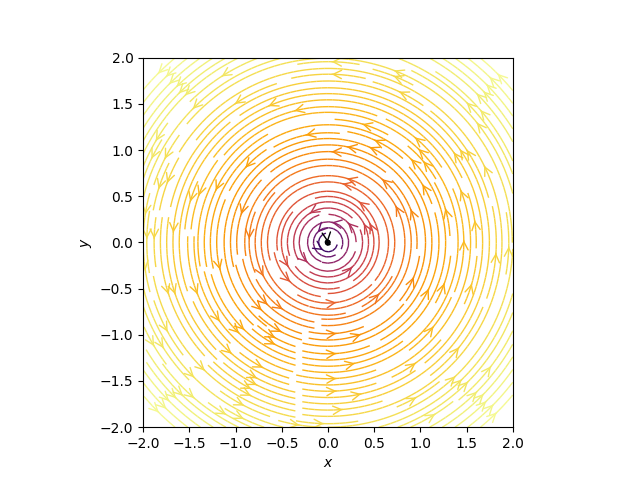}
    \caption{Center, arrows forming orbits around the finxed point in the center of the image.}
    \label{cecece}
  \end{subfigure}
  \caption{Possible behaviours near the fixed points for systems with imaginary eigenvalues of the Jacobian.}
\end{figure}
\end{enumerate}

\newpage

\subsection{Nonlinear asymptotic behaviours and limit cycles}

Not all possible behaviours are captured by the linearized system. Examples of these nonlinear behaviours are chaos and limit cycles.

Limit cycles are isolated closed trajectories, i.e., they represent periodic solutions of the system. The fact that they are isolated differentiates them from centers, since close to the cycle we can not have other closed trajectories. Instead, we have flows that tend to the cycle (that's why they are called limit) or tend to move away from it in a spiral.

\begin{figure}[h!]
\includegraphics[width=\linewidth]{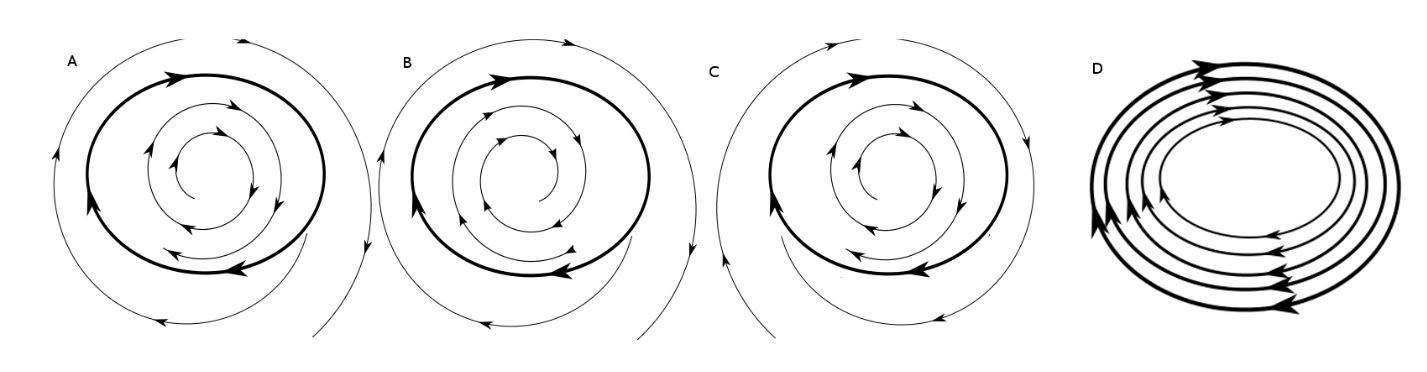}
\caption{(A): half-stable limit cycle, (B): unstable limit cycle, (C): stable limit cycle, (D): center. Picture taken from \cite{Martini}.}
\end{figure}

It is possible to rule out the existence of closed trajectories for some systems, though there is no universal method that works for every case. One objective of future sections is to show how one can decide whether the existence of periodic flows in certain systems is possible or not.

One of the most important theorems in dynamical systems has an unexpected consequence for limit cycles:
\newpage
\begin{theorem}\label{Intersecttheorem}
    Existence and Uniqueness: Suppose f is a function of class $C^1$ in some open connected set. Then the problem $\dot{x}=f(x),x(0)=x_0$ has a unique solution  $x(t)$ for some time interval $(-T,T)$
\end{theorem}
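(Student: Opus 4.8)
The plan is to recast the initial value problem as an integral equation and apply the contraction mapping principle (the Banach fixed-point theorem), which is the classical Picard--Lindel\"of strategy. First I would observe that a continuous $x(t)$ solves $\dot{x}=f(x)$ with $x(0)=x_0$ if and only if it satisfies
\begin{equation}
x(t) = x_0 + \int_0^t f(x(s))\,ds,
\end{equation}
since integrating the ODE and invoking the fundamental theorem of calculus moves freely between the two formulations; the integral form automatically encodes the initial condition and upgrades any merely continuous solution to a $C^1$ one.

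The second step is to extract a Lipschitz estimate from the $C^1$ hypothesis. Because $f$ is continuously differentiable on the open connected set, its derivative is bounded on any closed ball $\overline{B_r(x_0)}$ contained in that set, say by $L$, and the mean value inequality then gives $\norm{f(u)-f(v)}\leq L\norm{u-v}$ for all $u,v$ in the ball. I would also let $M$ bound $\norm{f}$ on the same ball. These two constants control the iteration.

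Next I would introduce the Picard operator $(\Phi x)(t) = x_0 + \int_0^t f(x(s))\,ds$ acting on the space of continuous maps from $[-T,T]$ into $\overline{B_r(x_0)}$, equipped with the supremum norm, which is a complete metric space. Choosing $T$ small enough that $MT\leq r$ keeps $\Phi$ mapping this space into itself, and choosing $T$ small enough that $LT<1$ makes $\Phi$ a contraction, since for each $t$
\begin{equation}
\norm{(\Phi x)(t)-(\Phi y)(t)} \leq \int_0^{|t|} L\,\norm{x(s)-y(s)}\,ds \leq LT\,\sup_{s}\norm{x(s)-y(s)}.
\end{equation}
The Banach fixed-point theorem then produces a unique fixed point of $\Phi$, which is exactly the unique solution on $(-T,T)$.

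The main obstacle, and essentially the only place demanding care, is the simultaneous calibration of $T$: it must at once guarantee invariance of the ball (via $MT\leq r$) and strict contraction (via $LT<1$), so one takes $T$ below both thresholds. One must also confirm that the space of continuous functions valued in the closed ball is complete, so that the fixed-point theorem genuinely applies. Everything else reduces to the routine estimates above once the local Lipschitz bound is in hand.
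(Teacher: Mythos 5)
Your argument is the standard Picard--Lindel\"of proof via the integral reformulation and the Banach fixed-point theorem, and it is correct as outlined (the only cosmetic point being that for negative $t$ the estimate should be written with $\bigl|\int_0^t\cdots\bigr|$ rather than $\int_0^{|t|}$). The paper does not give its own proof but defers to the reference on ordinary differential equations, where the theorem is established by exactly this contraction-mapping argument, so your approach coincides with the intended one.
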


For a proof of the theorem see \cite{Uniquenessandthinhs}.

From the fact that the solution is unique we know that we can not have intersections of flows in the space. A corollary to this theorem is the fact that if a flow starts inside a region bounded by a closed orbit, we know that the flow will never leave this region.

\subsection{Gradient Systems}\label{Gradeint}

Suppose the flow can be written in the form 
\begin{equation}\label{Gradient}
\beta = -\nabla V,    
\end{equation}
for some continuously differentiable single-valued scalar function $V(\lambda)$. Such a flow is called gradient flow with potential function V.

\begin{theorem}\label{Teo1}
Closed orbits are impossible in gradient systems.
\end{theorem}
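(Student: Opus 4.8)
The plan is to argue by contradiction, using $V$ itself as a Lyapunov-type function that is forced to strictly decrease along any genuinely moving trajectory. First I would suppose that a closed orbit exists: a periodic solution $\lambda(t)$ of period $T>0$ with $\lambda(T)=\lambda(0)$ which is not merely a single fixed point. Since $V$ is single-valued, tracing it around the loop must return it to its starting value, so the net change over one period vanishes,
\begin{equation}
\Delta V = V(\lambda(T))-V(\lambda(0)) = 0.
\end{equation}

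The second step is to compute this same $\Delta V$ as a time integral along the flow. By the chain rule together with the gradient structure (\ref{Gradient}),
\begin{equation}
\frac{dV}{dt} = \sum_i \frac{\partial V}{\partial \lambda_i}\dot{\lambda}_i = \nabla V \cdot \beta = \nabla V \cdot (-\nabla V) = -\norm{\nabla V}^2 \leq 0,
\end{equation}
so that $\Delta V = \int_0^T \frac{dV}{dt}\,dt = -\int_0^T \norm{\nabla V}^2\,dt \leq 0$, with equality holding only where $\nabla V = 0$.

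The key observation, and the only place that requires genuine care, is that on a true closed orbit the system actually moves: there is some interval of time on which $\dot\lambda \neq 0$. Because $\dot\lambda = \beta = -\nabla V$, this forces $\nabla V \neq 0$ on that interval, so the integrand $\norm{\nabla V}^2$ is strictly positive there and hence $\Delta V < 0$ strictly. This contradicts the first equation $\Delta V = 0$, and the closed orbit is ruled out. I expect the main (minor) obstacle to be disentangling a genuine periodic trajectory from a fixed point: if $\nabla V$ vanished at every point of the orbit, then $\beta \equiv 0$ and the supposed orbit would collapse to a single stationary point rather than a closed trajectory, so the nontrivial case is precisely the one in which the strict inequality is available. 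Note that, unlike many arguments excluding periodic behaviour, this one does not even invoke the uniqueness Theorem \ref{Intersecttheorem}; it relies only on the exactness of $\beta$ as a gradient.
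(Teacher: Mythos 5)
Your proof is correct and follows essentially the same route as the paper's: a contradiction between $\Delta V = 0$ (single-valuedness of $V$ around the loop) and $\Delta V = -\int_0^T \norm{\dot\lambda}^2\,dt < 0$ on a genuinely moving orbit, with $\norm{\nabla V}^2 = \norm{\dot\lambda}^2$ by the gradient relation. Your extra care in ruling out the degenerate case where the orbit is just a fixed point is a welcome sharpening of the paper's brief ``unless $\dot\lambda = 0$'' remark, but it is not a different argument.
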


\begin{proof}
Suppose there were a closed orbit. There is a contradiction considering the change in V after one circuit. On one hand, $\Delta V=0$, since V is single valued. But on the other hand.
$$\Delta V = \int^T_0 \frac{dV}{dt}dt=\int^T_0(\nabla V \cdot \dot{\lambda})dt = - \int^T_0 \norm{\dot{\lambda}}^2dt < 0$$
unless $\dot{\lambda}=0$, which is a fixed point.
\end{proof}

The function V resembles the potential function of classical mechanics. In particular, it tends to decrease as the ``time'' progresses in the system. We can see this by looking at

\begin{equation}
    \dfrac{dV}{dt} = \dfrac{dV}{d \lambda}\dfrac{d\lambda}{dt}.
\end{equation}
By the definition (\ref{Gradient}) we have that 

\begin{equation}
    \dfrac{dV}{dt} = \dfrac{dV}{d\lambda}\left( -\dfrac{dV}{dt} \right) =  - \left( \dfrac{dV}{dt} \right)^2 \leq 0.
\end{equation}
In particular, all one dimensional dynamical systems are gradient.

\subsection{Lyapunov Function}\label{Lyplyp}

Consider a flow with a fixed point at $\lambda^{*}$. Suppose there is a Lyapunov function, i.e., a continuously differentiable real-valued function $V(\lambda)$ such that
\begin{enumerate}[(i)]
\item $V(\lambda)>0$ for $\lambda\neq\lambda^*$ and $V(\lambda^*)=0$,
\item $\dot{V}<0$ for all $\lambda\doteq\lambda^*$.
\end{enumerate}

\begin{theorem}
Any system such that a Lyapunov function can be defined has no closed orbits.
\end{theorem}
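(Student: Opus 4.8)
The plan is to mirror, almost verbatim, the contradiction argument already used for gradient flows in Theorem \ref{Teo1}. I would assume for contradiction that the system admits a closed orbit $\gamma$, that is, a nonconstant periodic solution $\lambda(t)$ of period $T$ with $\lambda(T)=\lambda(0)$, and then track how the Lyapunov function $V$ changes over one complete traversal of $\gamma$.

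The heart of the argument is to exhibit two incompatible evaluations of $\Delta V = V(\lambda(T)) - V(\lambda(0))$. On one hand, since $V$ is single-valued and the orbit returns to its starting point, we must have $\Delta V = 0$. On the other hand, writing the change as an integral of the rate of change along the flow,
$$\Delta V = \int_0^T \frac{dV}{dt}\, dt = \int_0^T \dot V \, dt,$$
and invoking property (ii), the integrand is strictly negative at every point traversed by $\gamma$, so $\Delta V < 0$. These two conclusions cannot both hold, which is the desired contradiction.

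The one point that genuinely needs care — and the main obstacle — is justifying that property (ii) applies at \emph{every} point of $\gamma$, since (ii) only guarantees $\dot V < 0$ for $\lambda \neq \lambda^*$. I would dispatch this by arguing that a genuine closed orbit can never pass through the fixed point $\lambda^*$: if $\gamma$ met $\lambda^*$, it would share a point with the constant solution sitting at $\lambda^*$, and the Existence and Uniqueness Theorem \ref{Intersecttheorem} forces two solutions through a common point to coincide, so $\gamma$ would collapse to the constant orbit and fail to be a closed orbit. Hence $\lambda(t) \neq \lambda^*$ for all $t$, property (ii) is valid along all of $\gamma$, the strict inequality in the integral holds, and the contradiction closes, proving that no closed orbit can exist.
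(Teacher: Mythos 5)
Your proof is correct. The paper itself does not prove this theorem --- it simply points to \cite{jordanobruno} --- so the only internal comparison available is with the proof of Theorem \ref{Teo1}, which you explicitly and appropriately mirror: the two incompatible evaluations of $\Delta V$ over one circuit (zero by single-valuedness and periodicity, strictly negative by integrating $\dot V<0$) give the contradiction, exactly as in the gradient case. The one genuinely nontrivial point, that hypothesis (ii) applies along the \emph{entire} orbit, is the right thing to worry about, and your resolution via Theorem \ref{Intersecttheorem} is sound: a solution passing through $\lambda^*$ must coincide with the constant solution at $\lambda^*$, so a nonconstant periodic orbit never visits $\lambda^*$, and the integrand is a continuous, strictly negative function on $[0,T]$, forcing $\Delta V<0$. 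Note that property (i) of the Lyapunov function plays no role in your argument (nor does it need to for this particular conclusion); it is only needed for the stability statements usually bundled with Lyapunov's theorem. An equally common phrasing of the same idea is simply that $V(\lambda(t))$ is strictly decreasing in $t$ along any non-equilibrium trajectory and hence cannot be periodic, but that is the same argument in different clothing.
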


A proof can bee found in \cite{jordanobruno}.
Unfortunately there is no consistent method of finding Lyapunov functions or to prove they do not exist. One usually needs divine inspiration for that.

Another form to rule out the existence of limit cycles in 2-dimensional systems is the Bendixson-Dulac theorem.

\begin{theorem}\textbf{Bendixson-Dulac theorem}\\
    Let's suppose there is a scalar class $C^1$ function $\phi (\lambda)$ such that
    $$\nabla \cdot (\phi \beta),$$
    has only one sign $\neq 0$ in the entirety of a simply connected subset of the space. Then there are no periodic solutions for the system
    $$\dot{\lambda} = \beta (\lambda).$$
\end{theorem}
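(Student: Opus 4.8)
The plan is to prove the Bendixson--Dulac theorem by contradiction, using Green's theorem (the two-dimensional divergence theorem) on the region enclosed by a hypothetical periodic orbit. First I would suppose that, contrary to the conclusion, there exists a closed orbit $\Gamma$ lying entirely inside the simply connected region where $\nabla\cdot(\phi\beta)$ has a fixed sign. Let $D$ denote the region enclosed by $\Gamma$; since the space is simply connected, $D$ is contained in the region and so $\nabla\cdot(\phi\beta)$ keeps its single sign throughout $D$. This immediately forces the area integral
\begin{equation}
\iint_D \nabla\cdot(\phi\beta)\,dA \neq 0,
\end{equation}
because a continuous function of one strict sign cannot integrate to zero over a region of positive area.

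Next I would compute the same quantity as a boundary integral. By Green's theorem in its divergence form,
\begin{equation}
\iint_D \nabla\cdot(\phi\beta)\,dA = \oint_\Gamma (\phi\beta)\cdot \hat{n}\,ds,
\end{equation}
where $\hat{n}$ is the outward unit normal along $\Gamma$. The key observation is that $\Gamma$ is a trajectory of the system $\dot{\lambda}=\beta(\lambda)$, so the velocity vector $\beta$ is everywhere tangent to $\Gamma$; hence $\beta\cdot\hat{n}=0$ at every point of the curve, and the scalar factor $\phi$ does not change this. Therefore the line integral vanishes,
\begin{equation}
\oint_\Gamma (\phi\beta)\cdot\hat{n}\,ds = \oint_\Gamma \phi\,(\beta\cdot\hat{n})\,ds = 0.
\end{equation}
Combining the two evaluations gives a strictly nonzero area integral equal to a vanishing boundary integral, which is the desired contradiction; hence no closed orbit can exist.

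For the steps to be rigorous I would want the orbit $\Gamma$ to be a piecewise-$C^1$ simple closed curve so that Green's theorem applies, which is guaranteed since $\beta$ is $C^1$ (the flow is smooth) and closed orbits of such systems are smooth simple curves. I expect the main subtlety, rather than a calculation, to be the geometric justification that $\beta$ is tangent to $\Gamma$: this follows because along a solution the curve is parametrized by $\lambda(t)$ with tangent $\dot{\lambda}=\beta$, so the outward normal is orthogonal to $\beta$ and the flux integrand is identically zero. The only place where I would need to be careful is ensuring simple connectedness is genuinely used---it guarantees that the interior $D$ lies wholly within the domain where the sign condition holds, so the argument would fail on a domain with a hole through which the orbit could wind.
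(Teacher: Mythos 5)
Your proof is correct and follows essentially the same route as the paper: assume a closed orbit, apply Green's theorem in divergence form to the enclosed region, and derive a contradiction between the strictly signed area integral and the vanishing flux integral $\oint_\Gamma \phi\,(\beta\cdot\hat{n})\,ds = 0$. Your explicit justification that the flux vanishes because $\beta$ is tangent to the trajectory is exactly the content of the paper's line integral $\oint_C \phi(-\beta_2\,d\lambda_1+\beta_1\,d\lambda_2)=0$, just spelled out more carefully.
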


\begin{proof}
Assume the function $\phi (\lambda)$ exists. Assume now there is a limit cycle C in the region and denote D as the interior of C, then
$$\int\int_D \nabla \cdot (\phi \beta)d\lambda_1d\lambda_2 = \oint_C \phi (-\beta_2d\lambda_1+\beta_1d\lambda_2)=0.$$
where we used Green's theorem. This is a contradiction since $\nabla \cdot (\phi \beta)$ is strictly positive or strictly negative. 
\end{proof}

For curiosity, we show a similar theorem but with the opposite purpose. The Poincar\'e-Bendixson Theorem can be seen as a criteria for determining the existence of a limit cycle:

\begin{theorem}\textbf{Poincar\'e-Bendixson Theorem}\\
Suppose a system $\beta (\lambda) = \dot{\lambda}$ in a closed bounded set $R$ such that there is no fixed point in $R$. Then if there is a flow line $C$ that is confined in R, then C is a limit cycle or it asymptotically meets a limit cycle in R.
\end{theorem}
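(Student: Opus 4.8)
The plan is to study the \emph{$\omega$-limit set} of the confined trajectory $C$ and to show that, in the absence of fixed points, it must itself be a periodic orbit. Write $\lambda(t)$ for the solution tracing out $C$. Since $C$ remains in the closed bounded set $R$, I would first introduce
\begin{equation}
\omega(C) = \{\, q \in R \;:\; \lambda(t_n) \to q \text{ for some sequence } t_n \to \infty \,\}.
\end{equation}
Compactness of $R$ guarantees that $\omega(C)$ is nonempty, and the standard arguments show it is closed, connected, and invariant under the flow. Crucially, because $R$ contains no fixed points, every point of $\omega(C)$ is a regular point where $\beta \neq 0$.

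The technical heart is a \emph{monotonicity lemma} for transversals, and this is the step I expect to be the main obstacle, since it is exactly where two-dimensionality enters in an essential way. At a regular point I can draw a short segment $\Sigma$ transverse to the vector field $\beta$; by continuity of the flow, trajectories cross $\Sigma$ always in the same direction. I would then prove that the successive intersections of a single trajectory with $\Sigma$ form a \emph{monotone} sequence along $\Sigma$. The argument invokes the Jordan curve theorem: the arc of trajectory between two consecutive crossings, closed up by the sub-segment of $\Sigma$ joining their endpoints, is a simple closed curve partitioning the plane into an inside and an outside, and uniqueness of solutions (Theorem \ref{Intersecttheorem}) forbids the flow from ever crossing this curve the wrong way, forcing each subsequent crossing to advance monotonically in one direction.

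From monotonicity I would deduce that $\omega(C)$ meets any transversal in at most one point, for otherwise the monotone sequence of crossings could not accumulate at two distinct points of $\Sigma$. Now pick $p \in \omega(C)$ and follow the trajectory through $p$; since $\omega(C)$ is closed and invariant, this entire trajectory lies in $\omega(C)$, and its own $\omega$-limit set is a nonempty subset containing some regular point $z$. Drawing a transversal $\Sigma_z$ through $z$, the trajectory through $p$ crosses $\Sigma_z$ infinitely often while approaching $z$; but $\omega(C)$ meets $\Sigma_z$ in only one point, so all these crossings must coincide. Hence the trajectory through $p$ returns to the same point and is a closed orbit $\Gamma \subseteq \omega(C)$.

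Finally I would show that $C$ genuinely approaches $\Gamma$: choosing a transversal to $\Gamma$ and using the monotone convergence of the crossings of $C$ to their single limit point on that transversal, the distance from $\lambda(t)$ to $\Gamma$ tends to zero. This yields exactly the claimed dichotomy, namely that either $C = \Gamma$ is itself a periodic solution, or $C$ spirals asymptotically onto the limit cycle $\Gamma$ lying in $R$.
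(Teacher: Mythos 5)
The paper offers no proof of this theorem at all; it simply defers to the reference \cite{Uniquenessandthinhs}, and your outline is precisely the standard argument found in such texts: the $\omega$-limit set of the confined trajectory is nonempty, compact, connected and invariant, the transversal monotonicity lemma (via the Jordan curve theorem and uniqueness of solutions) shows the limit set meets any transversal in at most one point, and following a trajectory inside the limit set then forces it to close up into a periodic orbit that $C$ approaches. Your sketch is correct, including the accurate identification of the monotonicity lemma as the step where two-dimensionality enters essentially.
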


For a proof see \cite{Uniquenessandthinhs}.

\section{Index Theory}

We have seen in section \ref{Linear} how we can use the linearized version of the equations to gather information about fixed points. However, the linear system doesn't tell us anything of the \textit{global} behaviour of the flow. We present some method that were built to uncover this $global$ behaviour.

\subsection{Index of a Closed Curve}\label{inccloscur}

For a closed curve $C$ in a two-dimensional vector space with a vector field $\beta (\lambda)$, we can define an index that measures the winding of the vector field around $C$. For a more precise definition, let's say that for $\lambda \in C$, the vector field at this point $\beta (\lambda)$ makes an angle $\phi$ with one axis. Since $C$ in closed, the net difference of $\phi$ between a revolution in $C$, $[\phi]_C$, is a multiple of $2\pi$, then we can define the index as:

\begin{equation}
    I_C = \frac{1}{2\pi}[\phi]_C.
\end{equation}

Some interesting properties of the index of a curve are

\begin{enumerate}
    \item If $C^\prime$ is a continuous deformation of $C$, then $I_C=I_{C^\prime}$,
    \item If $C$ does not enclose a fixed point, then $I_C=0$,
    \item If $C$ is a trajectory of the system, i.e. a periodic flow, then $I_C=+1$.
\end{enumerate}

For every isolated fixed point $\lambda^*$, any closed curve $C$ that encloses $\lambda^*$ and no other fixed points, we have that $I_C$ is the same by property 1. Then we can drop the subscript $C$ and define $I$ as the index of the fixed point $\lambda^*$.

\begin{theorem}\label{coolctheorem}
If a closed curve C surrounds $n$ isolated fixed points, then 
$$I_C=I_1+I_2+...I_n,$$
where $I_k$ is the index of the k-th fixed point.
\end{theorem}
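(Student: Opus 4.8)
The plan is to use the additivity structure provided by the index's invariance under continuous deformation (property 1) together with its vanishing on curves enclosing no fixed point (property 2). The central idea is that the winding number is additive when one subdivides the region enclosed by $C$ into subregions, each containing exactly one of the fixed points.

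First I would invoke the deformation invariance to surround each isolated fixed point $\lambda^*_k$ by a small closed curve $C_k$ lying in a neighborhood containing only that fixed point, so that by definition the index on $C_k$ equals $I_k$. The key geometric construction is then to connect the big curve $C$ to each of the small curves $C_k$ by thin ``corridors'' (pairs of curves traversed in opposite directions), producing a single large closed curve $\tilde{C}$ that is homotopic to $C$ in the region free of fixed points, but which bounds a region enclosing no fixed point at all (the fixed points having been excised by the small loops). By property 1 we have $I_{\tilde{C}} = I_C$, and because the excised region contains no fixed points we would argue $I_{\tilde{C}} = 0$ via property 2.

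Next I would carefully account for the contributions along $\tilde{C}$. Each corridor is traversed once in each direction, so the net change in the angle $\phi$ along the two sides of a corridor cancels. What survives is the change in $\phi$ along $C$ (giving $2\pi I_C$) minus the changes along each small loop $C_k$, where the minus sign arises because the small loops are traversed with the opposite (clockwise) orientation relative to their own definition. Setting the total to zero then yields $I_C = \sum_{k=1}^n I_k$, which is the claim. Equivalently, and perhaps cleaner to write, I would phrase the whole argument directly in terms of the continuous quantity $[\phi]$: the total winding of $\beta$ around the boundary of the excised region vanishes because the vector field is continuous and nonvanishing there, and this total decomposes into the outer contribution $[\phi]_C$ and the inner contributions $-[\phi]_{C_k}$.

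The main obstacle is making the corridor construction rigorous rather than merely pictorial: one must verify that $\tilde{C}$ can be realized as a genuine continuous deformation of $C$ through curves avoiding all fixed points (so that property 1 legitimately applies), and that the corridor contributions cancel exactly rather than approximately. Since $\beta$ is continuous and nonzero away from the isolated fixed points, the angle function $\phi$ is well defined and continuous along the excised region's boundary, so the cancellation is exact; the deformation exists because the region between $C$ and the union of the small loops (with corridors) is simply connected and fixed-point-free. I would note that this is essentially the standard argument showing the winding number is a homological/deformation invariant, and I would cite \cite{Strogatz} for the pictorial version while relying on properties 1 and 2, already established above, to carry the logical weight.
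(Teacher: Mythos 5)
Your corridor/excision argument is correct and is essentially the same proof the paper defers to: the paper's ``proof'' is just a citation to chapter 6.8 of \cite{Strogatz}, where exactly this construction (small loops around each fixed point joined to $C$ by bridges whose contributions cancel, with the excised region having zero index) is used. Nothing further is needed.
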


\begin{proof}
See chapter 6.8 of \cite{Strogatz}.
\end{proof}

For a given fixed point $\lambda^*$ we can take a $C$ such that the area enclosed by $C$ is a small neighborhood around $\lambda^*$. Then we can see that both stable and unstable fixed points have index $I=+1$ and for a saddle point\footnote{In two-dimensional phase space, a saddle point is a fixed point that is unstable in one direction and stable in another direction} the index is $I=-1$. Figure \ref{caramba} shows the three possibilities side-by-side.

\begin{figure}
\begin{center}
\includegraphics[width=0.9\linewidth]{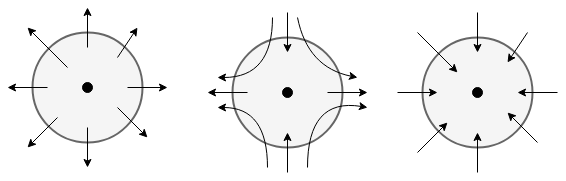}
\caption{From left to right, a stable, a saddle and an unstable fixed points. Note that the net change in the angle of the flow is $2\pi$ in a anti-clockwise orientation for the first and the last, while for the saddle the change is $2\pi$ in a clockwise orientation.}
\label{caramba}
\end{center}
\end{figure}

One application of this index is that if we can combine the theorem above with property 3 to find a way to rule out periodic orbits. If there is no way to construct a curve $C$ such that the indices of the fixed points enclosed by $C$ add to 0, then we can conclude that there are no periodic orbits in the space. A corollary is that for all periodic orbits, property 2 tell us that there must be at least one fixed point enclosed by the orbit.

\subsection{Conley Index}\label{conasldkla}

In the 70's, Charles Conley \cite{conleyOG} derived an important concept that applies algebraic topology to spaces defined by a vector field, the so-called Conley Index\cite{conley}. In order to define the Conley index, we need first to introduce the definition of an isolating neighborhood.

For a vector field $\beta$ and a theory space $\Tau$, such that $\beta : \mathbb{R}\times\Tau \rightarrow \Tau$, an isolating neighborhood is a compact set $N$ if:

\begin{equation}
    \text{Inv}(N,\beta) := \left\lbrace x \in N | \beta(\mathbb{R},x) \subset N \right\rbrace \subset \text{int} N.
\end{equation}

We call $S = \text{int} N $ an isolated invariant set. Now, for every set $S$, we can define a pair of sets $(N,L)$, called index pair, such that $L\subset N$ and:
\begin{itemize}
    \item $S=Inv(\overline{N\setminus L})$ and $N\setminus L$ is a neighborhood of S.
    \item Given $x\in L$ and $\beta([0,t],x) \subset N$, then $\beta([0,t],x)\subset L$.
    \item $L$ is an exit set in $N$; given $x\in N$ and $t_0 > 0$ such that $\beta(t_0,x)\notin N$, then there exists $0\leq t_1 < t_0$ such that $\beta(t_1,x\in L)$.
\end{itemize}

$L$ is said to be the exit set of $N$ because is the union of points in the boundary of $N$ such that the flow leaves $N$. We can define $L$ by the equation

\begin{equation}
    L = \left\lbrace x \in \partial N \mid \exists \text{ }t \in \mathbb{R}, \text{ }\beta(t,x)\cdot n(x) > 0 \right\rbrace,
\end{equation}
where $n$ is the unitary normal vector of the boundary $\partial N$.

We want to define an index, i.e., a quantity that is invariant under continuous transformations on $\text{inv}N$ or on $\beta$, that is

\begin{itemize}
    \item If $\text{inv}N = \text{inv}N'$, then $\text{Index}(N)=\text{Index}(N')$.
    \item if $\beta_r$ is a continuous family of vector fields, $r\in [0,1]$, then $\text{Index}(N,\beta_0)=\text{Index}(N,\beta_r)$.
\end{itemize}

A valid definition is when we collapse $L$ to a point forming a pointed set:

\begin{equation}
    h(S) = (N/L,[L]).
\end{equation}
See figure \ref{conley} for examples of constructions of pointed sets.
\begin{figure}[h]
    \centering
    \includegraphics[width=\linewidth]{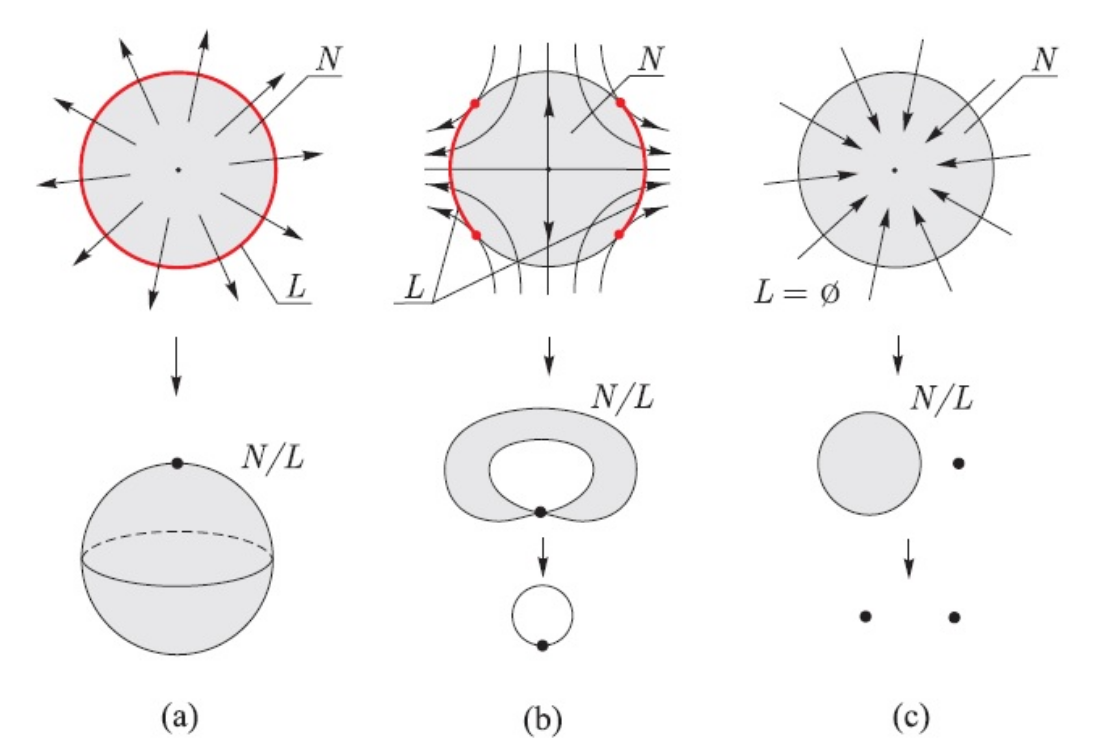}
    \caption{Examples of the construction of the pointed sets $(N/L,[L])$ for sets $N$ containing (a) an unstable fixed point, (b) a saddle fixed point and (c) a stable fixed point. Picture taken from \cite{oololo}}
    \label{conley}
\end{figure}

This definition is called the homotopy Conley index. We can also define a homological Conley index by taking the relative homology groups of $(N/L,[L])$:

\begin{equation}
    CH_*(S)=H_*(N/L,[L]).
\end{equation}

Now we make use of the fact that the Conley Index is a topological invariant. Suppose $N_1$ and $N_2$ are two isolating neighborhoods with the same isolating set for the vector field $\beta$,

\begin{equation}
    \text{Inv}(N_1,\beta)=\text{Inv}(N_2,\beta).
\end{equation}
Since we know that homologies are topological invariants, we get that the Conley Indices calculated for both sets must be equal:

\begin{equation}
    CH_*(N_1,L_1)=CH_*(N_2,L_2).
\end{equation}

Just like the index in the previous section was defined as the index of a curve but could be reinterpreted as the index of the fixed point enclosed by that curve, the Conley index can be seen as the index of the isolated invariant set $S$. If $S$ is a point, said a fixed point, such that the number of unstable eigendirections is $\mu$, then the Conley index is:

\begin{equation}
    CH_*=\mathbb{Z}[\mu].
\end{equation}

The value $\mu$ can be identified as the Morse-index of the fixed point as defined in (\ref{MorseIndex}). This definition is very useful since we have the following interesting property. If $S_1$ and $S_2$ are disjoint sets, and $S=S_1\cup S_2$, then

\begin{equation}
    CH_*(S) = CH_*(S_1) \oplus CH_*(S_2).
\end{equation}

Another way we can use the invariance of the index is if there is a continuous deformation of the vector field. Let us say $\beta$ is a smooth function of a \textit{control paremeter} $r$. Then we have that

\begin{equation}
    CH_*(\text{Inv}(N,\beta(0)) = CH_*(\text{Inv}(N,\beta(r)).
\end{equation}

Continuous deformation of the $\beta$ function will play a fundamental role in bifurcation theory, which will be studied later.


\subsection{The Conley Index of a 2-dimensional quadratic system}

In this section we will see all possibilities for the Conley index of an RG flow with two coupling constants (two RG equations) that have at most quadratic terms. This simple example will be useful to understand how we can use the index to infer about the structure of the system inside a compact set in theory space. The general structure of the system is:

\begin{align}
    \begin{split}
        \beta_1 = & a_{11}\lambda_1 + a_{12}\lambda_2 + b_{111}\lambda_1^2 + b_{122}\lambda_2^2 + b_{112}\lambda_1\lambda_2, \\
        \beta_2 = & a_{21}\lambda_1 + a_{22}\lambda_2 + b_{211}\lambda_1^2 + b_{222}\lambda_2^2 + b_{212}\lambda_1\lambda_2 .
    \end{split}
\end{align}

For simplicity, we choose $N$ to be the disc of radius $r$ and we parametrize the couplings to polar coordinates $(\lambda_1,\lambda_2)=(r\cos \phi,r \sin \phi)$. Then the equation $\beta\cdot n = 0$, where $n$ is the normal vector of the line tangent to $\del N$, has at most 6 solutions. $L$ can then be a set of a maximum of three disconnected intervals $I$ of $\partial N$, and we have 5 possibilities of different Conley indices:

\begin{enumerate}
    \item \textbf{$CH_* = \mathbb{Z}[0]$}, occurs when $L=\emptyset$. \\
    One example is if $N$ has a stable fixed point. 
    \item \textbf{$CH_* = \mathbb{Z}[2]$}, occurs when $L=\partial N$. 
    \\One example is if $N$ has an unstable fixed point.
    \item \textbf{$CH_* = 0$}, occurs when $L=I$. 
    \\One example is if $N$ has no fixed points .
    \item \textbf{$CH_* = \mathbb{Z}[1]$}, occurs when $L=I\cup I$. \\
    One example is if $N$ has a saddle fixed point.
    \item \textbf{$CH_* = \mathbb{Z}[1]\oplus\mathbb{Z}[1]$}, occurs when $L=I\cup I\cup I$. \\
    Examples of this must involve more than on fixed point in $N$.
\end{enumerate}

This list can be seen in \cite{GukovBif}. Notice that the cases in the list are only examples that gives the listed index, different configurations of fixed points inside of $N$ can give the same Conley-Index. So the usefulness of the index is to rule out what can not occur inside of $N$. For example, if the Conley index is different than zero, we know that there must be at least one fixed point. In fact, this is a general rule for the Conley index, not just for the system studied in this section.


\section{Bifurcation Theory}

In studying dynamical systems, one can ask the question of how a system changes with a specific control parameter $r$, i.e., a parameter that we control in the theory. Examples of control parameters in field theories are number of colors, number of flavors, space-time dimension or even the magnetic field in the Ising model. As this parameter is changed, one can have that the system undergoes a bifurcation, a form of a phase transition, at a critical value $r_0$ that fundamentally changes the behaviour of the flux on the phase space. We can classify bifurcation between 'local bifurcations', the ones that involve fixed points and can be studied by linearization methods, and global bifurcations, the ones that happen outside the range of the linear phenomena.

In this section, we will look at some examples of bifurcations that occur in RG-flows, as well as other examples that might be useful for a complete understanding of this topic.

\subsection{Saddle-node bifurcations}

In \cite{ConfLoss}, the author argues that a way for a field theory to 'lose its conformality', i.e., to lose its fixed points, is by what is called a saddle-node bifurcation, which is when two fixed point move along the space and collide, annihilating each other and leaving a 'bottle-neck', a slow moving flux, in their place. We can see this example of a local bifurcation by looking at this classical toy model:

\begin{equation}\label{saddle}
\beta (\lambda ) = r - \lambda^2.
\end{equation}

\begin{figure}[h]
\begin{center}
\includegraphics[width=0.8\linewidth]{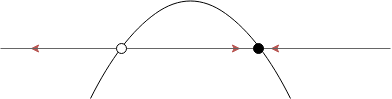}
\caption{ The phase space of $\beta$ for a positive $r$. The arrows indicate the directions of the flow.}
\label{oisitiv}
\end{center}
\end{figure}

The fixed points are $\lambda_\pm = \pm \sqrt{r}$ for $r > 0$ as can be seen in figure \ref{oisitiv}. The fixed points merge into a single one at the critical value $r=0$ and for $r<0$ the fixed points vanish into the complex plane. We can look at the linearized equation

\begin{equation}
\frac{\partial\beta}{\partial \lambda}=  - 2 \lambda
\end{equation}
to understand that the fixed point at $\lambda_+$ ($\lambda_-$) is a stable (unstable) fixed point and that these fixed points become marginal at the critical point before their disappearance.

A useful tool for studying bifurcations is the bifurcation diagram. We plot the position of the fixed point in respect to the control parameter $\lambda (r)$ and we use solid lines to denote stable fixed points and dashed lines to denote unstable fixed points, see figure \ref{putasdas} for the bifurcation diagram of the saddle-node.

\begin{figure}[h]
\includegraphics[width=\linewidth]{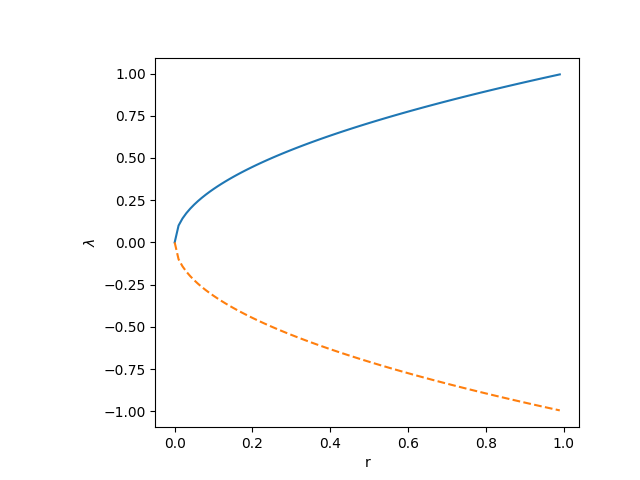}
\caption{Bifurcation diagram of a Saddle-Node Bifurcation. The solid line represents a stable fixed point and the dashed line represents an unstable fixed point.}
\label{putasdas}
\end{figure}

\subsection{Transcritical bifurcation}

Another classical example of a local bifurcation is the \textit{transcritical bifurcation}, see figure \ref{trabstrabsad}. It is characterized by the exchange of the stability properties of two fixed points. An example where this bifurcation can occur is

\begin{equation}
\beta (\lambda ) = - r\lambda + \lambda^2.
\end{equation}

The two fixed points in this system is $\lambda_0 = 0$ and $\lambda_r=r$. The linearized system, given by

\begin{equation}
\frac{\partial\beta}{\partial \lambda} = - r + 2\lambda,
\end{equation}
shows that for $r<0$, $\lambda_0$ is unstable and $\lambda_r$ is stable. When r becomes positive, the stability of both fixed points switches. At the critical value $r_{crit}=0$ the fixed points coincide and they cross through marginality.

\begin{figure}
\begin{center}
\includegraphics[width=0.8\linewidth]{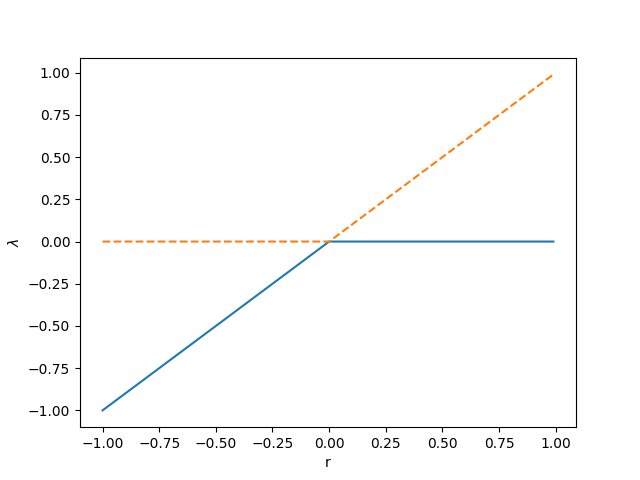}
\caption{Bifurcation diagram of a Transcritical Bifrucation.}
\label{trabstrabsad}
\end{center}
\end{figure}

\subsection{Pitchfork Bifurcation and Unfolding}

Consider the following system

\begin{equation}\label{Trans}
    \beta (\lambda) = r\lambda-\lambda^3.
\end{equation}
Independently of the value of $r$, there is always a trivial fixed point $\lambda_0=0$. For $r>0$, there are other two fixed points at $\lambda_\pm=\pm\sqrt{r}$. Also, by linearization, we see that the trivial point changes its stability properties at the bifurcation point $r=0$.

\begin{figure}
\begin{center}
\includegraphics[width=0.8\linewidth]{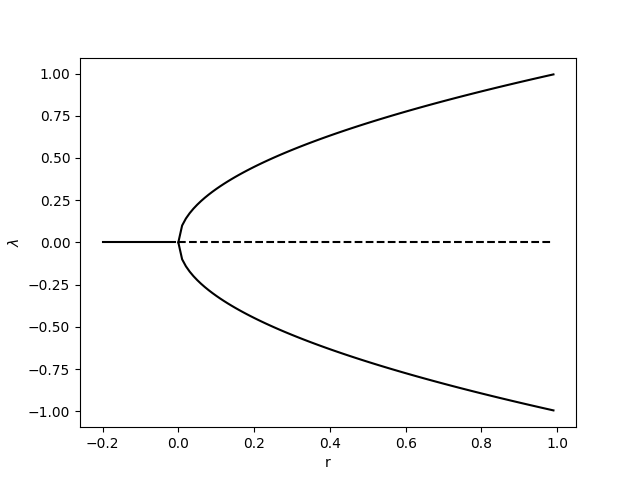}
\caption{Bifurcation diagram of a Supercritical Pichfork Bifrucation.}
\label{supertruper}
\end{center}
\end{figure}

The bifurcation above is known as the \textit{supercritical pitchfork bifurcation}, since there is a three-way bifurcation at the critical value, see figure \ref{supertruper}. Another version of the pitchfork is the subcritical, given by the system 

\begin{equation}
    \beta(\lambda)=-r\lambda+\lambda^3,
\end{equation}
where the stability of the fixed point are inverted from the previous example, see figure \ref{subtrooper}.

\begin{figure}
\begin{center}
\includegraphics[width=0.8\linewidth]{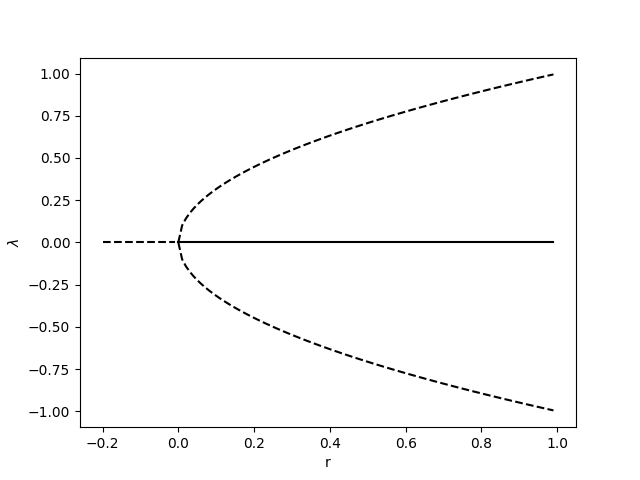}
\caption{Bifurcation diagram of a \textit{Subcritical Pichfork Bifrucation}.}
\label{subtrooper}
\end{center}
\end{figure}

The pitchfork bifurcation is a good example to introduce notions of stability of the bifurcation diagrams. Consider we add a parameter, say $h$, in the equation of a supercritical,

\begin{equation}
    \beta (\lambda) = h + r\lambda-\lambda^3 .
\end{equation}

For $h\neq 0$, what we have is that the pitchfork is separated, the once trivial fixed point does not meet the other two fixed points at a critical value, see figure \ref{imperfectrooper}. For this reason, this system is said to have an 'imperfect bifurcation' and the number $h$ is called an \textit{imperfection parameter}. Depending on the value of $h$, we can have three different scenarios depending of its relation with $h_c=\frac{2r}{2}\sqrt{\frac{r}{3}}$:

\begin{itemize}
    \item $|h|>h_c(r)$, there is only one fixed point in the system,
    \item $|h|=h_c(r)$, there are two fixed points in the system,
    \item $|h|<h_c(r)$  there are three fixed points in the system.
\end{itemize}

\begin{figure}
\begin{center}
\includegraphics[width=0.8\linewidth]{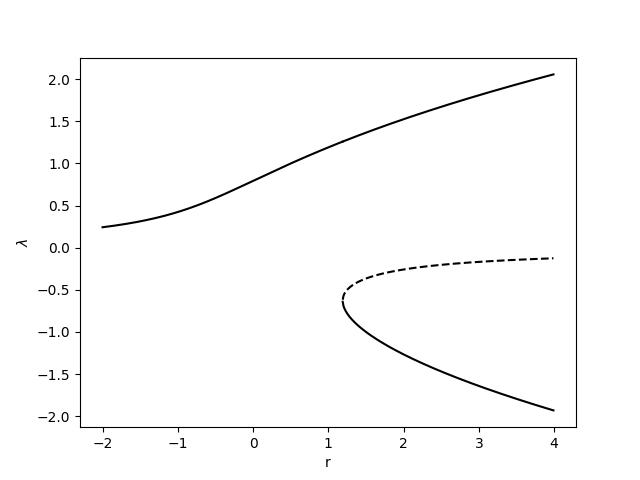}
\caption{Bifurcation diagram of an \textit{Imperfect Supercritical Pitchfork Bifurcation}. Note how the trivial fixed point 'misses' the bifurcation, and thus the bifurcation becomes a saddle-node.}
\label{imperfectrooper}
\end{center}
\end{figure}

The pitchfork is said to have 'codimension 2', codimension denoting the number of parameters needed in order to achieve the bifurcation. In this case, the pitchfork is achieved when $(r,h)=(0,0)$. Of the bifurcations we gave seen so far, the saddle is of codimension one and the transcritical is of codimension two. As seen before, however, both the pitchfork and the transcritical can be written using only one parameter. When this happen, higher codimension bifurcations described with few parameters, are called degenerate, since any small perturbation 'unfolds' the diagram and the tipe of bifurcation is changed. In perturbative RG equations, it is important to study the stability of the bifurcation diagrams for us to see if a higher loop correction in the $\beta$-equations destroys or maintains said bifurcations.

We can also see unfolding happening in the transcritical bifurcation (\ref{Trans}). If we add an imperfection parameter $h$ to the equation, we have two ways the transcritical bifurcation can unfold:

\begin{enumerate}
    \item $h>0$ : Both fixed points never meet, no bifurcation happens,
    \item $h<0$ : Two saddle fixed points occur at different values of $r$.
\end{enumerate}

\begin{figure}[h]
    \centering
    \begin{subfigure}[b]{0.45\textwidth}
        \includegraphics[width=\textwidth]{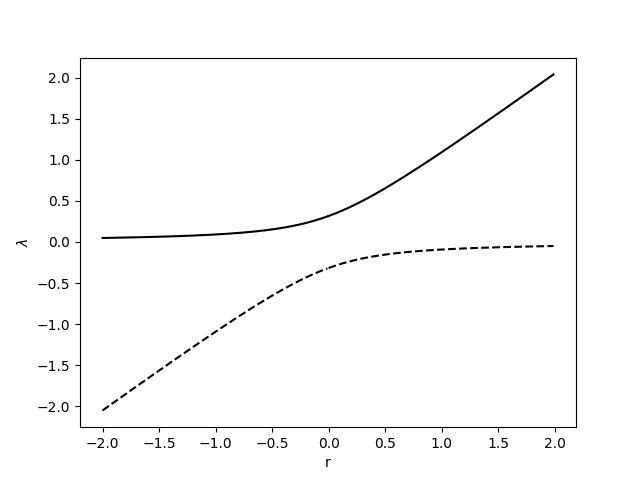}
    \end{subfigure}
    ~ 
    \begin{subfigure}[b]{0.45\textwidth}
        \includegraphics[width=\textwidth]{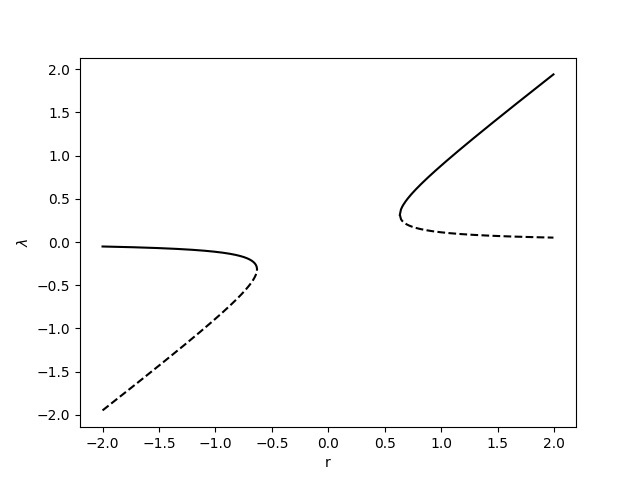}
    \end{subfigure}
    \caption{Unfoldings of the Transcritical bifurcation}\label{tranun}
\end{figure}

\subsection{Bifurcations and limit cycles}

Some bifurcations may lead to of periodic flows in theory space. This phenomenon may help us understand how theories may lose its gradient properties as a continuous parameter is changed. An interesting example, which is going to be studied later, is the one of Efimov Physics, which is believed to have a saddle-node bifurcation that leads to a limit cycle \cite{ConfLossEffi}.

\subsection{Hopf Bifurcation}

The \textit{Hopf bifurcation} is an example of a local bifurcation of codimension 1 that involves a limit cycle. It is easier to see the bifurcation by using radial coordinates in two dimensions, $r^2=\lambda_1^2+\lambda_2^2$ and $\theta = \tan (\frac{\lambda_2}{\lambda_1})$. To avoid confusion we will use the letter $\mu$ for the control parameter. We can see the Hopf bifurcation in the following system

\begin{align}
\label{eqn:eqlabel}
\begin{split}
    \beta_r = & - \mu r + r^3, \\
    \beta_\theta = & 1.
\end{split}
\end{align}

There is a trivial fixed point in $r=0$ for all values of $\mu$, and since $\dot{\theta}$ is never zero, the trivial fixed point is the only one in the system. For positive $\mu$, there is a values of $r$ for which $\beta_r$ is zero, namely $r=\sqrt{\mu}$. This $r$ corresponds to a limit cycles, and as the value of $\mu$ decreases the limit cycle meets the trivial fixed point and is removed (or, if we go from negative to positive $\mu$, we say that there is a birth of a limit cycle at $\mu=0$). Figure \ref{Hopftrooper} show this bifurcation happening.

\begin{figure}[h]
    \centering
    \begin{subfigure}[b]{0.47\textwidth}
        \includegraphics[width=\textwidth]{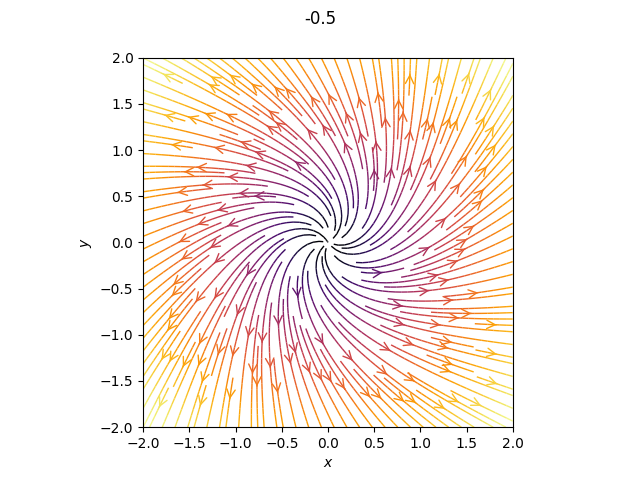}
        \caption{$\mu<0$}
    \end{subfigure}
    ~ 
    \begin{subfigure}[b]{0.47\textwidth}
        \includegraphics[width=\textwidth]{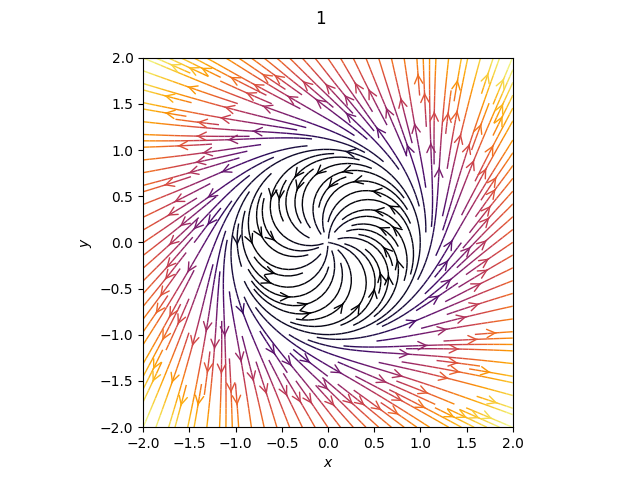}
        \caption{$\mu>0$}
    \end{subfigure}
    \caption{Example of a subcritical Hopf Bifurcation}
    \label{Hopftrooper}
\end{figure}

\subsection{Global bifurcations and Limit Cycles}\label{globalglovalgoaskock}

There are other ways that limit cycles can occur in bifurcations. However, these other ways involve global phenomena and thus are harder to find \cite{Strogatz}. A list of these bifurcations is given below:

\begin{itemize}
    \item Saddle-Node of limit cycles: It occurs when two limit cycles meet and annihilate. It can be seen as a modification of the Hopf Bifurcation, where the negative branch of the equation to find the limit cycles $\beta_r=0$ is not neglected.
    \item Homoclinic Bifurcation: It occurs when a limit cycles collides with a saddle point. At the critical value of the bifurcation, the flow becomes what is know as a 'homoclinic cycle', which is an infinite period flow.
    \item Saddle-node Infinite Period (SNIPER): A mix between global and local bifurcation. It happens when two fixed points meet and annihilate, leaving a limit cycle in its place.
\end{itemize}

A system that shows the Saddle-Node of cycles is:
\begin{align}\label{cacete}
\begin{split}
    \beta_r = & - \mu r + r^3 - r^5,\\
    \beta_\theta = & 1.
\end{split}
\end{align}

For this system, $\beta_r$ has two zeros, other than the trivial one, for $\mu$ in the range $0<\mu<1/4$, and at the exact critical value $\mu_c=1/4$ the two zeros merge and annihilate. This implies that $\beta_r$ has a saddle-node bifurcation. In the two-dimensional space we see that the saddle-node is between limit cycles. See figure \ref{dasdasdsadsadasdqweqrqrzx}.

\begin{figure}
\centering
\includegraphics[width=0.75\linewidth]{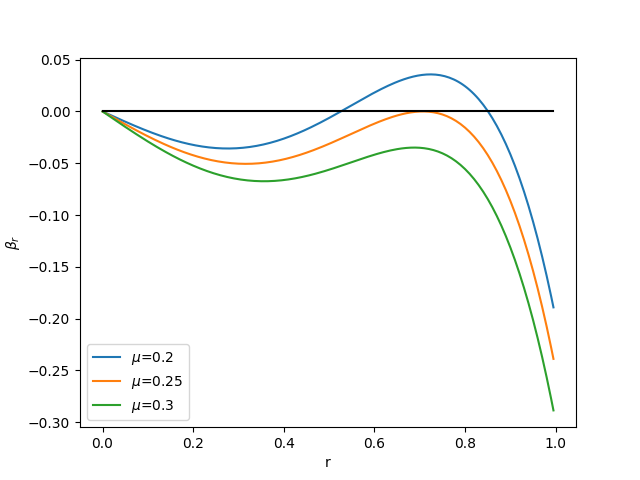}
\caption{Graphs of $\beta_r(r)$ of the system in (\ref{cacete}) for different values of $\mu$, below, above and at exactly the critical value $\mu_c=1/4$. This shows a saddle-node of the limit cycle when $\mu=0.25$.}
\label{dasdasdsadsadasdqweqrqrzx}
\end{figure}

One example of SNIPER bifurcation is the system:
\begin{align}
\begin{split}
    \beta_r = & r - r^3, \\
    \beta_\theta = & \mu + \cos(\theta ).
\end{split}
\end{align}

In this example, there are saddle-node bifurcations at $\mu=\pm 1$, and in the region $|\mu| > 1 $ there is a limit cycle in their place. See figure \ref{ASDKJOASUBDVASTGIDHOJIAPSKDASNHDGYTASGYHJKM}.

\begin{figure}
    \centering
    \begin{subfigure}[b]{0.6\textwidth}
        \includegraphics[width=\textwidth]{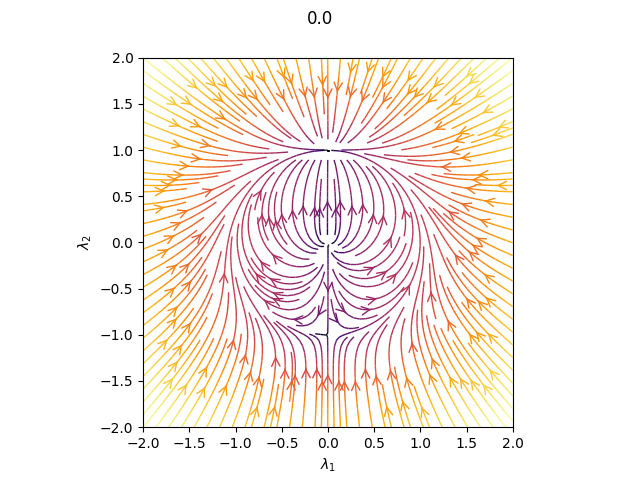}
        \caption{$\mu=0$}
    \end{subfigure}
    ~ 
    \begin{subfigure}[b]{0.6\textwidth}
        \includegraphics[width=\textwidth]{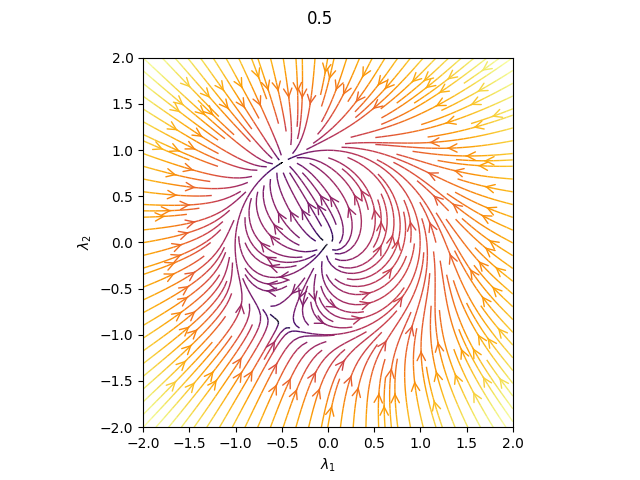}
        \caption{$\mu=0.5$}
    \end{subfigure}
    \begin{subfigure}[b]{0.6\textwidth}
        \includegraphics[width=\textwidth]{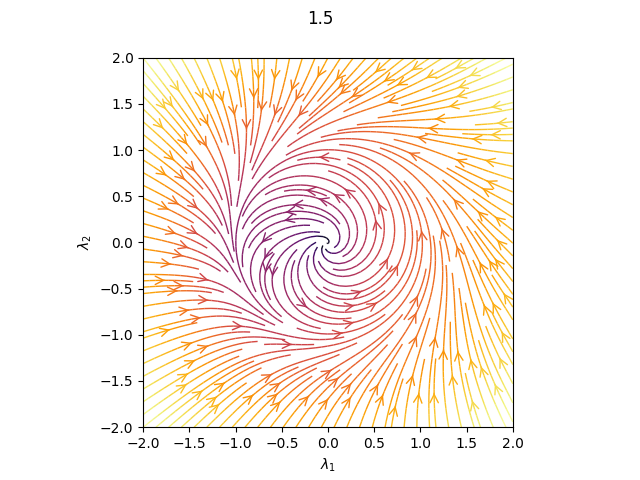}
        \caption{$\mu=1.5$}
    \end{subfigure}
    \caption{Example of a SNIPER Bifurcation. The fixed points move along the circle with radius 1 until they meet and annihilate, leaving a limit cycle in its place.}
    \label{ASDKJOASUBDVASTGIDHOJIAPSKDASNHDGYTASGYHJKM}
\end{figure}


\newpage

\subsection{Marginality Crossing}\label{MArg}

Many types of bifurcations seen so far display a common feature: one of the fixed points becomes marginal at the fixed point, i.e., one of the eigenvalues of the Jacobian of $\beta$ vanishes. In terms of CFT's, we say that the operator related to the eigendirection that has a vanishing eigenvalue crosses through marginality. Marginality crossing is one of the main reasons \cite{GukovBif} considers the study of bifurcations in RG flows. Taking the saddle-node bifurcation (\ref{saddle}) as an example, the Jacobian is just the derivative of $\beta$:

\begin{equation}
    J = \frac{\partial\beta}{\partial \lambda}=  - 2 \lambda .
\end{equation}

We see that for the fixed points $\lambda_\pm = \pm \sqrt{r}$ the Jacobian goes to zero as the control parameter approaches the bifurcation point $r=0$. This means, in the field theory, that the operator related to $\lambda$ that was irrelevant for $r>0$ becomes exactly marginal when the fixed points merge. In dynamical systems, as seen in section \ref{Linear}, marginal fixed points can take various forms in the phase space. For the case of the saddle-node bifurcation, the fixed point at the bifurcation becomes a semi-stable fixed point, i.e., from one direction of the flow the fixed point is stable and for the other direction the fixed point is unstable. See figure \ref{ACOSKOKDAOKdo}.

\begin{figure}[h!]
\centering
\includegraphics[width=0.7\linewidth]{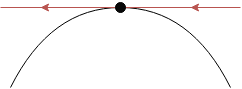}
\caption{When $r=0$ the operator becomes marginal and dynamically the fixed point is said to be semi-stable. The black line shows $\beta(r)$.}
\label{ACOSKOKDAOKdo}
\end{figure}

Also, Jacobians of different bifurcations have different rates of vanishing. While for the saddle-node this rate scales like $J \alpha \sqrt{r}$, for transcritical and pitchfork bifurcations the rate is $J\alpha r$. So, if we know the relation of the conformal dimension (remember that $eigenvalue(J)_i=\Delta_i-d$) of an operator with respect to the control parameter, we can identify the possible bifurcation in the flow that will occur.

\newpage


\section{RG Flows as Dynamical Systems}

\subsection{Reconstructing the theory space from RG flows}\label{dasdasdasdxzc}

This subsection is in part based on section 2 of \cite{GukovCount}. In theory, we could construct the space $\Tau$ of all quantum field theories that preserve the symmetries as the RG flow we are studying. Such constructions may lead to infinite dimensional theories, but we may have the case that not all dimensions are ``relevant" for the flow.

Without losing generality, let us say we are studying the flow between two fixed points $\lambda_{UV}$ and $\lambda_{IR}$. We define an index that counts the number of relevant eigendirections, i.e., the number of relevant operators:
\begin{equation}\label{MorseIndex}
\mu(\lambda)=\#(\text{relevant spin-0 } \mathcal{O}).
\end{equation}
Then we can define
\begin{equation}
\mathcal{I}(\lambda_{IR})\coloneqq \left\lbrace \lambda_t \in \Tau | \lim_{t\rightarrow \infty} \lambda_t = \lambda_{IR} \right\rbrace,
\end{equation}
as the irrelevant manifold of $\lambda_{IR}$ and
\begin{equation}
\mathcal{R}(\lambda_{UV})\coloneqq \left\lbrace \lambda_t \in \Tau | \lim_{t\rightarrow - \infty} \lambda_t = \lambda_{UV} \right\rbrace,
\end{equation}
as the relevant manifold of $\lambda_{UV}$.

Now we can define the space of RG flows between the fixed points, called moduli space, as
\begin{equation}
\mathcal{M}(\lambda_{UV},\lambda_{IR})=\mathcal{R}(\lambda_{UV})\cup\mathcal{I}(\lambda_{IR}).
\end{equation}
The moduli space is usually finite-dimensional since, if $\mathcal{R}(\lambda_{UV})$ and $\mathcal{I}(\lambda_{IR})$ intersect tranversely,
\begin{equation}
\text{dim } \mathcal{M}(\lambda_{UV},\lambda_{IR}) = \mu(\lambda_{UV}) - \mu(\lambda_{IR}).
\end{equation}
In other terms, the dimension of the space of flows is the number of operators that were relevant in the UV and are not relevant in the IR.


\subsection{C-Theorem}

In 1986, Zamolodchikov \cite{ZamOG} achieved a breakthrough in the study of RG flows. He proved the following theorem

\begin{theorem}\textbf{Zamolodchikov's c-Theorem}\\
For a 2d field theory we have the following
\begin{enumerate}
    \item There exists a function $c(\lambda )\geq 0$ such that 
    $$\dfrac{d}{dt}c = \beta_i \dfrac{\del}{\del \lambda_i} c (\lambda) \leq 0, $$
    where the equality is only true at the fixed points of $\beta$.
    \item Fixed points are stationary for $c$, namely
    $$\beta_i(\lambda^*) = 0 \rightarrow \dfrac{\del c}{\del \lambda^*}=0.$$ Also fixed points are conformal field theories with central charge $\tilde{c}$ that appears in the Virasoro Algebra.
    $$[L_m,L_n] = (m-n)L_{m+n}+\frac{\tilde{c}}{12}(m^3-m)\delta_{m+n.0}.$$
    \item The value of $c(\lambda)$ at a fixed point $\lambda^*$ is equal to the central charge of the CFT:
    $$c(\lambda^*)=\tilde{c}(\lambda^*).$$
\end{enumerate}
\end{theorem}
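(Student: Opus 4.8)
The plan is to work in the two-dimensional Euclidean theory with complex coordinates $z = x_1 + i x_2$ introduced above, and to track the scale dependence of the stress-tensor two-point functions. Writing $T \equiv T_{zz}$ and $\Theta \equiv T_{z\bar z}$ for the trace component (so that $\Theta \propto T^\mu_\mu$ and $\Theta = 0$ precisely at a conformal point), rotational invariance and dimensional analysis fix these two-point functions up to three scalar functions of the dimensionless combination built from $R^2 = z\bar z$:
\begin{align}
\langle T(z,\bar z)\,T(0)\rangle &= \frac{F}{z^4}, \\
\langle T(z,\bar z)\,\Theta(0)\rangle &= \frac{G}{z^3\bar z}, \\
\langle \Theta(z,\bar z)\,\Theta(0)\rangle &= \frac{H}{z^2\bar z^2}.
\end{align}

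First I would impose the conservation law $\del_\mu T^{\mu\nu} = 0$, which in complex coordinates collapses to $\del_{\bar z} T + \tfrac14 \del_z \Theta = 0$ (and its conjugate), the factor following from $\Theta \propto T^\mu_\mu$. Inserting this identity into the correlators above and matching powers of $z$ and $\bar z$ produces a closed pair of first-order equations relating $F$, $G$, $H$ and their derivatives with respect to $\log R^2$, namely $\dot F + \tfrac14(\dot G - 3G) = 0$ and $\dot G - G + \tfrac14(\dot H - 2H) = 0$, with $\,\dot{}\, \equiv R^2\,d/dR^2$. Introducing Zamolodchikov's combination
\begin{equation}
C = 2F - G - \tfrac38 H,
\end{equation}
a short elimination using these two relations yields the single identity
\begin{equation}
\dot C = -\tfrac34 H,
\end{equation}
which is the crux of the whole argument: the scale derivative of $C$ is controlled entirely by the two-point function of the trace.

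The three claims then follow in turn. For claim (1), reflection positivity (unitarity) of the Euclidean correlators forces $\langle \Theta(z)\Theta(0)\rangle \geq 0$ and hence $H \geq 0$, so $C$ is monotonic under dilations; identifying $C$ with the function $c(\lambda)$ and converting the scale derivative into the flow derivative $\beta_i\,\del/\del\lambda_i$ (with the sign fixed by the orientation of the flow toward the infrared) reproduces $dc/dt = \beta_i\,\del_{\lambda_i}c \leq 0$, with equality exactly when $H = 0$. For claim (2), a fixed point is a CFT with traceless improved stress tensor, so $\Theta = 0$, giving $H = 0$ and $\dot C = 0$; stationarity $\del c/\del\lambda^* = 0$ is then immediate since $\beta_i(\lambda^*) = 0$, and the fixed-point theory is governed by the Virasoro algebra with its central charge $\tilde c$. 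For claim (3), at the fixed point $G = H = 0$ while $F$ reduces to the constant in the $TT$ OPE stated earlier, $\langle T(z)T(0)\rangle = (\tilde c/2)\,z^{-4}$; hence $F = \tilde c/2$ and $C = 2F = \tilde c$, establishing $c(\lambda^*) = \tilde c(\lambda^*)$.

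The hard part will be the two non-algebraic inputs. First, the positivity $H \geq 0$ is where unitarity is essential --- it is exactly this step that fails for non-unitary theories --- and making it rigorous calls for the reflection-positivity (Osterwalder--Schrader) properties of the correlators rather than the formal manipulations above. Second, one must justify the dictionary between the geometric derivative $R\,d/dR$ and the RG vector field $\beta_i\,\del_{\lambda_i}$: although $C$ is defined through correlators at a fixed separation, one has to argue that it descends to a genuine single-valued function on the theory space $\Tau$ whose variation along the flow is generated by the beta functions. Deriving the two conservation relations and the precise coefficient $-\tfrac34$ is, by contrast, routine once the conservation law is written in the $z,\bar z$ variables.
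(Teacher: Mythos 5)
Your proposal is correct and is essentially the proof the paper gives: the paper's auxiliary functions $C$, $H_i$, $G_{ij}$ and the combination $c = C + 4\beta^i H_i - 6\beta^i\beta^j G_{ij}$ with $\beta^i\del_i c = -12\,\beta^i\beta^j G_{ij}$ are exactly your $F$, $G$, $H$ and $C = 2F - G - \tfrac38 H$ with $\dot C = -\tfrac34 H$, once one expands $\Theta = \sum_i \beta_i\Phi_i$ so that $H \propto \beta^i\beta^j G_{ij}$ and positivity of $\langle\Theta\Theta\rangle$ becomes positive-definiteness of the metric $G_{ij}$. Your identification of the two nontrivial inputs (reflection positivity and the passage from $R\,d/dR$ to $\beta^i\del_{\lambda_i}$ via the Callan--Symanzik equation) matches where the paper leans on \cite{ZamOG} and \cite{Naga} for the details.
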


As we can see, the $c$-function resembles the Lyapunov function of systems like those seen in sectiob \ref{Lyplyp}. As it was the case for potential functions (and for Lyapunov functions as well), there is no general method for finding $c$-functions or proving that they do not exist. Usually, to prove that such a function exists, one needs to actually find one using nothing more than intuition. 

In conformal field theories in 2d, we have the fact that there is an energy-momentum tensor $T_{\mu\nu}$ that satisfies $\del_\mu T_{\mu\nu}=0$. We also define  $z =x^1+ix^2$ and $T=T_{zz}$, $\Theta = T_{z\bar{z}}$. The latter definition can be expended in the local fields $\Phi_i(x)$

\begin{equation}
    \Theta = \sum_i\beta_i(\lambda)\Phi_i(x).
\end{equation}

To prove the theorem above, Zamolodchikov defines the auxiliary correlation functions
\begin{align}
    C(\lambda) = & \left. 2z^4\left\langle T(x)T(0)\right\rangle \right|_{x^2=x^2_0},\\
    H_i(\lambda) = & \left. z^2x^2\left\langle T(x)\Phi_i(0)\right\rangle \right|_{x^2=x^2_0},\\
    G_{ij}(\lambda) = & \left. x^4\left\langle\Phi_j(x)\Phi_i(0)\right\rangle \right|_{x^2=x^2_0}.
\end{align}
where $x_0$ is an arbitrary scale set to 1. Through various manipulations of the Callan-Symanzik equations we arrive at
\begin{align}\label{essasporra}
    \frac{1}{2}\beta^i\del_i C = & -3\beta^iH_i +\beta^i\beta^k\del_kH_i + \beta^k(\del_\beta^i)H^i, \nonumber\\
    \beta^k\del_k H_i + (\del_i\beta^k)H_k - H_i = & -2\beta^kG_{ij} +\beta^j\beta^kG_{ij}+\beta^j(\del_i\beta^k)G_{jk}+\beta^j(\del_j\beta^k)G_{jk}.
\end{align}
To fill in the details \cite{Naga} is a good resource. Now we define the following function

\begin{equation}
    c(\lambda) = C(\lambda ) + 4\beta^iH_i(\lambda ) - 6\beta^i\beta^jG_{ij}(\lambda ).
\end{equation}
Plugging in (\ref{essasporra}) we find

\begin{equation}
    \beta^i\del_i c = -12 \beta^i\beta^jG_{ij},
\end{equation}
where $G_ij$ is positive definite \cite{ZamOG} and therefore $c$ obeys the relations in assertion 1 of Zamolodchikov's $c$-Theorem. The prove of the other two assertions can be found in \cite{ZamOG}. 

The existence of a $C$-function has a tremendous impact in our understanding of RG flows. As a start, it gives an idea of a ``natural order'' of flows, where the function is always decreasing. Also since the central charge measures the number of degrees of freedom of a CFT, we see that this natural order is in the direction of the loss of information. This is the idea given by Zamolodchikov of irreversibility, starting from a theory with more degrees of freedom (the theory in the UV) the flow will lead us to a theory with less degrees of freedom (the theory in the IR), see figure \ref{asdaodmvcxnhskdfhjjsdkgbvfhnsho}.

\begin{figure}[h]
\begin{center}
\includegraphics[width=0.65\linewidth]{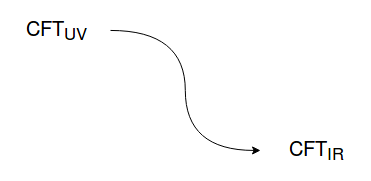}
\caption{An example of an ordinary RG flow between two fixed points. The order of the flow is induced by the diminishing of the $c$-function from an UV fixed point to an IR fixed point.}
\label{asdaodmvcxnhskdfhjjsdkgbvfhnsho}
\end{center}
\end{figure}

Also, the $c$-function gives us a way to make an analogy with RG flows and thermodynamics. Thermodynamical systems have a quantity called entropy, and entropy always increase or stays the same with time, the reverse of the $c$-function. In fact, a quantity called \textit{entanglement entropy} can be defined in field theories and we can use this to construct $c$-functions of RG flows.

For dimension higher than 2, alternative theorems were constructed. For example, for 3 dimension we have the $F$-theorem and for 4 dimensions we have the $a$-theorem. Since the difference between these theorems are not the focus of this thesis, we are going to collectively call all of these theorems as $C$-theorems. We can divide these theorems in three versions \cite{BIWW}.

\begin{enumerate}
\item Weak Version:

There exists a function $C:\Tau\rightarrow\Re$ such that $C(\lambda_{UV})>C(\lambda_{IR})$.

\item Strong Version:

$C$ is decreasing along the flow: 
\begin{equation}
\frac{dC}{dt}(\lambda)\leq0,
\end{equation}
and if $\beta(\lambda^*)=0$, then $\dfrac{dC}{dt}(\lambda^*)=0$.

\item Strongest Version:

The flow is a gradient system
\begin{equation}
\beta(\lambda)=\nabla C(\lambda).
\end{equation}
\end{enumerate}

The strongest version is enough to rule out a number of nonlinear behaviours like limit cycles. The strong version is very similar to the definition of the Lyapunov function, with a small relaxation: outside the fixed point we have $\dot{C}\leq 0$, while the Lyapunov function $V(\lambda)$ assumes $\dot{V}<0$. This relaxation is enough to allow the possibility of RG flows where the strong version of the $C$-Theorem is valid and still one could find limit cycles.

The weak and strong forms of the $C$-theorem have been proven in two \cite{ZamOG} and four \cite{Komargodski} dimensional QFT's. The strongest version is the hardest to prove, despite the fact that it has compelling non-perturbative arguments for a wide range of RG flows in two dimensions \cite{RFLM} and in four \cite{BIWW},\cite{AGPZ}. In fact, \cite{GradFormula} provides a general gradient formula with relative few assumptions.

There are however counter-examples, \cite{FGCDec} found a $d=4$ unitary QFT (as well as other examples in $d=4-\epsilon$) that exhibits a limit cycle in the theory space. One interesting fact that they discovered in \cite{FGCNov} is that conformal symmetry is preserved around the orbit, giving an example of a CFT where $\beta = 0$, a fact that was suggested earlier by \cite{JO1990}.
.

\subsection{RG cycles}

To see how RG flows with limit cycles can appear we follow the constrution made by Jack and Osborn in \cite{JO1990} following the notation in \cite{JO2000} and \cite{FGCSusy}. 

They first introduce the ideia to consider the coupling constants $\lambda$ as dependent of the spacetime coordinates. This is usefull to find expressions of quantities like the stress-energy tensor in terms of a functional derivative of the generating functional $W$ in the form

\begin{equation}
\left\langle T_{\mu\nu}(x)\right\rangle=\frac{2}{\sqrt{-g}}\frac{\delta W}{\delta \lambda^{\mu\nu}(x)},
\end{equation}

\begin{equation}
\left\langle \left[ \mathcal{O}_i(x) \right]\right\rangle=\frac{1}{\sqrt{-g}}\frac{\delta W}{\delta \lambda^{i}(x)}.
\end{equation}

Equivalent constructions were made by Komargodski \cite{KoCons} and Luti, Polchinski and Rattazzi \cite{LPR} in order to prove the $a$-theorem, the analog to the $c$-theorem for four-dimensional QFT`s, and to find the assymptotic behaviour of four-dimensional RG-Flows. In any case, the coupling's dependence on space-time will lead us to new counterterms. 

Let us consider the following unnormalized Lagrangian

\begin{equation}
\mathcal{L} = \frac{1}{2} \partial^\mu\ \phi_{0a} \partial_\nu \phi_{0a} - \frac{1}{4!}\lambda^0_{abcd}\phi_{0a}\phi_{0b}\phi_{0c}\phi_{0d},
\end{equation}
and the right transformations for the Lagrangian to be symmetric under a flavor symmetry $SO(n)$. One interesting counter term to make this Lagrangian finite is
\begin{equation}
\mathcal{L}_{ct}=\partial^\mu \lambda_I(N_I)_{ab}\phi_{0b}\partial_\mu \phi_{0a},
\end{equation}
where $I=\left\lbrace abcd \right\rbrace$ and $N^I$ is an element of the Lie algebra of $SO(n)$.

We can introduce gauge fields by substituting $\partial_\mu$ with $D_\mu=\partial_\mu+A_{0\mu}$, and their renormalizations are
\begin{equation}
A_\mu=A_{0\mu} - N_I(D_\mu \lambda)_I.
\end{equation}

Using this together with the equations of motions, Jack and Osborn get the equation

\begin{equation}\label{porracacete}
T^\mu_\mu = (\beta_I-(S\lambda)_I)\left[\mathcal{O_I}\right] - ((1+\gamma+S)\phi)\frac{\delta}{\delta \phi}S_0,
\end{equation}
where $S=-\lambda_IN_I^1$, $N_I^1$ is the residue of the pole in $N_I$.

From the equation and the fact that conformal theories are characterized by $T^\mu_\mu=0$, we have CFT's when $\beta = S\lambda$. The cases where $S=0$ are related to fixed points, and in a different direction \cite{FGCNov} found that S is zero in fixed points. The complex case happens when $S\neq 0$, and conformal theories are of the form $\beta=Q\lambda$, where $Q$ is an element of an Lie algebra, and this is the case where cycles appear.

If we define

\begin{equation}
    B_I = \beta_I - (S\lambda )_I,
\end{equation}
equation (\ref{porracacete}) resembles an equation without the possibility of RG cycles for a dynamical system given by $B_I$:

\begin{equation}
T^\mu_\mu = B_I\left[\mathcal{O_I}\right] - ((1+\gamma+S)\phi)\frac{\delta}{\delta \phi}S_0.
\end{equation}

One question we may ask: How does the $c$-function behaves in the presence of a limit cycle? In \cite{FGCNov} the authors compute that the function

\begin{equation}
    \tilde{B} = a +\frac{1}{8}B_Iw_I,
\end{equation}
where $a$ is the trace anomaly in 4 dimensions and $w$ is a function of the coupling that comes from consistency conditions\cite{FGCNov}. This function has the property of a weak $c$-function

\begin{equation}
    \tilde{B}_{UV} > \tilde{B}_{IR},
\end{equation}
but \textbf{in} the limit cycle we can not have any monotonically decreasing function of the couplings. This is obvious since periodic flows will pass through the same point multiple times as the flow progresses, and after 1 revolution done in time $t$ we have

\begin{equation}
    \beta (0) = \beta (t),
\end{equation}
and therefore we need to have that

\begin{equation}
    c(\beta(0)) = c(\beta(t)).
\end{equation}
Since the $c$-function is never increasing, the only possibility we have is that the $c$-function is constant around the cycle.

In our analogy with thermodynamics, cycles resemble isentropic process, i.e., a process where the entropy is constant. Isentropic process are reversible and adiabatic at the same time and the isentropic flows in fluid dynamics have no energy added to it and there is no dissipative effects like friction along the flow.

\begin{figure}[h]
\begin{center}
\includegraphics[width=0.5\linewidth]{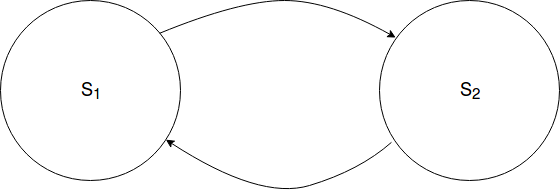}
\caption{A reversible process in thermodynamics. If the cycle is adiabatic, entropy doesn't change and the process is said to be isentropic.}
\end{center}
\end{figure}

\subsubsection{Cycles as ambiguities in RG flows}

In \cite{LPR} the authors revise the limit cycles above and they understand then as ambiguities in the Callan-Symanzik equation that come from the flavor symmetries. To see this, we first start with the renormalized Lagrangian of a theory with $N$ scalar fields.

\begin{equation}
    \mathcal{L}_R = \frac{1}{2}\del\phi^i\del\phi^i - \frac{\lambda_{ijkl}}{4!}\phi^i\phi^j\phi^k\phi^l.
\end{equation}

The Callan-Symanzik equations staes that correlation functions are independent of a rescaling of the fields. We can see that by defining a rescaling like

\begin{equation}
    \phi^{i\prime} = \xi^i_a\phi^a.
\end{equation}

With this the Callan-Symanzik equation becomes \cite{LPR}:

\begin{equation}
\left(\frac{\partial}{\partial t}+\beta_i\frac{\partial}{\partial\lambda_i}\right)<\phi^{i_1}...\phi^{i_n}>= \gamma^{i_1}_k<\phi^{k}...\phi^{i_n}>+...+\gamma^{i_n}_k<\phi^{i_1}...\phi^{k}>
\end{equation}
where 
\begin{equation}
    \gamma^i_j\xi^j_a(\lambda) = \dfrac{d}{dt}\xi_i^a.
\end{equation}

We are going to consider that the flow is equivalent to a rotation

\begin{equation}
    \lambda (t) = R(t)\bar{\lambda}.
\end{equation}
From the fact that $R$ is a rotation we have that $R^TR=\mathbb{I}$. Then by defining the transformation
\begin{equation}
\phi^{i\prime} = R(t)^i_ j\phi^{j}
\end{equation}
we have that the Lagrangian becomes

\begin{equation}
\mathcal{L}_R = \frac{1}{2}\del\phi^{i\prime}\del\phi^{i\prime} - \frac{\bar{\lambda}_{ijkl}}{4!}\phi^{i\prime}\phi^{j\prime}\phi^{k\prime}\phi^{l\prime}.
\end{equation}

For this Lagrangian $\bar{\lambda}$ does not depend on the RG-time and so this is a fixed point. We can also see the effects of this rescaling on the S matrix definded before in (\ref{porracacete})

\begin{equation}
    S_I = \lambda_IN_I^1.
\end{equation}

We now assume $R$ can be expanded in poles, $R = 1 + R_1/\epsilon + ...$. The effect of the transformation on $N_I^1$ can be calculated to be \cite{FGCNov}
\begin{equation}
    N_I^1 \rightarrow N_I^1 - \del_IR_1,
\end{equation}
so the effect of $S$ is

\begin{equation}
    S \rightarrow S + \lambda_I\del_IR_1.
\end{equation}

We can find a gauge such that $\lambda_I\del_IR_1 = - S$. In this case, $B_I = \beta_I$ and the limit cycles are rescaled to fixed points. 

\subsection{Multi-Valued C-functions}

Curtright \textit{et al.} in \cite{Folklore} make an argument against theorem \ref{Teo1}. They claim they have found an example of a gradient flow with a periodic trajectory.

The example is the following RG flow with only one relevant eigendirection:
\begin{equation}\label{FolkModel}
\beta=\pm\sqrt{1-\lambda^2}.
\end{equation}
This flow is equivalent with the ``Russian doll Superconductivity Model". The solution for this model described by the authors resembles a simple harmonic oscillator, the positive branch is an oscillator with a ``right movement" until it hits $\lambda=1$, then the flow moves to the negative branch and is left moving until it hits $\lambda=-1$, where the cycle repeats itself.

The $C(\lambda)$ function for the system is
\begin{equation}
C_N=-\frac{\pi}{4}(1+2N)-(-1)^N\left(\frac{1}{2}\arcsin(\lambda)+\frac{1}{2}\lambda\sqrt{1-\lambda^2}\right),
\end{equation}
and $N$ represents the number of times the flow has turned. We have
\begin{equation}
\beta(\lambda)=\frac{d\lambda}{dt}=(-1)^N\sqrt{1-\lambda^2} = - \frac{d}{d\lambda}C_N(\lambda),
\end{equation}
and $C$ is monotonically decreasing along the flow
\begin{equation}
C(t)=-\frac{1}{2}(t-\cos(t)\sin(t)) \quad,\quad \frac{dC}{dt}\leq 0.
\end{equation}
Again, we see that the strong version of the $C$-theorem is not incompatible with periodic trajectories, the strongest version is. For this model, the strongest version is not valid since the $C$-function fails to be single-valued in theory space, and so the flow as described by the authors is not a gradient flow as it was defined earlier. 

In fact, we know for sure that vector fields on the line do not oscillate, since the solutions will always be monotonic and thus the turning points at $\lambda = \pm1$ are impossible. Since RG flows are a system of first order differential equations, a system with only one relevant operator for the entire theory space can not have periodic flows.

What \cite{Folklore} may suggest us is a relaxation of the strongest version of the $C$-theorem which allows $C$ to be multi-valued. This version could be useful when studying systems such as the one presented above.


\newpage

\newpage




\chapter{Examples}

In this chapter we review some examples of the ideas encountered so far.

\section{$O(n)$ model}

Given $n$ scalar fields $\phi_i$ organized as
\begin{equation}
    \Phi = (\phi_1,...,\phi_n),
\end{equation}
we consider Lagrangians which are invariant under the transformation
\begin{equation}
    \Phi_a(x) = R^b_a\Phi_b,
\end{equation}
where $R$ is an element of the $O(n)$ group, i.e., it is an operator such that $R^TR=\mathbb{I}$. A general action that is invariant under $O(n)$ is of the form:

\begin{equation}
    S [\Phi ] = \int d^d x \left[ \nabla \Phi \cdot \nabla \Phi + t \Phi \cdot \Phi + u\sum_{ij}\phi_i^2\Phi_j^2 + ... \right].
\end{equation}
This theories are known as \textit{$O(n) models$}. They have important applications in statistical field theory. For example, the $O(2)$ model (also known as the XY-model) describes magnets such that the spins can rotate in a plane. It is convienient for this case to rewrite the fields as:
\begin{equation}
    \psi(x) = \phi_1(x) + i\phi_2(x).
\end{equation}
Now the action
\begin{equation}
    S [\Phi ] = \int d^d x \left[ \nabla \bar{\psi} \cdot \nabla \psi + t \bar{\psi}\psi + u\sum_{ij}(\bar{\psi}\psi)^2 \right],
\end{equation}
becomes invariant under $U(1)$ transformations $\psi \rightarrow e^{i\alpha}\psi$. Another physical system that can be described by this is a Bose-Einstein condensate $\cite{Tong}$ where $\psi$ represents ``off-diagonal long range order in one-particle density matrix''.

The $O(3)$ model (also known as the \textit{Heinsenberg model}) describes spins that can rotate in a three dimensional space.

\subsubsection{The $\epsilon$ expansion}

For $d=2$ the theory has no interacting critical points with $O(n)$ symmetry \cite{Tong}. In $d=3$ the interacting critical point depends on $n$. We can see $n$ as a control parameter in the RG-equations. We are more interested in the theory in 
\begin{equation}
    d=4-\epsilon
\end{equation}
dimensions. The RG equations can be calculated to be, at a leading order in $\epsilon$ and $u$ \cite{ON(N)}:
\begin{align}
\begin{split}
    \beta_u = & \epsilon u - -8(n+8)u^2 , \\
    \beta_t = & 2t - 8(n+2)ut .
\end{split}
\end{align}

There are two fixed points, the trivial one at $u=0$ and one at 
\begin{equation}
    u^*=\frac{\epsilon}{8(n+8)}.
\end{equation}. 
We see that in the limit $n\rightarrow\infty$, $u^*\rightarrow 0$ and the theory is transformed into a theory with only one trivial fixed point. The flow for $t$ has only one critical value at $t=0$.

\subsubsection{Cubic Symmetry}

The RG flow becomes more interesting in the case that we add a symmetry breaking form the $O(n)$ group to its cubic subgroup. In the examples of XY and Heinsenberg models before, this \textit{cubic symmetry breaking} describes when the magnets are arranged in a crystal of cubic symmetry. Near four dimensions, the only relevant term to be added to the $O(n)$ model is of the form
\begin{equation}
    v\sum_i\phi_i^4,
\end{equation}
and so the actions becomes

\begin{equation}
    S [\Phi ] = \int d^d x \left[ \nabla \Phi \cdot \nabla \Phi + t \Phi \cdot \Phi + u\sum_{ij}\phi_i^2\Phi_j^2 + v\sum_i\phi_i^4 \right].
\end{equation}

This model is used as the introductory example in \cite{GukovBif} to study bifurcations in RG flows. The beta functions are calculated in \cite{ON(N)} and they are, up to quadratic order, given by
\begin{align}\label{ONONON}
\begin{split}
    \beta_u = & \epsilon u - 8(n+8)u^2 - 48 uv , \\
    \beta_v = & \epsilon v - 96uv - 72v^2 , \\
    \beta_t = & 2t - 8(n+2)ut - 24vt .
\end{split}
\end{align}

Both $\beta_u$ and $\beta_v$ are independent of $t$, so we can analyse the RG flow in the $uv$-plane alone. This system has four fixed points. They are 

\begin{itemize}
    \item Gaussian (G): it corresponds to the trivial fixed point at $(u,v)=(0,0)$,
    \item Wilson-Fisher (H): it is the non-trivial fixed point when $v=0$, \\ located at $(\frac{\epsilon}{8(n+8)},0)$,
    \item Ising (I): it is the non-trivial fixed point when $u=0$, located at $(0,\frac{\epsilon}{72})$,
    \item Cubic (C): it is located at $(\frac{\epsilon}{24n},\frac{(n-4)\epsilon}{72n})$.
\end{itemize}

The jacobian of this system is:

\begin{equation}
    J = 
    \begin{bmatrix}
    \frac{\partial \beta_u}{\partial u}       & \frac{\partial \beta_u}{\partial v} \\
    \frac{\partial \beta_v}{\partial u}       & \frac{\partial \beta_v}{\partial v} 
    \end{bmatrix}
    = 
    \begin{bmatrix}
    \epsilon - 16(n+8)u - 48v       & -48u \\
    -96v      & \epsilon - 96u - 144v
    \end{bmatrix}
    \label{coolbeans}
\end{equation}
Plugging in $u$ and $v$ for each of the fixed points gives us the eigenvalues, correspondingly given in table \ref{Table}.

\begin{table}[h!]
    \centering
    \begin{tabular}{c|cc}
         Fixed point & \multicolumn{2}{|c}{Eigenvalues} \\
  \hline\\
         G & $\epsilon$ &$\epsilon$ \\
         H & $-\epsilon$ &$\epsilon\dfrac{n-4}{n+8}$ \\
         I & $-\epsilon$ &$\dfrac{\epsilon}{e}$ \\
         C & $-\epsilon\dfrac{n-4}{3n}$ & $-\epsilon$ \\
    \end{tabular}
    \caption{The stabilities of the fixed points can be derived from the eigenvalues of the matrix (\ref{coolbeans})}
    \label{Table}
\end{table}

We can see that the Gaussian fixed point is completely unstable and the Ising fixed point is a saddle independently of $n$. But the Wilson-Fisher and the Cubic do depend of $n$, the fixed points exchange stability at a critical value of $n_c=4$, below the critical value H is a saddle and C is an stable fixed point, above the critical value H becomes stable and C becomes a saddle. Also, if we look at the positions of these points, we see that they meet when $n=4$. Exactly at the critical value, both fixed points have zero eigenvalues, this shows that the operators crossing though marginality. This is the perfect description of a transcritical bifurcation, studied in the previous chapter. See figure \ref{ONfigures}.

\begin{figure}
    \centering
    \begin{subfigure}[b]{0.6\textwidth}
        \includegraphics[width=\textwidth]{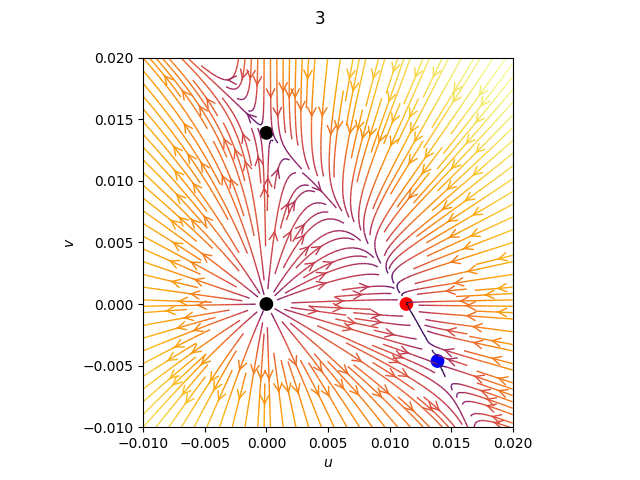}
        \caption{$n=3$}
    \end{subfigure}
    ~ 
    \begin{subfigure}[b]{0.6\textwidth}
        \includegraphics[width=\textwidth]{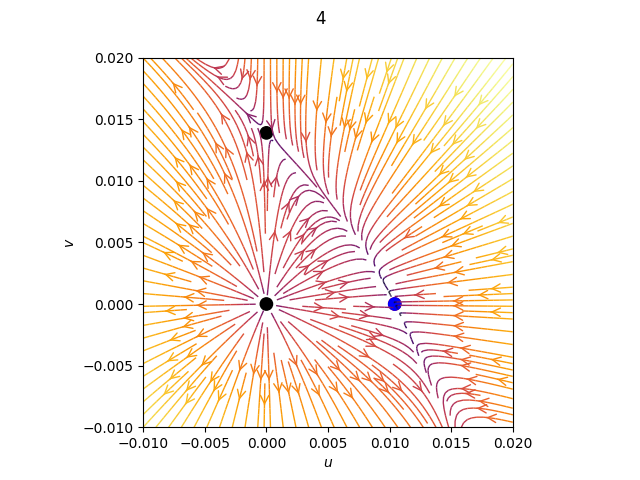}
        \caption{$n=4$}
    \end{subfigure}
    \begin{subfigure}[b]{0.6\textwidth}
        \includegraphics[width=\textwidth]{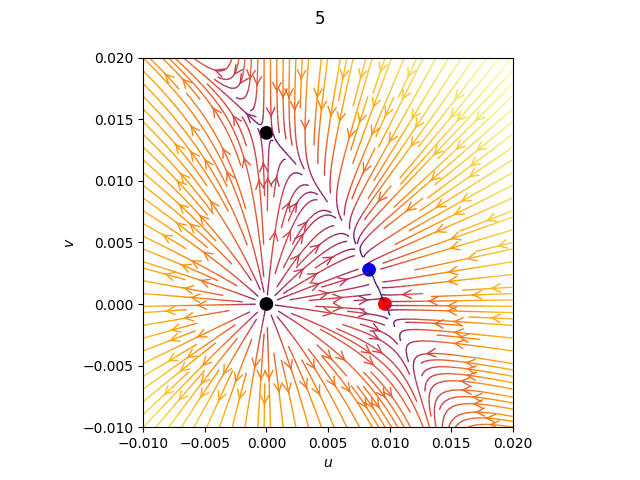}
        \caption{$n=5$}
    \end{subfigure}
    \caption{Phase space of the system (\ref{ONONON}) for different values of n. Notice how the Wilson-Fischer fixed point collides with the cubic fixed point at the critical value $n=4$, resulting in a transcritical bifurcation.}\label{ONfigures}
\end{figure}

We also have the fact that in the limit of $n\rightarrow\infty$, the Cubic fixed point merges with the Ising fixed point, so in the two regimes $n=n_c$ and large $n$ the system has only three fixed points.

Now this was calculated using perturbation theory up to quadratic order, one question we may ask is if the system continues to have this behaviour if we include more orders in perturbation theory. As we have seen in the previous chapter, the transcritical bifurcation is of codimension 2, this means that for a system with only one control parameter small deviations lead to the unfolding of the bifurcation. However, we already explored the possible unfoldings of the transcritical bifurcation in section \ref{tranun}. The first possibility is that the fixed points "miss" each other and the bifurcation does not occur. We can rule out this in the $O(n)$ case since we know that the cubic and the Wilson-Fisher do exchange their stabilities. The other possibility is that there are two saddle-node bifurcations in the place of the original transcritical. As we have seen in section \ref{MArg}, if we could find the expression of how the conformal dimension of the fixed points depends on $n$ we can determine if the bifurcation unfolds or not.

\subsubsection{Index theories}

We are going to show how we can use both indices seen in the previous chapter to prove that there is no limit cycles in the model. Of course, by the diagrams in figure \ref{ONfigures} we can see that there is no periodic flow on the theory space, but is not always that the diagrams are so conclusive.

Without losing generality, we will focus on the model with $n>n_c$, for different values of $n$ the argument would be very similar. First we will use the Conley index to show that the fixed points in the system are not isolated. Let us define the set $N$ as a set that encloses all fixed points in the system. Since the exit set for $N$ is only one isolated interval, the Conley Index is

\begin{equation}
    CH_*(\text{Inv} N) = 0.
\end{equation}

So if the fixed points were isolated, by properties of the Conley Index we should have that the sum of the indices of the points to be equal to zero. Clearly, that is not what happens, in fact

\begin{equation}
    CH_*(C \cup I \cup H \cup G) = (\mathbb{Z},\mathbb{Z}\oplus \mathbb{Z},\mathbb{Z},0...).
\end{equation}

\begin{figure}[h!]
    \centering
    \includegraphics[width=0.6\textwidth]{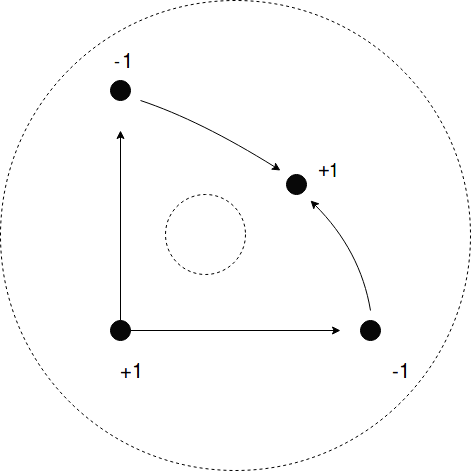}
    \caption{The only possible closed curves that do not intersect flow lines between the fixed points are the ones that encloses none or all the fixed points.}
    \label{ncaidasc}
\end{figure}

We can refine this analysis. If we set $N$ as a set that encloses any pair of fixed points (and just the pair) the Conley index of $N$ would be $0$, but the Conley Index of the fixed points in the pair do not sum up to zero. This shows that there are flow lines between the fixed points. The only exception is the pair $(H,I)$, and indeed we see that these fixed points are not connected.

Now we try to find periodic orbits in theory space. As seen in section \ref{inccloscur}, the index of a closed curve of a limit cycle is equal to $+1$. Also, the theorem of uniqueness of the solutions \ref{Intersecttheorem} says that we cannot have intersection in the diagram. Since we know that there are flow lines between the fixed points, the only closed curves we can form that do not intersect the flux, are the ones that enclose no fixed points or that enclose all fixed points. We can easily rule out the former curves, since their index is $0$. And for the latter curves, we can use theorem \ref{coolctheorem} to calculated that the index is

\begin{equation}
    I = I_C + I_H + I_I + I_G = +1 -1 -1 +1 = 0.
\end{equation}
Thus we also see that these curves do not represent periodic flows of the system. See figure \ref{ncaidasc}.



\section{4-dimensional QCD}

The four dimensional quantum chromo-dynamics (QCD) is one of the most interesting examples of field theories. It consists of a $SU(N_C)$ gauge theory with $N_f$ massless flavors. The particularities of QCD lead to interesting behaviours that are yet to be fully comprehended, like \textit{color confinement}. We can however infer a lot studying the renormalization group flows from these theories. 

The beta function at one-loop for a non-Abelian gauge theory is

\begin{equation}
    \beta = -\frac{-g^3}{(4\pi)^2}\left[\frac{11}{3}C_2(G)-\frac{4}{3}N_fC(R)\right],
\end{equation}
where $g$ is the gauge coupling, $N_f$ the number of flavor, $C(R)$ is a constant depending on the fermion representation $R$ and $C_2(G)$ is another constant called the quadratic Casimir operator of the adjoint representation of the group. For $SU(N_C)$, we have $C(r)=1/2$ and $C_2(G)=N_C$, then the beta function becomes \cite{Peskin}

\begin{equation}
    \beta = -\frac{-g^3}{(4\pi)^2}\left[\frac{11}{3}N_C - \frac{2}{3}N_f\right].
\end{equation}
There is only one fixed point for this $\beta$, the trivial fixed point. Then, for big enough $N_C$, the gauge coupling will always run down to zero, leading to a free (and conformal) theory. This phenomena is known as \textit{asymptotic freedom} and is one of the most interesting behaviours in field theories.

As we increase the number of loops, new fixed points may appear. The general beta function for $\alpha_g = \propto g^{3/2}$ is

\begin{equation}
    \beta = \gamma \alpha_g - b_0 \alpha_g^2 - b_1 \alpha_g^3  + ... \hspace{10 pt} .
\end{equation}
For the QCD, the constants are
\begin{align}
    \begin{split}
        \gamma = & 0,\\
        b_0 = & (11 - 2r)\frac{2N_C}{3},\\
        b_1 = & (34-13r+\frac{3r}{N_C^2})\frac{2N_C^2}{3},\\
        ... & 
    \end{split}
\end{align}
where we defined the ratio 
\begin{equation}\label{oipo}
    r=\frac{N_f}{N_C}.
\end{equation}
Since there are two control parameters, $r$ and $N_C$, we can represent the parameters as $(b_0,b_1)$, and all $b_n$, $n\pm 2$, can be written in terms of $b_0$ and $b_1$. Also, we can expect up to codimension-2 bifurcations to be stable. In fact, if we consider the two loop beta function we can see that there is another fixed point in the system other than the trivial, $\alpha^*=-\frac{b_0}{b_1}$. Without loss of generality, let us fix $b_0 > 0$ and we see that as $b_1$ changes sign we have a transcritical bifurcation in the system: the trivial fixed point and $\alpha^*$ exchange stability. When $b_1$ is positive the trivial fixed point is stable and the system shows asymptotic freedom, but when $b_1$ is negative then $\alpha^*$ is stable and the flow asymptotically goes to an interacting conformal field theory. There is one problem: $\alpha^*$ has a pole at the bifurcation point. This may suggest that we need to include flavor dynamics to the theory in order to see the bifurcation. 

It is a popular suggestion that at the critical value of $N_C$ there is an annihilation of fixed points leading to the loss of conformality of the flow \cite{ConfLoss}. This suggests a saddle-node bifurcation at that value. \cite{QCDSADLE} provides a toy model for QCD considering the 4-fermi operator added to the gauge action:

\begin{equation}
    S = \int dx^4 \left(  -\frac{1}{4g^2}F^A_{\mu\nu}F^{A\hspace{2 pt}\mu\nu} + i \bar{\psi}_f\slashed{D}\psi^f + \mathcal{L}_{4f} \right),
\end{equation}
with the 4 fermi Lagrangian being, using the notation of \cite{GukovBif}, 
\begin{equation}
    \mathcal{L}_{4f} = \frac{\lambda_1}{4\pi^2\Lambda^2}\mathcal{O}_1 + \frac{\lambda_2}{4\pi^2\Lambda^2}\mathcal{O}_2 + \frac{\lambda_3}{4\pi^2\Lambda^2}\mathcal{O}_3 + \frac{\lambda_4}{4\pi^2\Lambda^2}\mathcal{O}_4,
\end{equation}
and
\begin{align}
    \begin{split}
        \mathcal{O}_1 = &  \bar{\psi}_i\gamma^\mu\psi^j\bar{\psi}_j\gamma_\mu\psi^i + \bar{\psi}_i\gamma^\mu\gamma_5\psi^j\bar{\psi}_j\gamma_\mu\gamma_5\psi^i\\
        \mathcal{O}_2 = &  \bar{\psi}_i\psi^j\bar{\psi}_j\psi^i + \bar{\psi}_i\gamma_5\psi^j\bar{\psi}_j\gamma_5\psi^i\\
        \mathcal{O}_3 = & (\bar{\psi}_i\gamma^\mu\psi^i)^2 - (\bar{\psi}_i\gamma^\mu\gamma_5\psi^i)^2 \\
        \mathcal{O}_3 = & (\bar{\psi}_i\gamma^\mu\psi^i)^2 + (\bar{\psi}_i\gamma^\mu\gamma_5\psi^i)^2.
    \end{split}
\end{align}
The indices $i$ and $j$ refer to the flavor symmetries. In the notation of \cite{QCDSADLE}, $\mathcal(O)_1 = 2\mathcal{O}_V$, $\mathcal(O)_2 = 2\mathcal{O}_S$, $\mathcal(O)_3 = 2\mathcal{O}_{V1}$  and $\mathcal(O)_4 = 2\mathcal{O}_{V2}$. The RG flow equations for these couplings are 
\begin{align}
    \begin{split}
        \beta_1 = & 2\lambda_1 + \frac{N_f}{4}\lambda^2_2+ (N_C+N_f)\lambda^2_1 - \\
        & - 6\lambda_1\lambda_4 - \frac{6}{N_C}(\lambda_1+\lambda_4)\alpha - \frac{3}{4}\left(N_C-\frac{8}{N_C}+\frac{3}{N_C^2}\right) \alpha^2,\\
        \beta_2 = & 2\lambda_2 - 2N_C\lambda^2_2 + 2N_f\lambda_2\lambda_1 + 6\lambda_2\lambda_3 + 2\lambda_2\lambda_4 -  \\ 
        & - 12C_2(F)\lambda_2\alpha + 12\lambda_3\alpha - \frac{3}{2}\left(3N_C-\frac{4}{N_C}-\frac{1}{N_C^2}\right) \alpha^2,\\
        \beta_3 = & 2\lambda_3 - \frac{1}{4}\lambda^2_2 -\lambda_2\lambda_1 - 3\lambda^2_3 - N_f\lambda_2\lambda_4 + 2(N_C+N_f)\lambda_1\lambda_3 +\\
        & + 2(N_CN_f+1)\lambda_3\lambda_4 + \frac{6}{N_C}\lambda_3\alpha + \frac{3}{4}\left(1+\frac{3}{N_C^2}\right) \alpha^2,\\
        \beta_4 = & 2\lambda_4 - 3\lambda^2_1  - N_CN_f\lambda^2_3 + (N_CN_f -2)\lambda_4^2 - N_f\lambda_2\lambda_3 + \\
        & + 2(N_CN_f+1)\lambda_1\lambda_4 + 6(\lambda_1+\lambda_4)\alpha - \frac{3}{4}\left(3+\frac{1}{N_C^2}\right) \alpha^2.
    \end{split}
\end{align}

The above equations together with the equation for the gauge coupling lead to a 5-dimensional system that is valid for every value of $N_C$ and $N_f$. We are interested in the Veneziano limit, i.e. large $N_C$ and $N_f$ with fixed ratio $r$. First we rescale the couplings in the form 
\begin{align}
    \begin{split}
        N_C\lambda_{1,2} &\rightarrow\lambda_{1,2}, \\
        N_C\alpha &\rightarrow \alpha, \\
        N_C^2\lambda_{3,4} &\rightarrow\lambda_{3,4} .
    \end{split}
\end{align}
Then the equations for $\beta_1$ becomes
\begin{align}\label{beta1rg}
\begin{split}
    \frac{\beta_1}{N_C} = & \frac{1}{N_C}[ 2\lambda_1 + \frac{r}{4}\lambda^2_2+ (1+r)\lambda^2_1 - \frac{6}{N_C^2}\lambda_1\lambda_4 -\\
        &  - \frac{6}{N_C}(\lambda_1+\frac{1}{N_C^2}\lambda_4)\alpha - \frac{3}{4N_C}\left(N_C-\frac{8}{N_C}+\frac{3}{N_C^2}\right) \alpha^2] \\
    \beta_1 = & 2\lambda_1 + \frac{r}{4}\lambda^2_2+ (1+r)\lambda^2_1 - \frac{3}{4}\alpha^2.
\end{split}    
\end{align}
Similarly for $\beta_2$ and $\alpha$ we got

\begin{equation}\label{beta2rg}
    \beta_2 = 2\lambda_2-2\lambda_2^2+2r\lambda_1\lambda_2-6\alpha\lambda_2-\frac{9}{2}\alpha^2,
\end{equation}
\begin{equation}\label{alphargcara}
    \beta_\alpha = -\frac{2}{3}(11-2r)\alpha^2 - \frac{2}{3}(34-13r)\alpha^3 + 2r\alpha^2\lambda_1.
\end{equation}
where the last term in (\ref{alphargcara}) is a 4-fermi non-perturbative correction to the RG flow of the gauge coupling. Note that the system decouples from $\lambda_3$ and $\lambda_4$, so in order to study the structure of the theory space we only need to consider the 3-dimensional system. 

\begin{figure}
    \centering
    ~ 
    \begin{subfigure}[b]{0.6\textwidth}
        \includegraphics[width=\textwidth]{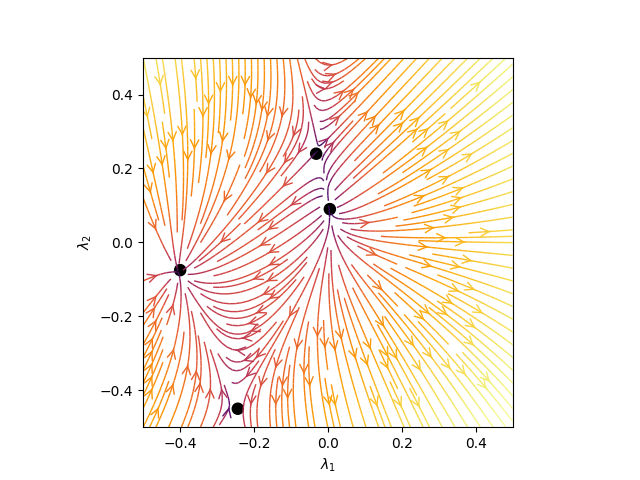}
        \caption{$r=4.1$}
    \end{subfigure}
    \begin{subfigure}[b]{0.6\textwidth}
        \includegraphics[width=\textwidth]{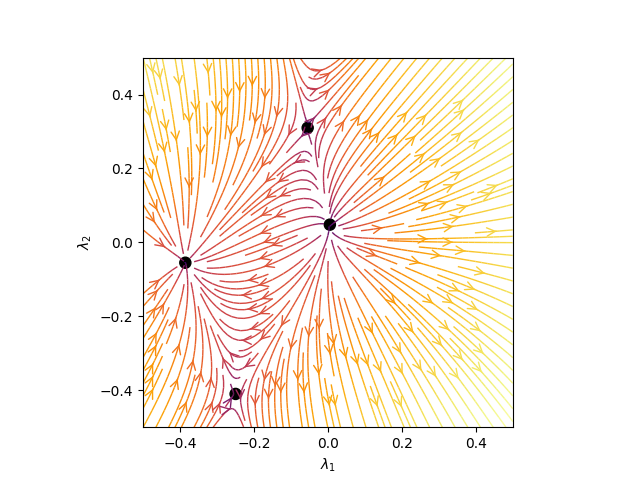}
        \caption{$r=4.3$}
    \end{subfigure}
    \begin{subfigure}[b]{0.6\textwidth}
        \includegraphics[width=\textwidth]{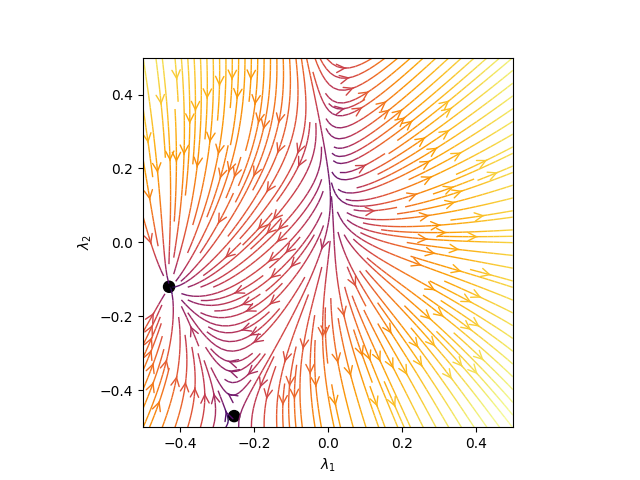}
        \caption{$r=3.9$}
    \end{subfigure}    
    \caption{$(\lambda_1,\lambda_2)$ plane with $\alpha=\alpha^*$. Black dots represent fixed points. Note how two fixed point merge and disappear below a critical value for r.}
    \label{QUECACEadcxzcfrqwes}
\end{figure}

There are two solutions for $\beta_\alpha = 0$: the trivial point $\alpha = 0$ and $\alpha^* = \frac{11-2r-3r\lambda_1}{13r-34}$. Using the linearization (with really small $\lambda_1$) of this equation, we can see that $\alpha^*$ is stable for $34/13<r<11/2$. We call this range of values for $r$ as the conformal window, since for this range a conformal color interacting theory is RG stable.

In order to look for fixed points in the flow we only need to input the two values that $\beta_\alpha$ vanishes in (\ref{beta1rg}) and (\ref{beta2rg}). For this, we will have two quadratic equations, so for each of the two cases we can expect a maximum of 4 fixed points in the $(\lambda_1,\lambda_2)$ plane. Numerically, \cite{QCDSADLE} shows that for $\alpha^*$, there is a saddle-node bifurcation in the plane for $r_{crit} \approx 4.05$. This shows color interacting fixed points disappearing below a critical value of the ratio. See figure \ref{QUECACEadcxzcfrqwes}.



\section{Holographic RG Flows}

An interesting set of theories are the ones that follow the holographic principle\cite{Holprinc}. The most well known example of such principle is the AdS/CFT correspondence\cite{ADSCFT}, where it is conjectured that there are relations between observables in conformal field theories (CFT's) and observables in gravity theories that live on the boundary of an anti-de Sitter space. 

We can build an AdS space by starting from the Einstein's equation:

\begin{equation}
    R_{\mu\nu} - \frac{1}{2}Rg_{\mu\nu} + \Lambda g_{\mu\nu} = 8\pi G T_{\mu\nu},
\end{equation}
where $\Lambda$ is called the cosmological constant and $G$ is the gravitational constant. We can find maximally symmetric solutions to this equations by looking at solution in the vacuum $T_{\mu\nu}=0$. Then we will have

\begin{equation}
    R = \frac{2\Lambda d}{d-2}.
\end{equation}
The solution for $\Lambda = 0$ is the Minkowski space $\mathcal{M}^d$, for positive $\Lambda$ is called de Sitter space and for negative $\Lambda$ the solution is called anti-de Sitter space AdS. 

An AdS$_{(d+1)}$ can be embedded in an higher dimensional Minkowski space $\mathcal{M}_{(d+2)}$ with the following metric:
\begin{equation}
    ds^2 = -(dX^0)^2 + (dX^1)^2 + (dX^2)^2 + ... +(dX^d)^2 - (dX^{d+1})^2 .
\end{equation}
The AdS$_{(d+1)}$ space is the given by the set:
\begin{equation}
    AdS_{(d+1)} = \left\lbrace X \in \mathcal{M}_{(d+2)} | -(X^0)^2 + \sum_{i=1}^d(X^i)^2 - (X^{d+1})^2 = L^2 \right\rbrace,
\end{equation}
where $L$ is called the AdS radius and we use the notation $X=(X^0,...,X^{d+1})$.

A more useful parametrization of such spaces is known as the Poincar\'e patch coordinates. We start by using the transformations from coordinates $(X^0,...,X^{d+1}$ to coordinates $(t,x^i,r)$, where $i=(1,...,d-1)$, given by:
\begin{align}
\begin{split}
        X^0 = & \frac{L^2}{2r}\left( 1 + \frac{r^2}{L^4}((x^i)^2 - t^2 + L^2 ) \right),\\
        X^i = & \frac{rx^i}{L},\\
        X^d = & \frac{L^2}{2r}\left( 1 + \frac{r^2}{L^4}((x^i)^2 - t^2 - L^2) \right),\\
        X^{d+1} = & \frac{rt}{L}.
\end{split}
\end{align}
Then the metric of the AdS space becomes:
\begin{equation}
    ds^2=\frac{L^2}{r^2}dr^2 + \frac{r^2}{L^2}(\eta_{\mu\nu}dx^\mu dx^\nu)
\end{equation}
We can still do another reparametrization if we define $r' = L\ln \frac{r}{L}$ then the metric is tranformed to (dropping the primes):

\begin{equation}
    ds^2 = e^{\frac{2r}{L}}dx^\mu dx^\nu \eta_{\mu\nu} + dr^2.
\end{equation}

As for the gravity theory that lives in this space we consider the d+1 dimensional action:

\begin{equation}
    S = d^xdr\sqrt{-g}\left( \frac{R}{16\pi G} - \frac{1}{2}\partial_m\phi\partial^m\phi-V(\phi) \right).
\end{equation}
The equations of motion are given by the equation from the dilaton

\begin{equation}
    \frac{1}{\sqrt{-g}}\partial_m(\sqrt{-g}g^{mn}\partial_n\phi) - V'(\phi) = 0,
\end{equation}
and the equation for the metric
\begin{equation}
    R_{mn} - \frac{R}{2}g_{mn} = 8\pi G (\partial_m\phi\partial_n\phi - \frac{1}{2}\partial_l\partial^l\phi - g_{mn}V(\phi)).
\end{equation}

A detailed calculation of these equations of motions can be seen at \cite{Naga}. Setting $\phi$ as constant gives us $R_{mn} - \frac{R}{2}g_{mn} = 8\pi Gg_{mn}V(\phi)$, which means that, if we set $8\pi GV(\phi)=\Lambda$, we see that the equation recovers to the Einstein equation for an AdS space.

Now, one could analyze perturbations around the AdS space and look at how the correlated theories that live on the boundary respond. This is aimed as an attempt to generalize the AdS/CFT correspondence. Since CFT's are fixed point in RG flows, we can expect that the starting point of a 'Holographic RG Flow' is an AdS space. Such hypothetical relations are known as Gauge/Gravity duality. We can deform the AdS space using the \textit{Domain Wall} metric as a general ansatz. The following part of this subsection is inspired by chapter 9 of \cite{JOJO}. The Domain Wall metric is given by:

\begin{equation}
    ds^2 = e^{2A(r)}\eta_{\mu\nu}dx^\mu dx^\nu + dr^2, \hspace{10 pt} \phi=\phi(r).
\end{equation}
Here $r$ defines a scale for the theory just like the energy scale $\mu$ defines the scale in field theories. In these sense, since we expect AdS spaces to be the critical solutions of the 'holographic flow', we can set at the limit $r\rightarrow \pm\ \infty $ a linear function $A=r/L$ and constant fields $\phi$. In fact, the relation between the scales is even deeper, we can write $\mu=\mu_0e^{\frac{r}{L}}$, and we can see that we find UV (IR) theories in the limit $\infty$ ($-\infty$).

For the ansatz above, we can calculate the Einstein tensors and we get
\begin{align}
    \begin{split}
        G^\mu_\nu = & (d-1)\delta^\mu_\nu\left(A''+\frac{d}{2}(A')^2\right)=8\pi G T^\mu_\nu, \\
        G^r_r = & \frac{d(d-1)}{2}(A')^2 = 8\pi G T^r_r,
    \end{split}
\end{align}
where we used the notation $A'=\partial_rA$. We can find an inequality by considering the equation for $G^t_t-G^r_r$
\begin{equation}
    A''= \frac{8\pi G}{d-1}(T^t_t-T^r_r) = \frac{8\pi G}{d-1}\phi .
\end{equation}
This means that 
\begin{equation}\label{popaocsp}
    A''\leq 0.
\end{equation}

The equations of motion are
\begin{align}
    \begin{split}
        \label{HlRg}
        \phi '' = & -dA'\phi' + \frac{dV(\phi)}{d\phi}, \\
        (\phi ')^2  = & 2V(\phi) + \frac{1}{8\pi G}d(d-1)(A')^2.
    \end{split}
\end{align}

This set of equations are called as the \textit{Holographic RG Flow equations}. The fact that it is of second order in $\phi$ makes it a little inconvenient. We can bypass that by defining an auxiliary function to reduce the system to differential equations of first order. One way we can do it is by defining the superpotential $W(\phi)$ by the equation

\begin{equation}\label{spuerpaopasof}
    V= \frac{1}{2}\left(\frac{dW}{dr}\right)^2 - \frac{d}{d-1}W^2,
\end{equation}
and then the system becomes:
\begin{align}
    \begin{split}
        \frac{dW}{d\phi} = & \sqrt{8\pi G}\frac{d\phi}{dr}, \\
        A'  = & -\frac{\sqrt{8\pi G}}{d-1}W.
    \end{split}
\end{align}
The system described above is a gradient system, hence we do not need to be worried about periodic flows in the solutions of the equations. 

We also can define a \textit{Holographic C-funcion}:

\begin{equation}
    C(r) = \frac{\pi}{G_5A'(r)^3},
\end{equation}
where 
\begin{equation}
G_5=\frac{\pi L^3}{2N^2}
\end{equation}
is Newton's constant in 5 dimensions. From \ref{popaocsp} we have that
\begin{equation}
    C'(r) = -3\frac{\pi}{G_5A'(r)^4}A''(r) \geq 0,
\end{equation}
and this means that $C(r)$ is always decreasing as the flow goes to the IR $(r\rightarrow\infty)$.

\section{Holographic approach to QCD}

In \cite{ImproQCD} and \cite{ImproQCD2} the authors aim to develop a holographic model that is related to QCD via the AdS/CFT correspondence. Here we are interested in looking at the holographic RG flows in these theories and we are going to see if the behaviours that appear in QCD also happen in the models studied by \cite{ImproQCD}.

The theories of interest here are four-dimensional $U(N_C)$ gauge theories at large $N_C$ and so, with no extra adjoint fields, the gravity dual theory is five-dimensional. The starting action for the string theory that we are going to use is:

\begin{equation}\label{ehessaaquimesmo}
    S_5 = M^3\int d^5x \sqrt{g}\left[e^{-2\phi}\left(R+4(\partial\phi)^2+\frac{\delta c}{\ell_s^2}\right)-\frac{1}{2\dot 5!}F_5^2-\frac{1}{2}F_1^2 -\frac{N_f}{\ell_s^2}e^{-\phi}\right],
\end{equation}

Here we are going to give just a general idea of what the terms means. The fields of the theory are simply a metric $g_{\mu\nu}$ and a dilaton $\phi$. $F_0$, $F_1$, $F_2$, $F_3$, $F_4$ and $F_5$ are called \textit{Ramond-Ramond fields} that appear in type II supergravity \cite{Ramond}-\cite{Ramond2}. We have that $F_0\sim F_5$, $F_1\sim F_4$ and $F_2\sim F_3$, so only $F_5$, $F_1$ and $F_2$ are independent fields.

$F_5$ generates a four form that comes from the branes responsible for the $U(N_C)$ group and its dual in Yang-Mills theory is a zero-form field strength. $F_1$ genereates an axion field $F_1 = \del_\mu a$ and its dual is $Tr[F\wedge F]$. $F_2$ generates a vector but it has no candidate for a dual in Yang-Mills theories \cite{ImproQCD}.

The last term in (\ref{ehessaaquimesmo}) refers to flavor coming from $N_f$ $D_4-\bar{D}_4$ brane pairs. It is also established that $\delta c = 5$ and that

\begin{equation}
    M^3 = \frac{1}{g_s^2\ell_s^3},
\end{equation}
where $g_s$ is the Plank length and $\ell_s$ is the string scale. Now we define
\begin{equation}
    \lambda = N_C e^\phi,
\end{equation}
and the Einstein frame $g_{\mu\nu} = \lambda^{4/3}g_{\mu\nu}^E$. The equations of motion for $F_5$ gives

\begin{equation}
    F_{1,2,3,4,5} = \frac{N_C}{\ell_s}\lambda^{10/3}\frac{\epsilon_{1,2,3,4,5}}{\sqrt{-\text{g}}}.
\end{equation}
Plugging this back into the action gives \cite{ImproQCD}

\begin{equation}
    S_5 = M^3N_C^2\int d^5x \sqrt{g}\left[R+\frac{(\partial\lambda)^2}{\lambda^2}-\frac{\lambda^2}{2}(N_C^2\del a)^2+ V(\lambda)\right],
\end{equation}
where the potential is
\begin{equation}\label{potentica}
    V(\lambda) = \frac{\lambda^{4/3}}{\ell_s^2}\left[ \delta c -x\lambda -\frac{1}{2}\lambda_2 \right].
\end{equation}
Here, $x$ is the ration of number of flavor and number of colors like we have defined in (\ref{oipo}),
\begin{equation}
    x=\frac{N_f}{N_C}.
\end{equation}

\subsubsection{Phase space variable}

In 5 dimensions the Holographic RG equations are (using the definitions on \cite{ImproQCD} this time)
\begin{align}
    \begin{split}
        \phi '' = & -4A'\phi' - \frac{3}{8}\frac{dV(\phi)}{d\phi}, \\
        (\phi ')^2  = & -\frac{9}{4}A'', \\
        A'^2 = & \frac{(\phi ')^2}{9} + \frac{V}{12}.
    \end{split}
\end{align}

We rewrite the fields as
\begin{equation}
    \Phi = \phi + \log N_C
\end{equation}

Another way we can simplify the system above is by defining a \textit{phase space variable} $X$ 
\begin{equation}
X=\frac{\phi '}{3A'}. 
\end{equation}
With the use of $X$, the new equations are \cite{ImproQCD}:

\begin{align}\label{holqcdeq}
    \begin{split}
        (\Phi')^2 = & \dfrac{3}{4}V_0X^2e^{\frac{8}{3}\int^\Phi_{-\infty}Xd\Phi},\\
        (A')^2 = &  \dfrac{V_0}{12}e^{-\frac{8}{3}\int^\Phi_{-\infty}Xd\Phi},\\
        N_C^2\frac{dX}{d\Phi} = & \left(8X + 3\frac{d\log V}{d\Phi}\right)\dfrac{X^2-1}{6X},
    \end{split}
\end{align}
where $V_0 >0$ is the limit of the dilaton potential at $\phi\rightarrow -\infty$.

There is a neat relation between the phase space variable and the superpotential in (\ref{spuerpaopasof}),

\begin{equation}
    X = -\frac{3}{4}\frac{d\log W}{d\phi}.
\end{equation}

We can also relate $X$ with the beta function of the coupling,

\begin{equation}
    \beta = \mu \frac{d\lambda}{d\mu}.
\end{equation}

An argument can be made that  \cite{ImproQCD}

\begin{equation}
    A \propto \log E.
\end{equation}
Then we can say that

\begin{equation}
    X = \frac{\beta}{3\lambda}
\end{equation}

\subsection{Single exponential potential}
In appendix D of \cite{ImproQCD}, the authors explore a potential with a single exponential 

\begin{equation}
    V(\phi) = \frac{4}{3}\epsilon^{8/(3a) \hspace{2pt}\phi}, \hspace{15 pt} \epsilon = \pm 1.
\end{equation}
This potential is the limit of $\phi\rightarrow\infty$ of (\ref{potentica}). Inputting this in (\ref{holqcdeq}) and calculating $\frac{d\log V}{d\phi} = \frac{8}{3a}$ we get

\begin{equation}
    \frac{dX}{d\phi} = F(X) = 8(X+\frac{1}{a})\frac{X^2-1}{6X} = \frac{4}{3}(X+\frac{1}{a})\frac{X^2-1}{X}.
\end{equation}

There are three fixed points, namely at $X=\pm 1$ and at $X=-1/a$. For simplicity, let us consider only the case that the parameter $a$ is positive. The stability analysis around the fixed point gives us that
\begin{equation}
    F'(X) = \frac{4}{3a}\frac{X^2-1}{X} + \frac{4}{3}(X+\frac{1}{a})\frac{2X-1}{X} - \frac{4}{3}(X+\frac{1}{a})\frac{2X-1}{X^2}.
\end{equation}

For $X=1$ $F'$ is always positive. Then we can say that the fixed point there is unstable. For the fixed point at $X=-1/a$, $F'(-1/a)=-\frac{4}{3}(\frac{1}{a^2}-1)$, then we can see that for $a>1$ the fixed point is unstable and for $a<1$ the fixed point is stable and the reverse happens for the fixed point at $X=-1$. This description of the exchange of stability at $1/a=1$ exactly matches the description of a transcritical bifurcation, the same bifurcation that occurs for the gauge coupling at the 4-dimensional QCD at two loops. For negative $a$ we would have the same bifurcation between the fixed points $X=1$ and $X=-1/a$. See figure \ref{fuiasofa}.

\begin{figure}[h]
\centering
\includegraphics[width=0.8\linewidth]{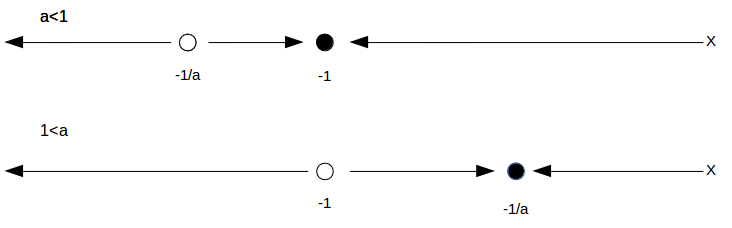}
\caption{Diagram showing the exchange of stability between the fixed points at $a=1$. Open circles represent unstable fixed points and closed circles represent stable fixed points. For $a<1$ the fixed point at $-1/a$ is unstable and the fixed point at $-1$ is stable. They exchange their stability properties for $a>1$.}
\label{fuiasofa}
\end{figure}

We also can analyze the fixed point solutions more closely using the other equations in (\ref{holqcdeq}). For $X=1$ we have to solve
\begin{align}
    \begin{split}
        \phi'' = -\frac{4}{3}\phi' = 0,\\
        \phi' = & 3A'.
    \end{split}
\end{align}
The solution is 
\begin{equation}
    \phi = \log c(r_0-r)^{3/4},
\end{equation}
where $c$ is a constant and $r_0$ is a length scale. Similarly for $X=-1$ we get the solution
\begin{equation}
    \phi = - \log c(r_0-r)^{3/4},
\end{equation}
and for $X=-1/a$
\begin{equation}
    \phi = \log \frac{c}{(r_0-r)}^{-3a/4}.
\end{equation}

The ideas laid down in this section can be used to explore more interesting potentials, like in \cite{Exotic} and \cite{Exotic2}. This certainly could be the topic for further research.



\section{Efimov Physics}

Non-relativistic field theories can also give interesting effects to analyse in RG flows. 

In particular, Efimov \cite{OGefi} found in 1970 that a model of three interacting non-relativistic bosons has a predicted series of a three-body energy levels when at least two pairs of particles has a two-body interaction form states that are a dissociation threshold. This means that we can have a bound state even if the two-body interaction is not strong enough to form bond between two particles. If one particle is removed the entire bound state dissipates. This states are called \textit{Efimov states}, and the effect is known as the \textit{Efimov effect}. 

The effect happens when the scattering length $a$ of a pair of bosons goes to infinity $a\rightarrow \infty.$\footnote{Or equivalently the binding energy goes to zero} This limit is known as unitary limit. Then the three body system shows the following scattering lengths $a_n$ and energies $E_n$ ($n\in \mathcal{N})$:
\begin{align}
    a_n = & a_0\lambda^n, \\
    E_n = & E_0\lambda^{-2n},
\end{align}
where $\lambda \approx 22.7$ it is known as \textit{Efimov's scaling constant}. The constant is related to the order $s_0$ of the imaginary-order modified Bessel function of the second kind that describes the radial dependence of the state by 
\begin{equation}
\lambda = e^{\pi s_0}, \hspace{20pt} s_0 \approx 1.00624.
\end{equation}

One could try to look for the RG of this system. In \cite{ConfLoss} the authors make an argument that when changing the dimension $d$ that the theory lives in, two fixed points collide in a certain critical value of the dimension (not necessarily an integer) and a limit cycle appears. It is important to say that, as argumented in \cite{efi}, limit cycles in such theories happens when the RG cutoff $\Lambda$ of the theory is changed by a specific multiplicative factor $\lambda_0$. This means that the theories in the cycle have a \textit{discrete scale invariance} with \textit{scaling factor} $\lambda_0$. 

In reference \cite{efi}, the authors makes the assumption of existence of a limit cycle more explicit. They see that there are actually two critical dimensions, $d_1$ and $d_2$, where bifurcations exists. For $d<d_1$ and $d_2< d$ the $\beta$-function has two fixed points and they correspond to two fixed points in the theory space. But for $d_1<d<d_2$ the fixed points are no present, and a limit cycle occurs. This description matches exactly the description of a saddle node infinite period (SNIPER) bifurcation that we saw in section \ref{globalglovalgoaskock}.

\newpage


\chapter{Conclusions}

The physics of a system depends on the energy scale used. This dependence generates equations that form dynamical systems, which are partial differential equations. By shifting the focus from trying to find analytical solutions of those equations to looking for more qualitative aspects of the solutions, we can find many effects that can occur in their behaviour with minimal data available to us.

One major thing of interest we analyse is the possible asymptotic behaviours of these solutions. Traditionally, those are considered to be flow lines between two fixed points. In field theory terms, the flow lines represent how a conformal field theory can be deformed via a relevant operator into another conformal field theory. The existence of a decreasing potential function (collectively called C-function) points out that these flows have a natural direction, from a UV CFT to an IR CFT. The connection between this function and fundamental properties of the theories let us interpret that the flow line can only move in the direction of the loss of degrees of freedom in a theory.

Other possible asymptotic behaviours are however possible. In 1971, Kenneth Wilson himself \cite{1Wilson71} described that, for systems with more than one coupling constant, it was possible that a periodic flow (also known as limit cycle) may appear. The proof of Zamolodchikov's C-theorem seemed to rule out that possibility for quantum field theories in two dimensions, and later conjectures for higher dimensions were prove in the following decades.

In 2012 Fortin, Grinstein, and Stergiou \cite{FGCDec} discovered examples of four dimensional field theories that presented limit cycles. At first, the authors thought this was an example of field theories that were scale invariant but not conformal invariant. Later, thanks to contributions of Osborn, Luty, Polchinski and Rattazzi \cite{FGCNov}, \cite{LPR}, \cite{JO2000}, they understood that this was not the case, but actually that these limit cycles represented ambiguities from the Callan-Symanzik equations with flavor symmetries. In this thesis we showed that their findings were not conflicting with our previous notion of RG flows if we consider the C-function to be constant around the cycle. 

Recently, there has been an interest in using more advanced techniques from dynamical systems in the study of RG flows \cite{GukovCount}, \cite{GukovBif}. We gave a short historical overview of this mathematical field and introduced basic theorems that hopefully are the minimum amount necessary for the reader to grasp the notions of dynamical systems. 

An important development was the Morse-Bott theory \cite{MORSEBOTT}, in name of Marston Morse and Raoul Bott, that connects algebraic topology with the study of manifolds that comes from fixed points in gradient dynamical system. One of the main results from this theory is the development of the Conley-Index, named after Charles C. Conley, which is a number that indicates the structure of fixed points inside of a region in the space. In this work we reviewed the basics of algebraic topology in order to give a definition of the Conley-Index.

Bifurcation theory is also presented together with examples in field theories. Bifurcations can explain ``first order phase transition'' in the RG diagrams. This phase transitions might serve as an origin of apparent fine tuning that need to exist for marginal operators. We also introduced RG flows in holographic theories, and we suggested that one may look at bifurcations in both sides of the duality.

\newpage

\appendix

\chapter{Appendix: Morse Theory}

Morse Theory is a field of mathematics that studies the topology of a manifold $M$ by with a metric $g$ looking at differentiable functions $f:M\rightarrow\Re$ that live on the manifold \cite{MORSEBOTT2}. It has been used in quantum field theory to study RG flows \cite{GukovBif} and supersymmetry \cite{Wits}. A critical point of $f$ is a point $m$ in $M$ such that $\nabla_g f(m)=0$, where $\nabla_g$ is the gradient operator.

\begin{mydef}
Let $m$ be a critical point of f. Then the \textbf{Morse index} $\mu$ of $m$ is the number of negative eigenvalues of the Hessian of f,
$$
(H_{ij}) = \left(\left.\dfrac{\del^2 f}{\del x_i\del x_j}\right|_{m}\right).
$$
Furthermore, the fixed point is said to be \textbf{nondegenerate} if the kernel of $(H_{ij})$ is empty, i.e., it has no eigenvalues equal to zero.
\end{mydef}
The definition of the Morse index reflects the intuition of the index being the number of dimensions of the relevant manifold of the critical point.

\begin{mydef}
If every critical point of a function $f$ is nondegenerate, then f is a \textbf{Morse function}.
\end{mydef}

\begin{mydef}
A \textbf{chart} (or a \textbf{coordinate patch}) $\phi$ of a topological manifold $M$ is a homeomorphism between a open subset $U$ of $M$ to an open subset of an Euclidean space $V\subset \Re^n$,
$$\phi = (x_1,...,x_n) : U\rightarrow V$$
\end{mydef}

With these definitions we can state the important lemma:

\begin{lemma} \textbf{Morse Lemma}\\
Let $f:M\rightarrow\Re$ be a Morse function and $m$ be a critical point of $f$. Then for a neighborhood $U$ of m such that there exists a coordinate patch $\phi = (x_1,...,x_n)$ such that $\phi (m) = 0$. Also, if the Morse index of $m$ is $\mu$, the we have that
$$f(x)-f(m) = -\sum_{i=1}^\mu x_i^2 + \sum_{i=\mu+1}^n x_i^2$$
\end{lemma}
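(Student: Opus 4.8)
The plan is to reduce $f$ to an exact quadratic form in suitable local coordinates and then diagonalise that form smoothly. First I would choose any chart carrying $m$ to the origin, so that without loss of generality $m=0$ and, after subtracting the constant, $f(0)=0$. The key preliminary step is a twofold application of Hadamard's lemma: since $f$ is smooth and vanishes at $0$, I can write $f(x)=\sum_i x_i g_i(x)$ with $g_i$ smooth and $g_i(0)=\partial_i f(0)$. Because $m$ is a critical point each $g_i(0)=0$, so Hadamard applies again to every $g_i$, giving
\begin{equation}
f(x)=\sum_{i,j} x_i x_j\, h_{ij}(x),
\end{equation}
with $h_{ij}$ smooth. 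After symmetrising ($h_{ij}\to \tfrac12(h_{ij}+h_{ji})$) one checks $h_{ij}(0)=\tfrac12\,\partial_i\partial_j f(0)$, so the matrix $(h_{ij}(0))$ is, up to the factor $\tfrac12$, the Hessian, and hence nondegenerate because $m$ is a nondegenerate critical point.

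Next I would diagonalise this variable-coefficient quadratic form by induction on the number of already-normalised coordinates, mimicking Lagrange's completion of squares but checking at each stage that the substitution is a genuine local diffeomorphism. Suppose coordinates have been arranged so that $f=\pm u_1^2\pm\cdots\pm u_{r-1}^2+\sum_{i,j\ge r}u_iu_j H_{ij}(u)$ with $(H_{ij})$ symmetric and smooth. A linear change among $u_r,\dots,u_n$ makes $H_{rr}(0)\neq0$ (nondegeneracy guarantees some diagonal or off-diagonal entry is nonzero), and by continuity $H_{rr}(u)\neq0$ on a neighbourhood of $0$. I then set
\begin{equation}
v_r=\sqrt{|H_{rr}(u)|}\left(u_r+\sum_{i>r}u_i\frac{H_{ir}(u)}{H_{rr}(u)}\right),
\end{equation}
keeping the other coordinates fixed; the square root is smooth precisely because $H_{rr}$ does not vanish. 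This completes the square, absorbing every term containing $u_r$ into $\pm v_r^2$ and leaving a quadratic form in the remaining variables. The Jacobian of the map $(u)\mapsto(u_1,\dots,u_{r-1},v_r,u_{r+1},\dots,u_n)$ at the origin is triangular with nonzero diagonal, so the inverse function theorem certifies that it is a local coordinate change. Iterating $n$ times yields coordinates with $f=\sum_i \varepsilon_i v_i^2$, where $\varepsilon_i\in\{\pm1\}$.

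Finally I would identify the number of negative signs with the Morse index. Evaluating second derivatives of $f=\sum_i\varepsilon_i v_i^2$ at the origin shows the Hessian in the new coordinates is diagonal with entries $2\varepsilon_i$, so the count of indices with $\varepsilon_i=-1$ equals the number of negative eigenvalues of the Hessian; since the signature of a quadratic form is coordinate-independent (Sylvester's law of inertia), this number is the same $\mu$ computed in the original chart. Relabelling so that the negative squares come first gives exactly the claimed normal form. The main obstacle is the inductive step: I must ensure that after each completion of squares the remaining coefficient matrix stays symmetric and nondegenerate, and that the pivot $H_{rr}$ can always be made nonzero and kept away from zero on a common neighbourhood, since it is the nonvanishing of this pivot that makes the smooth square root---and hence the whole change of variables---well defined. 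Controlling this simultaneously across all $n$ steps, so that the resulting coordinate patch is valid on a single neighbourhood $U$ of $m$, is where the care is needed.
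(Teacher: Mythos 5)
The paper does not actually prove the Morse Lemma: it states it in the appendix and defers to the standard references (Milnor, Bott) for the surrounding material, so there is no in-text argument to compare against. Your proposal is, on its own terms, correct and is essentially the classical proof from Milnor's \emph{Morse Theory}: two applications of Hadamard's lemma to obtain $f(x)=\sum_{i,j}x_ix_j h_{ij}(x)$ with $(h_{ij}(0))$ half the Hessian, followed by an inductive Lagrange-style completion of squares in which the smooth square root of $|H_{rr}|$ and the triangular Jacobian make each substitution a legitimate local diffeomorphism, and finally Sylvester's law of inertia to identify the number of minus signs with the index $\mu$. The two points you flag as delicate are indeed the only ones needing care, and both go through: Hadamard's lemma requires the chart domain to be star-shaped about the origin (arranged by shrinking $U$), and after each pivot the residual matrix $H_{ij}-H_{ir}H_{jr}/H_{rr}$ remains symmetric and nondegenerate at $0$ because the signature and rank of the full form are preserved, so a nonzero pivot can always be produced by a linear change and kept nonzero on a common neighbourhood after finitely many shrinkings. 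Your argument would serve as a complete proof where the paper offers none.
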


\begin{theorem}
Let $f:M\rightarrow\Re$ be a Morse function with $M$ compact. Then $M$ has the homotopy type of a cell complex with an $\mu$-cell for each critical point of $f$ with index $\mu$.
\end{theorem}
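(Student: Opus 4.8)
The plan is to follow the classical handle-attachment argument, building $M$ up through its sublevel sets. For each real number $a$ define the sublevel set $M^a = f^{-1}(-\infty,a] = \{x\in M : f(x)\le a\}$. Since $M$ is compact and every critical point is nondegenerate (hence isolated), there are only finitely many critical points; let their distinct critical values be $c_1 < c_2 < \cdots < c_k$. The strategy is to start from $M^a = \emptyset$ for $a < c_1$ and to show that, as $a$ increases past each critical value, the homotopy type of $M^a$ changes in a controlled way, accumulating exactly one cell per critical point. At the end $M^a = M$ for $a > c_k$, so the accumulated complex gives the homotopy type of $M$.

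The argument rests on two lemmas, both proved using the gradient flow of $f$ with respect to the metric $g$. First I would prove the \textbf{no-critical-value lemma}: if $f^{-1}[a,b]$ is compact and contains no critical points, then $M^a$ is a deformation retract of $M^b$, so the two are homotopy equivalent. The proof rescales the gradient field to have controlled speed on this region, integrates it to a flow that pushes $M^b$ down onto $M^a$ in finite time, and checks continuity; compactness guarantees the flow is complete and the retraction well defined.

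Second, and this is the crux, I would prove the \textbf{single-critical-point lemma}: if $f^{-1}[a,b]$ contains exactly one critical point $p$, of Morse index $\mu$, with $c = f(p)\in(a,b)$, then $M^b$ is homotopy equivalent to $M^a$ with a $\mu$-cell attached along its boundary. Here I would invoke the Morse Lemma already established, giving a coordinate patch $(x_1,\dots,x_n)$ near $p$ in which $f - c = -\sum_{i=1}^\mu x_i^2 + \sum_{i=\mu+1}^n x_i^2$. The cell is the descending disk $e^\mu = \{x : \sum_{i=1}^\mu x_i^2 \le \epsilon,\ x_{\mu+1}=\cdots=x_n=0\}$ spanned by the negative eigendirections of the Hessian. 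The delicate part is to modify $f$ to a function $F$ that agrees with $f$ outside a small neighborhood of $p$ but is pushed slightly below $a$ at $p$, arranged so that $F^{\,b}$ deformation retracts onto $M^a \cup e^\mu$; one then verifies that $F^{\,b}$ and $M^b$ determine the same region up to homotopy and that $e^\mu$ is glued by a continuous attaching map whose image lies in $M^a$.

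Finally I would combine the two lemmas by induction on $k$. Between consecutive critical values the no-critical-value lemma supplies a homotopy equivalence, and passing each $c_j$ the single-critical-point lemma attaches a cell of dimension equal to the Morse index; if several critical points share one value they lie in disjoint coordinate neighborhoods and are handled simultaneously, attaching several cells at once. Iterating from the empty set up to $M^a = M$ yields that $M$ has the homotopy type of a CW complex with one $\mu$-cell for each critical point of index $\mu$, which is the claim. I expect the main obstacle to be the single-critical-point lemma: constructing the modified function $F$ and verifying the deformation retraction onto $M^a \cup e^\mu$, including continuity of the attaching map, is the genuinely technical step, whereas the remaining work is flow integration and inductive bookkeeping.
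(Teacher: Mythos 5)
Your outline is precisely the classical sublevel-set/handle-attachment proof of Milnor, which is exactly the argument the paper defers to: the text does not prove this theorem itself but cites \cite{MORSEBOTT3} and \cite{MORSEBOTT4} for it. Your two lemmas (deformation retraction across critical-value-free intervals, and cell attachment at an isolated nondegenerate critical point via the Morse Lemma coordinates) together with the induction are the same route, so the proposal is correct and matches the intended proof.
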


Proofs of this theorem can be found at \cite{MORSEBOTT3} and \cite{MORSEBOTT4}. A \textit{cell complex} (or CW complex) is another way of breaking down a manifold in order to compute the homology groups of said manifold. They can be viewed as a generalization of a simplex that we explored at section \ref{dasdasxczczxvadfwewesdv86wrgi123jtr934tyhug}. The main takeaway from the theorem above is that is possible to calculate the homology groups of $M$. See the appendix of reference \cite{AlgebraicTopology} for a construction of the topology of CW complexes.

Now let us define $C^\mu$ as the number of critical points of index $\mu$ of $f$. Also we define $b_\mu$, known as the \textit{the $\mu$-th Betti number} of $M$, as the rank of the $\mu$-th homology group of $M$, $H_\mu (M)$. Now we define the \textit{Morse polynomial}

$$M(t) = \sum_{i=0}^n m_it^i,$$
and the \textit{Poincar\'e polynomial}
$$P(t) = \sum_{i=0}^n b_it^i.$$
In particular, for $t=-1$ the polynomial equal the Euler characteristic $P(-1)=\chi$. We can state the Morse Inequality as $M(t)\leq P(T)$, but a stronger result is possible.

\begin{theorem}\textbf{Morse Inequalities}
$$M(t)-P(t)=(1+t)Q(t),$$
for some polynomial $Q(t)\geq 0$.
\end{theorem}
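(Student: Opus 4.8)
The plan is to deduce the inequalities from the chain complex that the preceding theorem furnishes. By that theorem, the Morse function $f$ endows $M$ with the homotopy type of a cell complex carrying exactly one $\mu$-cell for each critical point of index $\mu$. The associated cellular chain complex $(C_*,\partial_*)$ therefore has $C_\mu$ a free abelian group of rank $m_\mu = C^\mu$, one generator per critical point, and its homology computes the singular homology of $M$, so that $\operatorname{rank} H_\mu(M) = b_\mu$. The whole statement is thus reduced to a purely algebraic fact about a finite complex of free abelian groups of prescribed ranks.

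First I would introduce the two families of ranks
\[
z_\mu = \operatorname{rank}\bigl(\ker \partial_\mu\bigr), \qquad r_\mu = \operatorname{rank}\bigl(\operatorname{im}\partial_\mu\bigr),
\]
where $\partial_\mu : C_\mu \to C_{\mu-1}$. Since rank is additive on short exact sequences of finitely generated abelian groups, the rank--nullity theorem applied to $\partial_\mu$ gives $m_\mu = z_\mu + r_\mu$, while the definition $H_\mu = \ker\partial_\mu / \operatorname{im}\partial_{\mu+1}$ gives $b_\mu = z_\mu - r_{\mu+1}$. Subtracting these two identities eliminates $z_\mu$ and yields the key relation
\[
m_\mu - b_\mu = r_\mu + r_{\mu+1},
\]
valid for every $\mu$ (with the convention $r_0 = 0 = r_{n+1}$, since $C_{-1} = C_{n+1} = 0$).

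It remains only to package this into polynomials. I would set
\[
Q(t) = \sum_{\mu=0}^{n} r_{\mu+1}\, t^\mu,
\]
whose coefficients $r_{\mu+1} = \operatorname{rank}\partial_{\mu+1} \ge 0$ are manifestly nonnegative integers, so $Q(t) \ge 0$. Reindexing the second sum below collapses everything onto the key relation:
\[
(1+t)Q(t) = \sum_{\mu} r_{\mu+1}\, t^\mu + \sum_{\mu} r_{\mu+1}\, t^{\mu+1} = \sum_{\mu} \bigl(r_\mu + r_{\mu+1}\bigr) t^\mu = M(t) - P(t),
\]
which is exactly the assertion. As a consistency check, evaluating at $t=-1$ forces $M(-1) = P(-1)$, recovering the statement $\chi = \sum_\mu (-1)^\mu m_\mu$ noted above.

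The main obstacle is not the algebra, which is routine, but the topological input: one must know that the cellular complex genuinely has one generator per critical point and that its homology really is $H_*(M)$. This is supplied by the preceding theorem together with the standard identification of cellular and singular homology; the only points requiring care are that $\partial^2 = 0$ (so that the homology is defined) and that working with \emph{ranks} of finitely generated abelian groups---rather than dimensions over a field---correctly handles any torsion in $H_*(M)$, since torsion subgroups contribute $0$ to rank and hence do not disturb the Betti numbers $b_\mu$.
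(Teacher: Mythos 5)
Your argument is correct and complete. Note, however, that the dissertation does not actually prove this theorem: it only states the Morse inequalities and, for the preceding cell-complex theorem on which your proof rests, points to Milnor's and Bott's texts. Your derivation is the standard one and is exactly what those references do: granting that $M$ has the homotopy type of a CW complex with one $\mu$-cell per index-$\mu$ critical point, you reduce the statement to linear algebra on the cellular chain complex. The bookkeeping is right: additivity of rank over the short exact sequences $0\to\ker\partial_\mu\to C_\mu\to\operatorname{im}\partial_\mu\to 0$ and $0\to\operatorname{im}\partial_{\mu+1}\to\ker\partial_\mu\to H_\mu\to 0$ gives $m_\mu-b_\mu=r_\mu+r_{\mu+1}$, and your choice $Q(t)=\sum_\mu r_{\mu+1}t^\mu$ has nonnegative coefficients and telescopes correctly under multiplication by $(1+t)$, with the boundary conventions $r_0=r_{n+1}=0$ handled properly. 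Your remarks on torsion are also apt, since the Betti numbers are ranks and torsion contributes nothing to them. The only caveat is the one you already flag: the genuinely nontrivial input is the topological theorem identifying the cell structure, which both you and the paper take as given.
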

The theorem shows a connection between the topology of a space $M$ with properties of the fluxes in $M$. For example, the case $t=-1$ we have that
$$M(-1) = P(-1) = \chi .$$

\subsubsection{Morse-Bott Theory}

A generalization of Morse theory can be constructed to include functions with non-isolated critical points. We say that a manifold $V$ is a nondegenerate critical manifold of $f$ is every point in $V$ is a critical point of $f$ and that the Hessian is nondegenerate at the normal direction of $V$. A \textit{Morse-Bott function} $f:M\rightarrow \Re$ is a function such that all of its critical points belongs to a nondegenerate critical manifold. 

We can generalise the Morse inequalities for Morse-Bott functions. Let us say that $f$ has a set of nondegenerate critical manifolds $\left\lbrace V_i \right\rbrace$ with index $\mu_i$. We denote $P^{V_i}$ as the Poincar\'e polynomial of the manifold $V_i$. Then we define:
$$
MB(t) = \sum_{i=0}^n P^{V_i}(t)t^{\mu_i}
$$
as the \textit{Morse-Bott Polynomial}. Now we have that
$$MB(t)-P(t)=(1+t)Q(t),$$
for some polynomial $Q(t)\geq 0$.

\subsubsection{Morse Homology}

Now we recover the definitions of irrelevant and relevant manifolds given in section \ref{dasdasdasdxzc}:
$$\mathcal{I}(m) = \left\lbrace x \in M | \lim_{t\rightarrow \infty} x(t) = m \right\rbrace,$$
$$\mathcal{R}(m) = \left\lbrace x \in M | \lim_{t\rightarrow - \infty} x(t) = m \right\rbrace.$$

An interesting fact is that the sets $\mathcal{I}(m)$ and $\mathcal{R}(m)$ are respectively homeomorphic to $\Re^{(n-\mu)}$ and $\Re^\mu$. 

If $f$ is a Morse function with the set $\left\lbrace m_i\right\rbrace$ of critical points such that $\mathcal{I}(m_i)$ is transverse of $\mathcal{I}(m_j)$ for all $i\neq j$, then the function is said to be a \textit{Morse-Smale function}. The pair $(f,g)$ is called a \textit{Morse-Smale pair}. The moduli space can also be defined
$$
\mathcal{M}(m,n)=\mathcal{R}(m)\cup\mathcal{I}(n) 
$$
such that $\text{dim } \mathcal{M}(m,n) = \mu(m) - \mu(n)$. 

We can use the theory to create the concept of \textit{Morse Homology}\cite{AnotherMorse}. We first denote $\text{Crit}_\mu (f)$ the set of critical points of index $\mu$ of $f$. We can now define the \textit{Morse Complex} by the chain complex $C_i$ that is generated by $\text{Crit}_\mu (f)$:
$$
C_i^{\text{Morse}}(f,g) = \underbrace{\mathbb{Z}\oplus\mathbb{Z}\oplus\cdots\oplus\mathbb{Z}}_{\text{Crit}_\mu (f)}.
$$
We can define a boundary map $\del^{\text{Morse}}:C_i\rightarrow C_{i-1}$:
$$
\del^{\text{Morse}} (m) = \sum_{n\in \text{Crit}_{\mu-1} (f)} (\text{number of points in }\mathcal{M}(m,n)) \cdot n.
$$
We have that $(\del^{\text{Morse}})^2=0$. This is enough for us to define the Morse homology $H_*^{\text{Morse}}(f,g)$ as the homology of the chain complex $(C_*^{\text{Morse}},\del^{\text{Morse}})$. It can be shown that Morse homology is isomorphic to singular homology.

\subsubsection{Morse-Conley-Floer Homology}

The generalization that includes non-gradient functions is called \textit{Morse-Conley-Floer Homology} \cite{FloerHomo}. In this theory, instead of the Morse-Smale pair $(f,g)$ we have the pair $(S,\beta)$, where $\beta$ is a flow on the manifold and $S$ is an isolated invariant set of $\beta$ (see section \ref{conasldkla}). In particular, there is an analogue Morse inequality for Morse-Conley-Floer flows, given a decomposition of $S$, $\left\lbrace S_i \right\rbrace$, we have
$$\sum_{i\in I}P_t(S_i\beta)-P_t(S,\beta)=(1+t)Q(t).$$

We can also define the homology for the pair 
$$HI_*(S,\beta).$$

One interesting property is that if we define a set $N$ as a neighborhood of $S$ and $L\subset \del N$ as the set of points where $\beta$ is flowing outside of $N$, then we have that 
$$HI_*(S,\beta)\cong H_*(N/L,{L}),$$
and we essentially recovered the definition of Conley Index. 

As seen in section \ref{conasldkla}, this homology is invariant under continuous perturbations of $S$ and of $\beta$. If $(S_0,\beta_0)$ can be continuous transformed into $(S_1,\beta_1)$ then
$$HI_*(S_0,\beta_0)\cong HI_*(S_1,\beta_1).$$

\end{document}